\keywords{parameterized verification, finite automata,
regular model-checking} 
\newcommand{\reachconf}[1]{\mathit{Reach}_{#1}}
\newcommand{\invar}{I}
\newcommand{\vari}[3]{ {#1}_{{#2}:{#3}}}
\newcommand{\comp}[1]{\overline{#1}}
\newcommand{\compl}[1]{\textit{comp}({#1})}
\newcommand{\true}{\mathit{true}}
\newcommand{\false}{\mathit{false}}
\newcommand{\Sep}[1]{\mathit{Sep}_{#1}}
\newcommand{\Sepf}[1]{\varphi_{\Sep{#1}^\tau}}
\newcommand{\Incl}[1]{\mathit{In}(#1)}
\newcommand{\set}[1]{\left\{ #1 \right\}}
\newcommand{\tuple}[1]{\left< #1 \right>}
\newcommand{\size}[1]{\left| #1 \right|}
\DeclareMathOperator{\reach}{\mathit{Reach}}
\tikzset{
  every state/.style = {
    minimum size=3.5,
    inner sep=2 pt
  }
}
\newcommand{\tilo}{t}
\newcommand{\hilo}{h}
\newcommand{\eilo}{e}
\newcommand{\fork}{f}
\newcommand{\bork}{b}
\newcommand{\rts}{\mathcal{R}}
\newcommand{\initial}{\mathcal{I}}
\newcommand{\transitions}{\mathcal{T}}
\newcommand{\interpretation}{\mathcal{V}}
\DeclareMathOperator{\Inductive}{\mathit{Ind}_\interpretation}
\newcommand{\vtuple}[1]{{\text{\tiny $\begin{bmatrix} #1 \end{bmatrix}$}}}
\newcommand{\lang}[1]{L({#1})}
\newcommand{\varlang}[1]{L_{#1}}
\newcommand{\rel}[1]{R({#1})}
\newcommand{\varrel}[1]{R_{#1}}
\renewcommand{\O}{\mathcal{O}}
\newcommand{\proj}[2]{{#1}|_{{#2}}}
\newcommand{\projg}[2]{{#1}\big|_{{#2}}}
\newcommand{\Ind}[1]{\mathit{IndInv}_{#1}}
\newcommand{\reaches}{\stackrel{*}\leadsto}
\newcommand{\reachesb}[1]{\stackrel{*}\leadsto_{#1}}
\newcommand{\prelint}{\reaches_\interpretation}
\begin{document}
\title{Regular Model Checking Upside-Down:\texorpdfstring{\\}{}An Invariant-Based Approach}
\titlecomment{This paper is an extended version of \cite{ERW22}. In comparison to the proceedings version, this extended version contains complete and detailed proofs, some additional examples, and a more general notion of inductive invariants based on interpreting regular descriptions (see Section~\ref{subsec:representations-interpreters}). This work was partially funded by the European Research
Council (ERC) under the European Union's Horizon 2020 research and innovation
programme under grant agreement No 787367 (PaVeS)}

\author[J.~Esparza]{Javier Esparza\lmcsorcid{0000-0001-9862-4919}}[a]
\author[M.~Raskin]{Michael Raskin\lmcsorcid{0000-0002-6660-5673}}[b]
\author[C.~Welzel-Mohr]{Christoph Welzel-Mohr\lmcsorcid{0000-0001-5583-0640}}[a]

\address{Technical University of Munich, Boltzmannstraße 3, 85748 Garching b. München}
\email{esparza@in.tum.de, welzel@in.tum.de}

\address{LaBRI, University of Bordeaux, CNRS - UMR 5800, F-33405 Talence CEDEX}
\email{mraskin@u-bordeaux.fr}

\begin{abstract}
  Regular model checking is a technique for the verification
  of infinite-state systems whose configurations can be represented as finite
  words over a suitable alphabet. The form we are studying applies to systems whose set of initial
  configurations is regular, and whose transition relation is captured by a
  length-preserving transducer. To verify safety properties, regular model
  checking iteratively computes automata recognizing increasingly larger
  regular sets of reachable configurations, and checks if they contain unsafe
  configurations. Since this procedure often does not terminate, acceleration,
  abstraction, and widening techniques have been developed to compute a regular
  superset of the reachable configurations.

  In this paper, we develop a complementary procedure. Instead of approaching the
  set of reachable configurations from below, we start with the set of all
  configurations and approach it from above. We
  use that the set of reachable configurations is equal to the intersection of
  all inductive invariants of the system. Since this intersection is 
  non-regular in general, we introduce $b$-invariants, defined as those
  representable by CNF-formulas with at most $b$ clauses. We prove that, for
  every $b \geq 0$, the intersection of all inductive $b$-invariants is
  regular, and we construct an automaton recognizing it. We
  show that whether this automaton accepts some unsafe
  configuration is in EXPSPACE  for every $b
  \geq 0$, and PSPACE-complete for $b=1$. Finally, we study how large must $b$ be to prove safety properties of a number of benchmarks.
\end{abstract}

\maketitle

\section{Introduction}
\label{sec:introduction}

Regular model checking (RMC) is a framework for the verification of different classes of infinite-state systems (see, e.g., \ the surveys \cite{AbdullaJNS04,Abdulla12,AbdullaST18,Abdulla21}). In its canonical version, RMC is applied to systems satisfying the following conditions: configurations can be encoded as words,  the set of initial configurations is recognized by a finite automaton $\initial$, and the transition relation is recognized by a length-preserving transducer $\transitions$. 
RMC algorithms address the problem of, given a regular set of unsafe configurations, deciding if its intersection with the set of reachable configurations is empty or not. 
In the present paper, we do not consider generalisations
to non-length-preserving or non-finite-state-transducer transitions.

The fundamental building block of current RMC algorithms is an
automata-theoretic construction that, given a non-deterministic automaton (NFA) $A$ recognizing a regular set of configurations, produces another NFA recognizing the set of immediate successors (or predecessors) of $\lang{A}$ with respect to the transition relation represented by $\transitions$. Therefore, if some unsafe configuration is reachable, one can find a witness by, starting with the automaton $\initial$ for the set of initial configurations, repeatedly adding the set of immediate successors. However, this approach almost never terminates when all reachable configurations are safe. Research on RMC has produced many acceleration, abstraction, and widening techniques to make the iterative computation ``jump over the fixpoint'' in finite time, and produce an invariant of the system not satisfied by any unsafe configuration (see, e.g., \cite{BouajjaniJNT00,JonssonN00,DamsLS01,AbdullaJNd02,BoigelotLW03,BouajjaniHV04,BouajjaniT12,BouajjaniHRV12,Legay12,DBLP:conf/fmcad/ChenHLR17}). 

In this paper, we develop a complementary approach that, starting with the set of all configurations, computes increasingly smaller regular inductive invariants, i.e.,  sets of configurations closed under the reachability relation and containing all initial configurations. Our main contribution is the definition of a sequence of \emph{regular} inductive invariants that converges (in the limit) to the set of reachable configurations, and for which automata can be directly constructed from $\initial$ and $\transitions$. 

While some of the previous work (e.g. using abstraction \cite{BouajjaniHV04}) does include overapproximation, this comes from replacing automata with smaller more permissive ones and then using acceleration techniques. In contrast, our work directly constructs regular-language-described properties that are satisfied by all the reachable configurations of the original system.

Our starting point is the fact that the set of reachable configurations is equal to the intersection of all inductive invariants. Since this intersection is non-regular in general, we introduce \emph{$b$-invariants}. An invariant is $b$-bounded, or just a $b$-invariant, if for every $\ell \geq 0$, the configurations of length $\ell$ satisfying the invariant are those satisfying a Boolean formula in conjunctive normal form with at most $b$ clauses. The atomic propositions are claims of the form ``at the position $i$ there is character $x$''. For example, assume that the configurations of some system are words over the alphabet $\{a,b,c,d\}$, and that the configurations of length five where the second letter is an $a$ or the fourth letter is a $b$, and the second letter is a $b$ or the third is a $c$, constitute an inductive invariant. Then this set of configurations is a $2$-invariant, represented by the formula $(a_{2:5} \vee b_{4:5}) \wedge (b_{2:5} \vee d_{3:5})$.
We prove that, for every bound $b \geq 0$, the intersection of all inductive $b$-invariants, denoted  $\Ind{b}$, is regular, and recognized by a DFA of double exponential size in $\initial$ and $\transitions$. As a corollary, we obtain that, for every $b \geq 0$, deciding if $\Ind{b}$ contains some unsafe configuration is in \textsc{EXPSPACE}. 
Moreover, the proof inspires a wider class of 
regular inductive invariants.
Introducing this class is a central contribution of this paper which is an extended version of \cite{ERW22}.
To define such an invariant, we pick a finite transducer and a regular language of words, the invariant being the image of the language
under the action of the transducer.
Generally, we pick a single transducer to analyse a given regular transition system, and study the invariants provided by different regular languages.
We show that the \textsc{EXPSPACE} upper bound still holds for the wider class.

In the second part of the paper, we study the special case $b=1$ in more
detail. We exploit that inductive $1$-invariants are closed under union
(a special feature of the  $b = 1$ case), and prove that deciding whether $\Ind{1}$
contains some unsafe configuration is \textsc{PSPACE}-complete. The proof also
shows that $\Ind{1}$ can be recognized by an NFA of single exponential size in $\initial$ and $\transitions$.

The index $b$ of a bounded invariant can be seen as a measure of how difficult it is for a human to understand it. So one is interested in the smallest $b$ such that $\Ind{b}$ is strong enough to prove a given property.  In the third and final part of the paper, we experimentally show that for a large number of systems $\Ind{1}$ is strong enough to prove useful safety properties.

\subsubsection*{Related work.} The work closest to ours is \cite{AbdullaDHR07}, which directly computes an overapproximation of the set of reachable configurations of a parameterized system. Contrary to our approach, the paper computes one single approximation, instead of a converging sequence of overapproximations. Further, the method is designed for a model of parameterized systems with existential or universal guarded commands, while our technique can be applied to any model analyzable by RMC.  Our work is also related to \cite{DBLP:conf/fmcad/ChenHLR17}, which computes an overapproximation using a learning approach, which terminates if the set of reachable configurations is regular; our paper shows that a natural class of invariants is regular, and that automata for them can be constructed explicitly from the syntactic description of the system. This paper generalizes the work of \cite{DBLP:conf/tacas/BozgaEISW20,DBLP:journals/corr/abs-2108-09101,DBLP:journals/jlap/BozgaIS21,DBLP:conf/apn/EsparzaRW21} on trap invariants for parameterized Petri nets. Trap invariants are a special class of $1$-bounded invariants, and the parameterized Petri nets studied in these papers can be modelled in the RMC framework. An alternative to regular model checking are logical based approaches. The invisible invariant method synthesizes candidate invariants from examples, which are then checked for inductiveness \cite{PnueliRZ01}. Our approach does not produce candidates, it generates invariants by construction.  Modern tools like Ivy \cite{PadonMPSS16,McMillanP20} have verified more complex protocols than the ones in Section~\ref{sec:experiments} using  a combination of automation and human interaction. The best way of achieving this interaction is beyond the scope of this paper, which focuses on the foundations of regular model checking.

\subsubsection*{Structure of the paper.} Section~\ref{sec:prelims} introduces basic definitions of the RMC framework. Section~\ref{sec:boundedinv} introduces $b$-bound invariants while Section~\ref{sec:indinv-is-regular} proves regularity of $\Ind{b}$. Sections~\ref{sec:pspacecomp} and~\ref{sec:pspacehardness} prove the
\textsc{PSPACE}-completeness result. Sections~\ref{sec:experiments} and~\ref{sec:conclusions} contain some experimental results and conclusions.

\section{Preliminaries}
\label{sec:prelims}

Given $n, m \in \mathbb{N}$, we let $[n,m]$ denote the  set $\{ i \in \mathbb{N} \colon n \leq i \leq m\}$.

\paragraph{Languages and automata.}
A language over a finite alphabet $\Sigma$ is a subset of $\Sigma^*$. 
An element of a language is called a word; for a word $w$ we denote $w[i]$ its $i$-th letter.
Given a language $L \subseteq \Sigma^*$,
we let $\comp{L}$ denote the language $\Sigma^* \setminus L$. A nondeterministic finite automaton (NFA) is a tuple $A = \tuple{Q, \Sigma, \Delta, Q_{0}, F}$ where $Q$
is a non-empty finite set of states, $\Sigma$ is an alphabet, $\Delta \colon Q \times \Sigma \rightarrow 2^Q$
is a transition function, and $Q_0, F \subseteq Q$ are sets of initial and final states, respectively.
A run of ${A}$ on a word $w \in \Sigma^{\ell}$ is a sequence $q_{0} \, q_{1} \,
\ldots \, q_{\ell}$ of states such that $q_{0} \in Q_{0}$ and $q_{i} \in \Delta(q_{i-1},w[i])$ for every $i \in [1, \ell]$. A run on $w$ is accepting if $q_\ell \in F$, and ${A}$ accepts $w$ if there exists an accepting run of ${A}$ on $w$.
The language recognized by ${A}$, denoted $\lang{{A}}$ or
$\varlang{{A}}$, is the set of words accepted by ${A}$. We
let $\size{{A}}$ denote the number of states of  ${A}$.
The function $\delta_{{A}} \colon 2^{Q} \times
\Sigma^{*} \rightarrow 2^Q$ is defined inductively as follows: $\delta_{{A}}(P, \varepsilon) = P$ and $\delta_{{A}}(P, a w) = \delta_{{A}}(\bigcup_{p \in P} \Delta(p, a), w)$.
Observe that ${A}$ accepts $w$ if{}f $\delta_{{A}}(Q_0, w) \cap F \neq \emptyset$. An NFA ${A}$ is deterministic\footnote{Sometimes in the literature this is called ``deterministic and complete'' with determinism being a weaker property. However, unlike the size gap between NFAs and DFAs, the difference is limited to the presence of a single ``useless'' bottom state.} (\emph{DFA}) if $\size{Q_{0}} = 1$ and $|\Delta(q, a)| = 1$ for every $q \in Q$ and $a \in \Sigma$.

\paragraph{Length-preserving relations and length-preserving transducers.}  A \emph{length-preserving relation}
over an alphabet $\Sigma \times \Gamma$ is a language over the alphabet $\Sigma \times \Gamma$.
We denote elements of $\Sigma \times \Gamma$ as $\tuple{a,b}$ or 
$\vtuple{a \\ b}$ where $a \in \Sigma$ and $b \in \Gamma$. Given $w=w[1] \ldots w[\ell] \in \Sigma^*$ and
$u=u[1] \ldots u[\ell] \in \Gamma^*$, we let $\tuple{w,u}$ denote the word $\tuple{w[1], u[1]} \cdots \tuple{w[\ell], u[\ell]} \in (\Sigma \times \Gamma)^*$.
That is, if we write $\tuple{w,u}$ then necessarily $w$ and $u$ have the same length. We look at $\tuple{w,u}$ as a representation of the
pair of words $(w, u)$.

\begin{rem}
Throughout the paper, we consider only length-preserving relations on words, and so we call them just relations.
\end{rem}

The \emph{complement} of a relation $R \subseteq (\Sigma \times \Gamma)^*$ is the relation
$\comp{R}:= \{ \tuple{w,u} \in  (\Sigma \times \Gamma)^*  \mid \tuple{w,u} \notin R\}$. Sometimes we represent relations 
by infix operators, like $\leadsto$, and then we write $\compl{\leadsto}$ to denote the complement. Observe that, by definition,
the complement of a relation is a subset of $(\Sigma \times \Gamma)^*$, and so it only contains tuple $\tuple{w, u}$ where $|w|=|u|$.
The \emph{join} of two relations $R_1 \subseteq (\Sigma_1 \times \Gamma)^*$ and $R_2 \subseteq (\Gamma \times \Sigma_2)^*$,
is the relation $R_1 \circ R_2 \subseteq (\Sigma_1 \times \Sigma_2)^*$ given by: $\tuple{w_1, w_2}  \in R_1 \circ R_2$ if{}
there exists $w \in \Gamma^*$ such that $ \tuple{w_1, w} \in R_1$ and $\tuple{w, w_2} \in R_2$. The \emph{post-image} of a language $L \subseteq \Sigma^*$
under a relation $R \subseteq (\Sigma \times \Gamma)^*$ is the language $L \circ R$ given by: $w \in L \circ R$ if{}f there exists $u \in \Sigma^*$
such that $u \in L$ and $\tuple{u, w} \in R$. The \emph{pre-image} of $L$ under $R$, denoted $R \circ L$, is defined analogously. The 
\emph{projections} of a relation $R \subseteq (\Sigma \times \Gamma)^*$ onto its first and second components are the languages
$\proj{R}{1} := \{ w \in \Sigma^* \mid \exists u \in \Gamma^* : \tuple{w,u} \in R \}$ and $\proj{R}{2} := \{ u \in \Gamma^* \mid \exists w \in \Sigma^* : \tuple{w,u} \in R \}$.
The \emph{inverse} of a relation $R \subseteq (\Sigma \times \Gamma)^*$ is the relation $R^{-1} := \{ \tuple{u, w} \in (\Gamma \times \Sigma)^* \mid \tuple{w, u} \in R \}$.

A (length-preserving) \emph{transducer} over $\Sigma \times \Gamma$ is an NFA with $\Sigma \times \Gamma$ as alphabet. 
The (length-preserving) relation recognized by a transducer $T$, denoted $\rel{T}$ or $\varrel{T}$,  is the set of tuples $\tuple{w, u} \in (\Sigma \times \Gamma)^*$ accepted by $T$.
A relation is \emph{regular} if it is recognized by a transducer. A transducer is \emph{deterministic} if it is a deterministic NFA.

It is folklore that regular relations are closed under Boolean operations (the operations are implemented as for regular languages) and composition, also called join, and that
the pre- and post-images of regular languages under regular relations are regular. We sketch the constructions in the following proposition.

\begin{prop}\label{prop:transducerops}
Let $\Sigma, \Gamma$ be finite alphabets.
\begin{enumerate}
\item Let $T, U$ be transducers with $n_T$ and $n_U$ states over alphabets $\Sigma_T \times \Gamma$ and $\Gamma \times \Sigma_U$, 
respectively. There  exists a transducer with $\O(n_T n_U)$ states recognizing $\rel{T} \circ \rel{U}$.
\item Let $A, B$ be NFAs over $\Sigma$, respectively $\Gamma$, with $n_A$ and $ n_B$ states, respectively,  and let $T$ be a transducer over $\Sigma \times \Gamma$ with $n_T$ states. There exist NFAs with $\O(n_A \cdot n_T)$ and $\O(n_B \cdot n_T)$ states recognizing $\lang{A} \circ \rel{T}$ and $\rel{T} \circ \lang{B}$, respectively. 
\item Let $T$ be a transducer with $n_T$ states. There exist NFAs  with $n_T$ states recognizing
$\proj{\rel{T}}{1}$ and  $\proj{\rel{T}}{2}$.
\item Let $T$ be a transducer with $n_T$ states. There exists a transducer with $n_T$ states recognizing
$\rel{T}^{-1}$.
\end{enumerate}
\end{prop}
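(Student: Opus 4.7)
The plan is to prove each item by an explicit construction together with a brief verification of language equality. All four are textbook product/relabeling constructions adapted to the tuple-alphabet setting; I do not anticipate a genuinely hard step.

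For item (1), I would use the standard product construction. Let $T = \tuple{Q_T, \Sigma_T \times \Gamma, \Delta_T, Q_{0,T}, F_T}$ and $U = \tuple{Q_U, \Gamma \times \Sigma_U, \Delta_U, Q_{0,U}, F_U}$. Build a transducer $V$ with state set $Q_T \times Q_U$, initial states $Q_{0,T} \times Q_{0,U}$, final states $F_T \times F_U$, and transitions $(q,r) \trans{\tuple{a,c}} (q',r')$ iff there exists $b \in \Gamma$ with $q \trans{\tuple{a,b}} q'$ in $T$ and $r \trans{\tuple{b,c}} r'$ in $U$. Any accepting run of $V$ on $\tuple{w_1,w_2}$ lets us read off, letter by letter, a word $w \in \Gamma^*$ of the same length witnessing $\tuple{w_1,w} \in \rel{T}$ and $\tuple{w,w_2} \in \rel{U}$; conversely any such $w$ yields an accepting run. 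The state count is $n_T \cdot n_U$.

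For item (2), an NFA may be viewed as ``half a transducer,'' so the construction specializes (1). For the post-image $\lang{A} \circ \rel{T}$, build an NFA with states $Q_A \times Q_T$, initial and final states the corresponding products, and transitions $(p,q) \trans{b} (p',q')$ whenever there is $a \in \Sigma$ with $p \trans{a} p'$ in $A$ and $q \trans{\tuple{a,b}} q'$ in $T$. This has $\O(n_A \cdot n_T)$ states. The construction for $\rel{T} \circ \lang{B}$ is symmetric.

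For items (3) and (4), no product is needed: only a relabeling of transitions. For (3), take $T$ and replace each transition $q \trans{\tuple{a,b}} q'$ by $q \trans{a} q'$ to obtain an NFA for $\proj{\rel{T}}{1}$; keeping only the second component gives one for $\proj{\rel{T}}{2}$. For (4), swap the two components of every transition label. Both constructions preserve the number of states, and language equivalence is immediate from the definitions of projection and inverse. The only point worth noting, and perhaps the only ``obstacle,'' is that item (1) yields a length-preserving transducer because the intermediate word $w \in \Gamma^*$ is automatically forced to have the same length as $w_1$ and $w_2$ by the product alphabet, so no $\varepsilon$-transitions are required and the usual NFA product suffices.
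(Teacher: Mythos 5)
Your proposal is correct and matches the paper's own proof essentially step for step: the same product constructions for items (1) and (2), and the same relabeling constructions for items (3) and (4). The closing remark about the intermediate word being forced to the same length by the product alphabet is a sound observation that the paper leaves implicit.
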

\begin{proof}
1) Let $T = \tuple{Q_T, \Sigma_T \times \Gamma,  \Delta_T, Q_{0T}, F_T}$ and $T_U = \tuple{Q_U, \Gamma \times \Sigma_U,  \Delta_U, Q_{0U}, F_U}$. Define
the transducer 
$V:= \tuple{Q_T \times Q_U, \Sigma_T \times \Sigma_U, \Delta, Q_{0T} \times Q_{0U}, F_T \times F_U}$. For this transducer, we fix $(\tuple{q_T, q_U}, \tuple{\sigma_T, \sigma_U}, \tuple{q_T', q_U'} ) \in \Delta$ 
if{}f there exists $\gamma \in \Gamma$ such that $(q_T,  \tuple{\sigma_T, \gamma}, q_T') \in \Delta_T$ and $(q_U,  \tuple{\gamma, \sigma_U}, q_U') \in \Delta_U$.
We have $\rel{V} = \rel{T} \circ \rel{U}$.

\medskip\noindent 2) Let $A = \tuple{Q_A, \Sigma, \Delta_A, Q_{0A}, F_A}$ and $T = \tuple{Q_T, \Sigma \times \Gamma,  \Delta_T, Q_{0T}, F_T}$. Define 
the NFA $A' = \tuple{Q_A \times Q_T, \Gamma, \Delta, Q_{0A} \times Q_{0T}, F_A \times F_T}$ where $(\tuple{q_A, q_T}, \gamma, \tuple{q_A', q_T'} ) \in \Delta$ 
if{}f there exists $\sigma \in \Sigma$ such that $(q_A,  \sigma, q_A') \in \Delta_A$ and $(q_T,  \tuple{\sigma, \gamma}, q_T') \in \Delta_T$.
We have $\lang{A'}= \lang{A} \circ \rel{T}$. The construction for $\rel{T} \circ \lang{B}$ is analogous.

\medskip\noindent 3) Let $T = \tuple{Q_T, \Sigma \times \Gamma,  \Delta_T, Q_{0T}, F_T}$. Define $A_1 = \tuple{Q_T, \Sigma,  \Delta, Q_{0T}, F_T}$
where $(q, \sigma, q') \in \Delta$ if{}f there exists $\gamma \in \Gamma$ such that $(q, \tuple{\sigma, \gamma}, q') \in \Delta_T$. We have $\lang{A_1}= \proj{R}{1}$.
The NFA $A_2$ is defined analogously.

\medskip\noindent 4) Let $T = \tuple{Q_T, \Sigma \times \Gamma,  \Delta_T, Q_{0T}, F_T}$. Define $T^{-1} =\tuple{Q_T, \Gamma \times \Sigma,  \Delta', Q_{0T}, F_T}$
where $\Delta'(q, \allowbreak\tuple{a,b}) := \Delta_T(q, \tuple{b, a})$ for all $q, a, b$.
\end{proof}

\paragraph{Regular model checking.}
Regular model checking (RMC) is a framework for the verification of systems with infinitely many
configurations. Each configuration is represented as a finite word over a fixed
alphabet $\Sigma$. Systems are modelled as regular transition systems of the following form.

\begin{defi}[Regular transition systems]
  A \emph{regular transition system} (RTS) is a triple $\mathcal{R} = \tuple{\Sigma, \initial,
  \transitions}$ where $\Sigma$ is an alphabet, 
 $\initial$ is an NFA over $\Sigma$, and $\transitions$ is a transducer over $\Sigma \times \Sigma$.
\end{defi}

Words over $\Sigma$ are called \emph{configurations}. Configurations accepted by 
$\initial$ are called \emph{initial}, and pairs of configurations accepted by $\transitions$ are called \emph{transitions}.  We write $w \leadsto u$ to denote that $\tuple{w, u}$ is a transition. Observe that $w \leadsto u$ implies $|w|=|u|$. Given two configurations $w, u$, we say that $u$ is \emph{reachable} from $w$ if $w \reaches u$ where $\reaches$ denotes the reflexive and transitive closure of $\leadsto$. The set of reachable configurations of $\mathcal{R}$, denoted $\reach(\mathcal{R})$, or just $\reachconf{}$ when there is no confusion, is the set of configurations reachable from the initial configurations.  In the following, we use $\size{\mathcal{R}}$ to refer to $\size{\initial} + \size{\transitions}$.

\begin{exa}[Dining philosophers]
\label{ex:phil}
We model a very simple version of the dining philosophers as an RTS, for use as running example. Philosophers sit at a round table with forks between them. Philosophers can be \emph{thinking} ($\tilo$) or \emph{eating} ($\eilo$). Forks can be \emph{free} ($\fork$) or \emph{busy} ($\bork$). A thinking philosopher whose left and right forks are free can \emph{simultaneously} grab both forks---the forks become busy---and start eating. 
        After eating, the philosopher puts both forks to the table and returns to thinking. The model includes two corner cases: a table with one philosopher and one fork, which is then both the left and the right fork (unusable as it would need to be grabbed twice in a single transition), and the empty table with no philosophers or forks. 

We model the system as an RTS $\rts = \tuple{\Sigma, \initial, \transitions}$ over the alphabet $\Sigma = \set{\tilo, \eilo, \fork,
\bork}$. A configuration of a table with $n$ philosophers and $n$ forks is represented
as a word over $\Sigma$ of length $2n$. Letters at odd and even positions 
model the current states of philosophers and forks (positions start at $1$). For example, $\tilo\fork\tilo\fork$ models a table with two thinking philosophers and two free forks. The set of initial configurations is $\lang{\initial}=(\tilo\fork)^*$, and the set of transitions is
\[ \rel{\transitions} = \vtuple{t \\e}\vtuple{\fork \\ \bork}\vtuple{x \\x}^* \vtuple{\fork \\ \bork} \;\;\bigg|\;\; \vtuple{e \\t}\vtuple{\bork \\ \fork}\vtuple{x \\x}^* \vtuple{\bork \\ \fork} \;\;\bigg|\;\; \vtuple{x \\x}^*\left( \vtuple{\fork \\ \bork}\vtuple{\tilo \\ \eilo}\vtuple{\fork \\ \bork} \;\bigg|\; \vtuple{\bork \\ \fork}\vtuple{\eilo \\ \tilo}\vtuple{\bork \\ \fork} \right) \vtuple{x \\x}^*  \]
\noindent  where $\vtuple{x \\x}$ stands for the regular expression $\left( \vtuple{\tilo \\ \tilo} \;\bigg|\; \vtuple{\eilo \\ \eilo} \;\bigg|\; \vtuple{\fork \\ \fork} \;\bigg|\; \vtuple{\bork \\ \bork}\right)$. The first two terms of $\rel{\transitions}$ describe the actions of the first philosopher, and the second the actions of the others. 
It is not difficult to show that $\reachconf{} = (\tilo ( \fork \,|\, \bork\eilo\bork))^* \;|\; \eilo\bork\tilo (( \fork \,|\, \bork\eilo\bork) \tilo)^*\bork$. These are the configurations where no two philosophers are using the same fork, and fork states match their adjacent philosopher states.
\end{exa}

\paragraph{Safety verification problem for RTSs.}
The safety verification problem for RTSs is defined as follows: Given an RTS
$\mathcal{R}$ and an NFA $\mathcal{U}$ recognizing a set of \emph{unsafe} configurations, decide whether $\reach(\mathcal{R}) \cap \lang{\mathcal{U}} = \emptyset$ holds.  The problem is known to be undecidable. 

\begin{exa}\label{ex:dead}
A configuration $w$ of an RTS is \emph{deadlocked} if there is no configuration $u$ such that $w \leadsto u$. 
It is easy to see that the set of deadlocked configurations of the dining philosophers of \autoref{ex:phil} is
        \[ \mathit{Dead}=\comp{\Sigma^* \fork \, \tilo \, \fork \, \Sigma^*} \cap \comp{\Sigma^* \bork \, \eilo \, \bork  \, \Sigma^*} \cap 
        \comp{\tilo \, \fork \, \Sigma^* \fork } \cap \comp{\eilo \, \bork  \, \Sigma^* \bork }  \ . \] 
In other words, these are configurations containing neither $\fork \, \tilo \, \fork$
not $\bork \, \eilo \, \bork$ as a cyclic word.
The dining philosophers are deadlock-free if{}f $\reachconf{} \cap \mathit{Dead} = \emptyset$, which is the case (recall that philosophers can only grab both forks simultaneously).
\end{exa}

\section{Bounded inductive sets of an RTS}
\label{sec:boundedinv}

We present an invariant-based approach to the safety verification problem for
RTSs. Fix  an RTS $\mathcal{R} = \tuple{\Sigma, \initial, \transitions}$. We introduce an infinite sequence 
\[ \Sigma^* = \Ind{0} \supseteq  \Ind{1} \supseteq \Ind{2} \ldots \supseteq \reachconf{} \]
\noindent  of effectively regular inductive invariants of $\mathcal{R}$ that converges to $\reachconf{}$, i.e., $\Ind{k}$ is effectively regular for every $k\geq 1$, and $\reachconf{} = \bigcap_{k=0}^\infty \Ind{k}$. Section~\ref{subsec:basic} recalls basic notions about inductive sets and invariants, and Section~\ref{subsec:bounded} defines the inductive invariant $\Ind{b}$ for every $b \ge 0$.

\subsection{Inductive sets and invariants}
\label{subsec:basic}

Let $S \subseteq \Sigma^*$ be a set of configurations. $S$ is \emph{inductive} if it is closed under reachability,  i.e., if $w\in S$ and $w\leadsto u$ implies $u\in S$. 
Observe that inductive sets are closed under union and intersection. $S$ is an  \emph{invariant for length $\ell$} if $\reachconf{} \cap \Sigma^\ell \subseteq S \cap \Sigma^\ell$, and an \emph{invariant} if $\reachconf{} \subseteq S$. Observe that, since $\transitions$ is a length-preserving transducer, $S$ is an invariant if{}f it is an invariant for every length. 
Given two invariants $\invar{}_1, \invar{}_2$, we say that $\invar{}_1$ is \emph{stronger} than  $\invar{}_2$ if $\invar{}_1 \subset \invar{}_2$. 

A set $S$ is an \emph{inductive invariant} if it is both inductive and an invariant, i.e., if it contains $\reachconf{}$ and is closed under reachability. 
Every inductive invariant $S$ 
and every initial configuration $I$
satisfy $I\in{}S$. There is a unique smallest inductive invariant w.r.t. set inclusion, namely the set $\reachconf{}$ itself. 

\begin{exa}
\label{ex:reachphil}
The set $I_0 = ((\tilo \,|\, \eilo)(\fork \,|\, \bork))^*$ is an inductive invariant of the dining philosophers. Other inductive invariants are
        \[  
              I_1= \comp{\Sigma^* \eilo \fork \eilo \Sigma^*}, \quad
              I_2= \comp{\eilo \, \Sigma^* \eilo \fork}, \quad
              I_3 = \comp{\Sigma^* \tilo \, \bork \, \tilo  \Sigma^*}, \quad 
              I_4 = \comp{\tilo \, \Sigma^* \tilo \, \bork }, \quad 
              \]\[
              I_5 = \comp{(\tilo \, \Sigma \, \eilo \, \Sigma)^*} , \quad
              I_6 = \comp{(\eilo \, \Sigma \, \tilo \, \Sigma)^*} .
        \]
\noindent  Taking into account that the table is round, these are the sets of configurations 
        without any occurrence of $\eilo\fork\eilo$ ($I_1$ and $I_2$) 
        and $\tilo \, \bork \, \tilo$ ($I_3$ and $I_4$) as a cyclic word; 
        as well as without alternation of $\tilo$ and $\eilo$ throughout the \emph{entire} 
        configuration ($I_5$ and $I_6$).
        Note that the latter condition is vacuously true for an odd number of philosophers.
\end{exa}

\subsection{Bounded inductive sets.}
\label{subsec:bounded}
Given a length $\ell \geq 0$, we represent certain sets of configurations as Boolean formulas over a set $\mathit{AP}_\ell$ of atomic propositions. More precisely, a Boolean formula over $\mathit{AP}_\ell$ describes a set containing \emph{some} configurations of length $\ell$,  and \emph{all} configurations of other lengths.

The set $\mathit{AP}_\ell$ contains an atomic proposition $\vari{q}{j}{\ell}$
for every  $q \in \Sigma$ and for $j \in [1, \ell]$.  A \emph{formula}
$\varphi$ over $\mathit{AP}_\ell$ is a positive Boolean combination of atomic
propositions of $\mathit{AP}_\ell$ and the constants $\true$ and $\false$.
Formulas are interpreted on configurations. Intuitively, an atomic proposition
$\vari{q}{j}{\ell}$ states that either the configuration does \emph{not} have
length $\ell$, or it has length $\ell$ and its $j$-th letter is $q$. Formally,
$w \in \Sigma^*$ \emph{satisfies} the single atomic proposition $\vari{q}{j}{\ell}$
if either $|w| \neq \ell$ or $|w|=\ell$ and $w[j] = q$.
For non-atomic formulas, i.e., $\varphi = \true, \varphi_1 \vee
\varphi_2, \varphi_1 \wedge \varphi_2$, satisfaction is defined as usual.  The
language $\lang{\varphi} \subseteq \Sigma^*$ of a formula is the set of
configurations that satisfy $\varphi$. We also say that $\varphi$
\emph{denotes} the set $\lang{\varphi}$.
A formula is inductive if it denotes an inductive set.

\begin{exa}
  In the dining philosophers, let $\varphi = (\vari{\eilo}{1}{4}
  \wedge\vari{\bork}{4}{4}) \vee \vari{\fork}{2}{4}$. We have
        \[ 
    \lang{\varphi} = \epsilon \mid \Sigma \mid \Sigma^2 \mid \Sigma^3
  \mid 
    \eilo \; \Sigma \; \Sigma \; \bork \mid \Sigma \; f \; \Sigma \; \Sigma
  \mid \Sigma^5\Sigma^* \ .
        \] 
\end{exa}

Observe that an expression like $(\vari{q}{1}{1} \wedge \vari{r}{1}{2})$ is not a formula because it combines atomic propositions of two different lengths, which is not allowed. 
Notice also that $\neg \vari{q}{j}{\ell}$ is equivalent to $\bigvee_{r \in \Sigma \setminus \{q\}} \vari{r}{j}{\ell}$. 
Therefore, if we allowed negative atomic propositions, we would still have the same class
of expressible predicates on words of a given length (and we would not obtain formulas for the same predicates with fewer clauses.)
Abusing language, if $\varphi$ is a formula over $\mathit{AP}_\ell$ and $\lang{\varphi}$ is an (inductive) invariant, then we also say that $\varphi$ is an (inductive) invariant. Observe that (inductive) invariants are closed under conjunction and disjunction.

\begin{center}
\underline{Convention}: From now on, ``formula'' means ``positive formula in CNF''.
\end{center}

\begin{defi}
Let $b \geq 0$. A \emph{$b$-formula} is a formula with at most $b$ clauses (with the convention that $\true$ is the only formula with $0$ clauses). A set $S \subseteq \Sigma^*$ of configurations is \emph{$b$-bounded}  if for every length $\ell$ there exists a $b$-formula $\varphi_\ell$ over $\mathit{AP}_\ell$ such that $S \cap \Sigma^\ell = \lang{\varphi_\ell}$.  We abbreviate $b$-bounded sets to just $b$-sets. We call a $b$-bounded invariant a $b$-invariant.
\end{defi}

Observe that, since one can always add tautological clauses to a formula without changing its language, a set $S$ is $b$-bounded if{}f for every length $\ell$ there is a formula $\varphi_\ell$ with \emph{exactly} $b$ clauses.

\begin{exa}
\label{ex:binv}
  In the dining philosophers,  the $1$-formulas $(\vari{\tilo}{2i-1}{\ell} \vee
  \vari{\eilo}{2i-1}{\ell})$ and $(\vari{\fork}{2i}{\ell} \vee
  \vari{\bork}{2i}{\ell})$ are inductive $1$-invariants for every even $\ell \geq 1$ and every $i \in
  [1, \ell/2]$. It follows that the
  set $I_0$ of \autoref{ex:reachphil} is an intersection of (infinitely
  many) such inductive $1$-invariants. The same happens for $I_1, \ldots, I_6$. 
  For example, 
        $I_1$ is the intersection of all inductive $1$-invariants of the
  form $(
        \vari{\tilo}{i}{\ell} 
        \vee \vari{\fork}{i}{\ell} 
        \vee \vari{\bork}{i}{\ell} 
        \vee 
        \vari{\bork}{i+1}{\ell}
        \vee \vari{\tilo}{i+1}{\ell}
        \vee \vari{\eilo}{i+1}{\ell}
        \vee
        \vari{\tilo}{i+2}{\ell} 
        \vee \vari{\fork}{i+2}{\ell} 
        \vee \vari{\bork}{i+2}{\ell} 
        )$, for all $\ell
  \geq 1$ and all $i \in [1, \ell-1]$; inductivity is shown by an easy case
  distinction.
\end{exa}

We introduce the inductive invariants studied in the paper.

\begin{defi}\label{def:indb}
Let $\rts=(\Sigma, \initial, \transitions)$ be an RTS, let $w, u \in \Sigma^*$ be two configurations of $\rts$, and let $b \geq 0$. We say that $w$ is \emph{$b$-potentially-reachable} from $u$, denoted $u \reachesb{b} w$, if every inductive $b$-formula satisfied by $u$ is also satisfied by $w$. In other words, no inductive $b$-set provides an explanation why $w$ should not be reachable from $u$. Further, we define $\Ind{b} = \{ w \in \Sigma^* \mid  u \reachesb{b} w \text{ for some initial configuration } u  \}$.
\end{defi}

Observe that $u \reaches w$ implies $u \reachesb{b}w$ for every $b \geq 1$. Indeed, if $u \reaches w$ then, by the definition of an inductive formula, every inductive formula satisfied by $u$ is also satisfied by $w$.  Therefore, if $w$ is reachable from $u$ then it is also $b$-potentially  reachable from $u$ for every $b \geq 0$. However, the converse does not necessarily hold.

The following proposition shows that  $\Ind{b}$ is an inductive invariant for every $b \geq 0$, and some fundamental properties.

\begin{prop}\label{prop:inter}
  Let $\mathcal{R}$ be an RTS. For every $b \geq 0$: 
  \begin{enumerate}
  \item $\Ind{b}$ is an inductive invariant, and $\Ind{b} \subseteq S$ for every inductive $b$-invariant $S$.
  \item $\Ind{b} \supseteq \Ind{b+1}$. 
  \item $\reachconf{} = \bigcap_{b=0}^\infty \Ind{b}$.
  \end{enumerate}
\end{prop}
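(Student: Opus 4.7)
My plan is to unfold the definition of $\reachesb{b}$ and chase the implications between inductive $b$-formulas, their languages, and $\Ind{b}$ directly. Part (1) splits into invariance, inductivity, and minimality. Invariance is immediate: for every initial configuration $u$ the relation $u \reachesb{b} u$ holds vacuously, so $u \in \Ind{b}$. For inductivity of $\Ind{b}$, suppose $w \in \Ind{b}$ with initial witness $u$, and $w \leadsto w'$; for every inductive $b$-formula $\varphi$ satisfied by $u$, also $w \models \varphi$ by $u \reachesb{b} w$, and then $w' \models \varphi$ because $\lang{\varphi}$ is inductive, so $u \reachesb{b} w'$ and $w' \in \Ind{b}$. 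For the minimality claim, given an inductive $b$-invariant $S$, for each length $\ell$ I fix a $b$-formula $\varphi_\ell$ over $\mathit{AP}_\ell$ with $S \cap \Sigma^\ell = \lang{\varphi_\ell} \cap \Sigma^\ell$; by the semantics of $\mathit{AP}_\ell$ every $\varphi_\ell$ automatically accepts all words of length $\neq \ell$, so $\lang{\varphi_\ell} = (S \cap \Sigma^\ell) \cup (\Sigma^* \setminus \Sigma^\ell)$. I then check that this set is itself an inductive invariant: length-preservation of $\transitions$ means a transition either stays within $\Sigma^* \setminus \Sigma^\ell$ (already covered) or stays within $\Sigma^\ell$, where inductivity is inherited from $S$. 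Thus each $\varphi_\ell$ is an inductive $b$-invariant, and any $w \in \Ind{b}$ with initial witness $u$ satisfies $u \models \varphi_{|w|}$ (since $u \in \reachconf{} \subseteq S$), hence $w \models \varphi_{|w|}$, hence $w \in S$.

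Part (2) is a one-line observation: padding with tautological clauses turns any $b$-formula into a $(b+1)$-formula with the same language, so every inductive $b$-formula is also an inductive $(b+1)$-formula. Therefore $u \reachesb{b+1} w$ implies $u \reachesb{b} w$, yielding $\Ind{b+1} \subseteq \Ind{b}$.

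For part (3), the inclusion $\reachconf{} \subseteq \bigcap_{b \geq 0} \Ind{b}$ is immediate from (1). For the converse I show that any $w \notin \reachconf{}$ is excluded by $\Ind{b}$ for a sufficiently large $b$. Setting $\ell = |w|$, consider $T := \reachconf{} \cup (\Sigma^* \setminus \Sigma^\ell)$. Length-preservation of $\transitions$ makes $T$ an inductive invariant. Since $\Sigma^\ell$ is finite I can encode $T$ in CNF over $\mathit{AP}_\ell$: for each $v \in \Sigma^\ell \setminus \reachconf{}$ take the single clause $C_v := \bigvee_{i=1}^\ell \bigvee_{r \in \Sigma \setminus \{v[i]\}} \vari{r}{i}{\ell}$, which is satisfied by every length-$\ell$ word except $v$ itself; their conjunction, with at most $|\Sigma|^\ell$ clauses, denotes $T$. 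Applying part (1) with this inductive $b$-invariant gives $\Ind{b} \subseteq T$, and $w \notin T$ forces $w \notin \Ind{b}$.

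The main obstacle is the slicing argument in (1): one must verify that the length-$\ell$ restriction of an inductive $b$-invariant is itself an inductive $b$-invariant, which is exactly where length-preservation of the transducer $\transitions$ is used. Once this observation is in place, (2) is a padding remark and (3) reduces to writing down an explicit excluding CNF, so neither poses further difficulty.
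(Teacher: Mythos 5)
Your proof is correct and follows essentially the same route as the paper's: invariance via reflexivity of $\reachesb{b}$, inductivity by chasing inductive $b$-formulas through a transition, minimality by slicing $S$ into per-length formulas, monotonicity in $b$ by viewing $b$-formulas as $(b{+}1)$-formulas, and part (3) by expressing $\reachconf{}\cap\Sigma^\ell$ as a CNF with finitely many clauses. The only differences are cosmetic: you make explicit two steps the paper leaves implicit, namely that the length-$\ell$ slice padded with all other lengths is still inductive (via length-preservation of $\transitions$) and the concrete excluding clauses $C_v$ in part (3).
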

\begin{proof}
\noindent (1)  We first show that $\Ind{b}$ is an inductive invariant. 
\begin{itemize}
\item $\Ind{b}$ is inductive. Let $w \in\Ind{b}$ and $w \leadsto v$. Since $w \in\Ind{b}$, we have $u \reaches_b w$ for
some initial configuration $u$. We prove $u \reaches_b v$. Let $\varphi$ be an inductive $b$-formula satisfied by $u$. By definition, $\varphi$ is also satisfied by $w$ and, since $\varphi$ is inductive
and $w \leadsto v$, also by $v$. 
\item $\Ind{b} \supseteq \reachconf{}$. Since $\Ind{b}$ is inductive, it suffices to show that it contains all initial configurations. This follows immediately from the definition of $\Ind{b}$ and the fact that every configuration is $b$-potentially reachable from itself.
\end{itemize}
For the second part, let $S$ be an arbitrary $b$-invariant, and let $w \in \Ind{b} $. We prove $w \in S$. Let $\ell$ be the length of $w$. Since $S$ is a $b$-invariant, there exists a $b$-formula
$\varphi$ over $AP_\ell$ such that $S \cap \Sigma^\ell = \lang{\varphi}$. So it suffices to prove $w \models \varphi$. 
Since $w \in \Ind{b}$, there exists an initial configuration $u$ such that every inductive $b$-formula satisfied by $u$ is also satisfied by $w$. In particular, this also holds for $\varphi$.
Further, since $S$ is an invariant, $u \models \varphi$. So $w \models \varphi$, and we are done.

\medskip\noindent (2) $\Ind{b} \supseteq \Ind{b+1}$ follows from the fact that, by definition, every $b$-formula is also a $(b+1)$-formula. 

\medskip\noindent (3)  For every $\ell \geq 0$, the set $\reachconf{} \cap \Sigma^\ell$ is an inductive invariant for length $\ell$. Let $\varphi_\ell$ be a formula over $\mathit{AP}_\ell$ such that $\lang{\varphi_\ell} \cap \Sigma^\ell = \reachconf{} \cap \Sigma^\ell$, and let $b_\ell$ be its number of clauses. (Notice that $\varphi_\ell$ always exists because
every subset of $\Sigma^\ell$ can be expressed as a formula, and every formula can be put in conjunctive normal form.)
Then $\varphi_\ell$ is a $b_\ell$-bounded invariant, and so
  $\lang{\varphi_\ell}  \supseteq \Ind{b_\ell}$ for every $\ell \geq 0$.
  So we have $\reachconf{} = \bigcap_{\ell=0}^\infty \lang{\varphi_\ell}
  \supseteq \bigcap_{b=0}^\infty \Ind{b}$.
        As each $\Ind{b}$ is an invariant, $\Ind{b}\supseteq\reachconf{}$
        for all $b$, thus $\bigcap_{b=0}^\infty \Ind{b}\supseteq\reachconf{}$
        and we are done.
\end{proof}
\noindent 
By (1), the invariant $\Ind{b}$ is as strong as any  $b$-invariant. Notice, however, that  $\Ind{b}$ needs not be a $b$-invariant itself. The reason is that $b$-sets are not closed under intersection. Indeed, the conjunction of two formulas with $b$ clauses is not always equivalent to a formula with $b$ clauses, one can only guarantee equivalence to a formula with $2b$ clauses. 

\begin{exa}\label{ex:deadlock-dining-philosopers}
Recall that the deadlocked configurations of the dining philosophers are 
        \[ \mathit{Dead}=\comp{\Sigma^* \fork \, \tilo \, \fork \, \Sigma^*} \cap \comp{\Sigma^* \bork \, \eilo \, \bork  \, \Sigma^*} \cap 
        \comp{\tilo \, \fork \, \Sigma^* \fork } \cap \comp{\eilo \, \bork  \, \Sigma^* \bork }  \ . \] 
We prove $\Ind{1} \cap \mathit{Dead} = \emptyset$, which implies that the dining philosophers are deadlock-free.  Let $C$ be the set of configurations of $((\tilo\mid \eilo)(\fork \mid \bork))^*$ containing no occurrence of $\eilo \fork \eilo$ or $\tilo\, \bork \, \tilo$ as a cyclic word, and without alternation of $\eilo$ and $\tilo$ throughout the entire configuration. In \autoref{ex:reachphil}, we showed that $C$ is the intersection of $1$-invariants, which implies $\Ind{1} \subseteq C$. So it suffices to prove $C \cap \mathit{Dead}=\emptyset$. For this, let $w \in C$. If $|w|\leq 3$ the proof is an easy case distinction, so assume $|w|\geq 4$. 
        We show that $w$ as a cyclic word contains an occurrence of $\fork  \, \tilo \, \fork$ or $\bork \, \eilo \, \bork$, and so it is not a deadlock. 
        If all philosophers are thinking at $w$,  then, since $w$ contains no occurrence of $\tilo \, \bork \, \tilo$, it contains an occurrence of $\fork \, \tilo \, \fork$.

        If some philosopher is eating at $w$, assume for a contradiction that there is a deadlock.
        By this assumption, that philosopher must have a free fork nearby. 
        Without loss of generality, assume that the fork is the next symbol in the configuration.
        Then we are looking at a $\eilo \fork$ fragment. 
        As $\eilo \fork \eilo$ is forbidden,
        the next symbol in the configuration has to be $\tilo$. 
        By assumption of a deadlock, the next symbol is $\bork$.
        We have obtained $\eilo \fork \tilo \bork$ fragment; 
        if we continue, we observe that eating and thinking philosophers 
        always alternate. 
        But this is forbidden by the last pair of invariants;
        thus the initial assumption of deadlock must be false.

Further, for the dining philosophers we have $\reachconf{} = \Ind{3}$. Apart from some corner cases
        (e.g. an unsatisfiable invariant for every odd length),
        the reason is that the $3$-formula 
        \begin{align*}
                & (\vari{\tilo}{i}{\ell}
        \vee\vari{\fork}{i}{\ell}
        \vee\vari{\bork}{i}{\ell}
        \vee \vari{\bork}{i+1}{\ell}
        \vee\vari{\tilo}{i+1}{\ell}
        \vee\vari{\eilo}{i+1}{\ell}
                ) \\ & \qquad \wedge (
        \vari{\bork}{i+1}{\ell} 
        \vee \vari{\tilo}{i+1}{\ell}
        \vee \vari{\eilo}{i+1}{\ell}
        \vee \vari{\tilo}{i+2}{\ell}
        \vee\vari{\fork}{i+2}{\ell} 
        \vee\vari{\bork}{i+2}{\ell} 
                ) \\ & \qquad \wedge (
        \vari{\tilo}{i}{\ell} 
        \vee 
        \vari{\fork}{i}{\ell} 
        \vee 
        \vari{\bork}{i}{\ell} 
        \vee 
        \vari{\fork}{i+1}{\ell} 
        \vee 
        \vari{\tilo}{i+1}{\ell} 
        \vee 
        \vari{\eilo}{i+1}{\ell} 
        \vee 
        \vari{\tilo}{i+2}{\ell}
        \vee 
        \vari{\fork}{i+2}{\ell} 
        \vee 
        \vari{\bork}{i+2}{\ell} 
        )
        \end{align*}
        \noindent  is an inductive $3$-invariant for every $\ell \geq 3$ and every $i$ from $1$ to $\ell$, if we interpret indices cyclically modulo $\ell$.
To verify it, observe that any violation has one of $\eilo \fork *$, $* \fork \eilo$, or $\eilo \bork \eilo$
at the positions $i$ through $i+2$.
For either of the former two situations to arise anew, the middle fork needs to be newly freed — requiring the latter situation as a precondition. Conversely, for the latter situation to arise anew, one of the philosophers need to start eating, picking up a free fork next to the other philosopher, who needs to be already eating.
Thus any violation of the condition requires a violation on the previous step, too, proving inductivity.
        The configurations satisfying this invariant and the inductive $1$-invariants $I_0, \ldots, I_6$ of \autoref{ex:binv} are the reachable configurations 
$\reachconf{} = (\tilo ( \fork \mid \bork\eilo\bork))^* \mid \eilo\bork\tilo ((
  \fork \mid \bork\eilo\bork) \tilo)^*\bork$.

        Note that the $3$-invariant does not rely on the $1$-invariants for its inductiveness;
        we always require the inductive invariants to be inductive on their own independently of each other.
        The invariant is weaker than it could be because we have obtained it from 
        forbidding the patterns $\eilo \fork$, $\fork \eilo$, and $\eilo \bork \eilo$.
        It happens to include some obviously unreachable configurations such as $\fork\tilo\fork\tilo$
        (note that the forks and the philosophers are swapped, the initial state for two philosophers is
        $\tilo \fork \tilo \fork$), but this is not a problem as simpler invariants exclude them.
\end{exa}

\begin{exa}
  \label{ex:ladder}
  We construct an (artificial) family $\{\rts_b \mid b \geq 1 \}$ of RTSs such that $\Ind{b} \supset \reach(\mathcal{R}_b)=\Ind{b+1}$.
  Fix some $b \geq 1$. Let $\rts_b= \{ \set{0,1}, \initial, \transitions\}$ with $\rel{\transitions}$ given by the union of the languages
        \[ \vtuple{0 \\ 0}^{k_1} \vtuple{1 \\ 0}  \vtuple{0\\ 0}^{k_2} \, \left( \vtuple{0 \\ 0} \;\bigg|\;  \vtuple{0 \\ 1} \;\bigg|\; \vtuple{1 \\ 0} \;\bigg|\; \vtuple{1 \\ 1} \right)^{*}\]
  \noindent  for every $k_1, k_2 \in \mathbb{N}$ such that $k_1+k_2=b-1$. 
  Then every transition of the RTS is of the form $u \cdot v \leadsto u' \cdot v'$ where 
  $\size{u} = b = \size{u'}$, the word $u$ contains exactly one $1$,  and the word $u'$ contains only $0$s.
   If we choose $0^{*}$ as the set of initial configurations, then no transition is applicable to any initial configuration, and so $\reachconf{}= 0^{*}$.   
   We show that $\Ind{b} \supset \Ind{b+1}= 0^{*} = \reachconf{}$. 
 \begin{itemize}
 \item $\Ind{b+1}= 0^{*}$. We first claim that, for every
 $i > b$, the formula $\varphi_i= \bigwedge_{j=1}^{b} 0_{j \colon \ell} \wedge 0_{(i-b)\colon \ell}$ is an inductive $b+1$-invariant. 
 It follows immediately from the definitions that $\varphi_i$ is a $b+1$-invariant. To show that $\varphi_i$ is inductive,  observe that every
 configuration $w$ of length $\ell$ satisfying $\varphi_i$ is of the form $w=0^b w'$ for some word $w'$, and 
 so no word $u$ satisfies $w \leadsto u$. This proves the claim. Now, the only configurations satisfying
 $\varphi_i$ for every $i > b$ are those containing only $0$s, and so $\Ind{b+1}= 0^{*}$.
 
\item $0^{b} \; 1 \in \Ind{b}$, and so $\Ind{b} \supset \reachconf{}$.
        Intuitively, we want to show that every $b$-set containing $0^{b+1}$
                 contains \emph{some} word of the form $0^*10^*$,
                 and all such words can reach $0^b1$, 
                 requiring all $b$-invariants to contain $0^b1$.
 It suffices to prove that no  inductive $b$-set $\varphi = \bigwedge_{i=1}^{b} \varphi_{i}$ separates $0^{b+1}$ and $0^{b} \; 1$, i.e., satisfies
  $0^{b+1} \models \varphi$ while $0^{b} \; 1 \not\models \varphi$.
  For this, let us introduce a family of useful words:
  for every $1 \leq i \leq b+1$, let $w_i:= 0^{i-1} \; 1 \; 0^{b - i + 1}$.
  We have $w_i \leadsto 0^{b} \; 1$ and so, since $\varphi$ is inductive and $0^{b} \; 1 \not\models \varphi$, we get $w_i \not\models \varphi$.
  Intuitively, we exploit the fact that $w_{i}$ differs from $0^{b+1}$ in exactly one position.
  This implies that, for every $i$, some clause of $\varphi$, say $\varphi_j$,  satisfies $w_{i} \not\models \varphi_j$ and therefore contains $0_{i:b+1}$ as an atomic proposition.
 So we ``need'' all clauses of the inductive invariant to exclude all $w_{i}$, and have none left to exclude $0^{b} \; 1$.

  More formally, we conduct an induction on $i$ from $1$ to $b$ to prove that there are distinct $j_{1}, \ldots, j_{b}$ such that $\varphi_{j_{i}}$ contains the atomic proposition $0_{i:b+1}$.
  Consider the base case:
  Pick $j_{1}$ such that $w_{1} \not\models \varphi_{j_{1}}$, which exists because $w_{1} \not\models \varphi$.
  $0^{b+1}$ differs from $w_{1}$ only in the first letter, and $\varphi_{j_{1}}$ is a disjunction of atomic propositions which is satisfied by $0^{b+1}$.
  This means that $\varphi_{j_{1}}$ must contain $0_{1:b+1}$ because it is the only atomic proposition that is satisfied by $0^{b+1}$ but not $w_{1}$.
  For the induction step, one observes that $w_{i}$ satisfies the clauses $\varphi_{j_{1}}, \ldots, \varphi_{j_{i-1}}$ by induction hypothesis.
  Therefore, $\varphi$ also contains a new clause $\varphi_{j_{i}}$ such that $w_{i} \not\models \varphi_{j_{i}}$.
  With the same reasoning as for the base case one concludes that $0_{i:b+1}$ is an atomic proposition in $\varphi_{j_{i}}$.
  This concludes the induction and, thus, $0^{b} \; 1 \models \varphi$.

\end{itemize}
\end{exa}

\begin{exa}
\label{ex:ladder2}
As a final example, we construct an RTS such that $\reach$ is a regular language but there is no $b$ such that $\Ind{b}=\reach$. Let $\rts=(\{0,1\}, \initial, \transitions \}$ be the RTS with  $\lang{\initial}= 0^{*}$ and
$$\rel{\transitions}= \vtuple{0 \\ 0}^{*} \; \vtuple{1 \\ 0} \; \vtuple{0 \\ 0}^{*} \; \left( \vtuple{0 \\ 0} \;\bigg|\; \vtuple{0 \\ 1} \;\bigg|\; \vtuple{1 \\ 0} \;\bigg|\; \vtuple{1 \\ 1} \right) \, .$$
        For every length $\ell$, the RTS $\rts$ behaves like the RTS $\rts_{b}$ of \autoref{ex:ladder} for $b = \ell -1$. Therefore, we have $\reach = 0^*$, and so $\reach$ is regular. Further, $\Ind{b} \neq \reach$ for every $b \geq 1$ (as they differ when restricted to the length $b+1$).
\end{exa}

\section{\texorpdfstring{$\Ind{b}$}{Ind\_b} is regular for every \texorpdfstring{$b \geq 1$}{b≥1}}
\label{sec:indinv-is-regular}

\subsection{Encoding \texorpdfstring{$b$}{b}-formulas as \texorpdfstring{$b$}{b}-powerwords.}
\label{sec:indreg:prelim}
We introduce an encoding of $b$-formulas. We start with some examples. Assume
$\mathcal{R}$ is an RTS with $\Sigma=\{a,b,c\}$. We consider formulas over $\mathit{AP}_3$, i.e., over the atomic propositions $\{ \vari{a}{1}{3},  \vari{a}{2}{3},  \vari{a}{3}{3}, \vari{b}{1}{3}, \vari{b}{2}{3}, \vari{b}{3}{3}, \vari{c}{1}{3}, \vari{c}{2}{3}, \vari{c}{3}{3} \}$. 

We encode the $1$-formula $(\vari{a}{1}{3} \vee \vari{a}{2}{3})$ as the word $\{a\} \, \{a\} \, \emptyset$ of length three over the alphabet $2^\Sigma$. Intuitively, $\{a\} \, \{a\} \, \emptyset$ stands for the words of length $3$ that have an $a$ in their first or second position. Similarly, we encode $(\vari{a}{1}{3} \vee \vari{b}{1}{3} \vee \vari{b}{3}{3})$ as $\{a,b\} \, \emptyset \, \{b\}$. Intuitively, $\{a,b\} \, \emptyset \, \{b\}$ stands for the set of words of length $3$ that have $a$ or $b$ as the first letter, or $b$ as the third letter. Since $2^\Sigma$ is the powerset of $\Sigma$, we call words over $2^\Sigma$ \emph{powerwords}.

Consider now the $2$-formula $(\vari{a}{1}{3} \vee \vari{b}{1}{3} \vee \vari{a}{2}{3}) \wedge (\vari{b}{1}{3} \vee \vari{b}{3}{3} \vee \vari{c}{3}{3})$. 
We put the encodings of its clauses ``on top of each other''. Since the encodings of $(\vari{a}{1}{3} \vee \vari{b}{1}{3} \vee \vari{a}{2}{3})$ and  $(\vari{b}{1}{3} \vee \vari{b}{3}{3} \vee \vari{c}{3}{3})$ are $\{a,b\} \,\{a\} \, \emptyset$ and $\{b\}\, \emptyset \, \{b, c\}$,  respectively, we encode the formula as the word
\[ \vtuple{ \{a,b\} \\[0.1cm] \{b\} } \vtuple{ \{a\} \\[0.1cm] \emptyset }
\vtuple{ \emptyset \\[0.1cm] \{b,c\} } \]

\noindent of length three over the alphabet $2^\Sigma \times 2^\Sigma = (2^\Sigma)^2$. We call such a word a \emph{$2$-powerword}. 
Similarly, we encode a $b$-formula over $\mathit{AP}_3$ as a \emph{$b$-powerword} of length three over the alphabet $(2^\Sigma)^b$. In the following, we overload $\varphi$  to denote both a formula and its encoding as a $b$-powerword, and, for example, write

\begin{equation*}
  \varphi = \vtuple{ X_{11}\\[0.1cm] \cdots \\[0.1cm] X_{b1} } \cdots
  \vtuple{ X_{1\ell}\\[0.1cm] \cdots \\[0.1cm]  X_{b\ell} }  \quad
  \mbox{ instead of  } \quad \varphi = \bigwedge_{i=1}^b \bigvee_{j=1}^\ell \bigvee_{a \in X_{ij}} \vari{a}{i}{\ell}
\end{equation*}
where $X_{ij} \subseteq \Sigma$. Intuitively, each row $X_{i1} \cdots
X_{i\ell}$ encodes one clause of $\varphi$. We also write $\varphi = \varphi[1]
\cdots \varphi[\ell]$ where $\varphi[i] \in (2^\Sigma)^b$ denotes the $i$-th
letter of the $b$-powerword encoding $\varphi$. 

Now we show a simple lemma, which, however, provides the key to our results. For every fixed $b$, the satisfaction relation $w \models \varphi$ between configurations 
and $b$-formulas is regular:

\begin{lem}
\label{lem:satisfaction}
Let $\Sigma$ be an alphabet, $b \geq 1$, and $\Gamma=(2^\Sigma)^b$. There exists a deterministic transducer $\interpretation_b$ over the alphabet $\Sigma \times \Gamma$ with $2^b$ states 
such that $L(\interpretation) = \{ \tuple{w, \varphi} \in (\Sigma \times \Gamma)^* \mid w \models \varphi\}$.
\end{lem}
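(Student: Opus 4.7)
The plan is to build a deterministic transducer whose states record, at each position of the input, the set of clauses of $\varphi$ that have already been satisfied by the prefix of $w$ read so far. Since the encoding of a $b$-formula $\varphi$ has exactly $b$ rows (one per clause), the natural state space is $Q = 2^{[1,b]}$, which has the required $2^b$ elements.

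Concretely, I would define $\interpretation_b = \tuple{2^{[1,b]}, \Sigma \times \Gamma, \Delta, \{\emptyset\}, \{[1,b]\}}$ with transition function
\[
  \Delta\bigl(S, \tuple{a, \tuple{X_1, \ldots, X_b}}\bigr) = S \cup \{ i \in [1,b] \mid a \in X_i \}.
\]
This is total and deterministic, so $\interpretation_b$ is a DFA with $2^b$ states. The initial state $\emptyset$ reflects that no clause is yet satisfied, and the unique final state $[1,b]$ reflects that all $b$ clauses must be satisfied for $w \models \varphi$.

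Correctness then comes down to one easy invariant, proved by induction on the length of the prefix read: after processing the first $k$ letters of $\tuple{w,\varphi}$, the transducer is in state
\[
  S_k = \bigl\{ i \in [1,b] \mid \exists\, j \in [1,k] : w[j] \in X_{ij} \bigr\},
\]
i.e., exactly the indices of those clauses that have already been witnessed. The base case $S_0 = \emptyset$ holds by definition of the initial state, and the inductive step follows directly from the definition of $\Delta$. Applying this with $k = \ell = |w| = |\varphi|$, the final state equals $[1,b]$ iff every row of $\varphi$ contains some $X_{ij}$ with $w[j] \in X_{ij}$, which is precisely the condition $w \models \bigwedge_{i=1}^b \bigvee_{j=1}^\ell \bigvee_{a \in X_{ij}} \vari{a}{i}{\ell}$.

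There is no real obstacle here; the only subtlety is choosing the right notion of state. Tracking the ``satisfied-so-far'' set rather than, say, tracking the full remaining clauses is what keeps the state space down to $2^b$ and makes the automaton deterministic in one pass.
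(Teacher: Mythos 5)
Your construction is essentially identical to the paper's: the same state space $2^{[1,b]}$ recording which clauses have already been witnessed, the same initial state $\emptyset$ and final state $[1,b]$, and the same acceptance condition. The only (cosmetic) difference is that you give the transition function directly as the maximal update $S \mapsto S \cup \{i \mid a \in X_i\}$, which makes determinism explicit, whereas the paper describes the transitions as a relation ($S \subseteq S'$ with $a \in X_i$ for $i \in S' \setminus S$); your formulation is if anything slightly cleaner on that point.
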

\begin{proof}
A configuration $w = w[1] w[2] \cdots w[\ell]$ satisfies a b-formula 
\[ \varphi = \vtuple{ X_{11} \\ \ldots \\ X_{b1} } \vtuple{ X_{12} \\ \ldots \\ X_{b2}} \cdots  
\vtuple{ X_{1\ell}  \\ \ldots \\ X_{b\ell}} \]
\noindent where $X_{i,j} \subseteq \Gamma$ if{}f for every index $1 \leq i \leq b$ there exists 
a position $1 \leq j\leq \ell$ in the word $w$
such that $w[j] \in X_{i j}$.  We define a transducer $\interpretation_b$ over $\Sigma \times \Gamma$ that accepts $\tuple{w, \varphi} $ if{}f this condition holds.
The transducer reads the word
\[ \vtuple{ w[1] \\[0.1cm] X_{11} \\ \ldots \\ X_{b1} } \vtuple{ w[2] \\[0.1cm] X_{12}  \\ \ldots \\ X_{b2}} \cdots  
\vtuple{ w[\ell] \\[0.1cm] X_{1\ell}  \\ \ldots \\ X_{b\ell}} \]
\noindent storing in its state the set of indices $i \in \{1, \ldots, b\}$ for which the position $j$ has already been found. 
So the states of the transducer are the subsets of $\{1, \ldots, b\}$, and, given two states $S, S'$, there is a transition from $S$
to $S'$ labeled by the tuple $\tuple{ a , X_{1}, \ldots , X_{b} }$ if{}f $S \subseteq S'$ and $a \in X_{i}$ holds for every $i \in S' \setminus S$.
The initial state is the empty set, indicating that no position has been found yet, and the unique final state is the set $\{1, \ldots, b\}$, indicating that all positions have been found.
\end{proof}

\subsection{Representations and interpreters}   
\label{subsec:representations-interpreters}
The transducer $\interpretation_b$ of \autoref{lem:satisfaction} recognizes the satisfaction relation
$w \models \varphi$ between configurations and $b$-formulas.  We generalize this idea. Consider an \emph{arbitrary} transducer $\interpretation$ over the alphabet $\Sigma \times \Gamma$,
where $\Gamma$ is some \emph{arbitrary} alphabet. Now, look at a word $W \in \Gamma^*$ (we use capital letters $W,V,U, \ldots$ to denote such words) as a \emph{representation}
of the set of configurations $\{ w \in \Sigma^* \mid \tuple{w, W} \in \lang{\interpretation} \}$. Intuitively, $W$ is a \emph{name} standing for this set of configurations, and the transducer $\interpretation$ \emph{interprets} the meaning of $W $. 
For this reason,  we call $\interpretation$ an \emph{interpreter}. 

\begin{defi}[Interpretation]
Let $\rts = \tuple{\Sigma, \initial, \transitions}$ be a regular transition system. An \emph{interpretation} is a pair $\tuple{\Gamma, \interpretation}$ where $\Gamma$ is an alphabet
and $\interpretation$ is a deterministic transducer over $\Sigma \times \Gamma$, called the \emph{interpreter}\footnote{Observe that we require the transducer to be deterministic (as is the case in \autoref{lem:satisfaction}).}. We call words over $\Gamma$ \emph{representations}, and use capital letters $W, V, U, \ldots$ to denote them. $W \in \Gamma^*$ \emph{represents} or \emph{stands for} the set of all configurations $w \in \Sigma^*$ such that $\tuple{w, W}$ is accepted by $\interpretation$. We write $w \models_{\interpretation} W$ to denote that $w$ is one the words represented by $W$.
\end{defi}

\begin{exa}
The transducer of \autoref{lem:satisfaction} is a particular interpreter with alphabet $\Gamma =(2^\Sigma)^b$.  For example, for $b=2$ the transducer
interprets the word \[ \vtuple{ \{a,b\} \\[0.1cm] \{b\} } \vtuple{ \{a\}  \\[0.1cm] \emptyset }
\vtuple{ \emptyset  \\[0.1cm] \{b,c\} } \]
as the set of configurations that satisfy the formula $(\vari{a}{1}{3} \vee \vari{b}{1}{3} \vee \vari{a}{2}{3}) \wedge (\vari{b}{1}{3} \vee \vari{b}{3}{3} \vee \vari{c}{3}{3})$.
\end{exa}
\noindent 
In the rest of the section, we define the set $\Ind{\interpretation}$ for an arbitrary interpreter $\interpretation$, and prove that it is regular. Instantiating $\interpretation$ as the transducer $\interpretation_b$ of \autoref{lem:satisfaction}, we obtain 
as a corollary that $\Ind{b}$ is regular for every $b \geq 1$. In order to define $\Ind{\interpretation}$, observe that an interpreter $\interpretation$ may interpret some representations as inductive sets of configurations, and others as non-inductive sets.  
We define the set of all representations that are inductive.

\begin{defi}
Let $\rts = \tuple{\Sigma, \initial, \transitions}$ be a regular transition system and let $\interpretation$
be an interpreter. A representation $W \in \Gamma^*$ is \emph{inductive} if $u \leadsto w$ and $u \models_{\interpretation} W$
implies $w \models_{\interpretation} W$. We define the set 
\begin{equation*}
  \Inductive = \set{W \in \Gamma^{*} \mid \text{ for every } u, w \in \Sigma^*, \text{ if } u \leadsto w \text{ and } u \models_{\interpretation} W \text{, then } w \models_{\interpretation} W}.
\end{equation*}
We write $u \reaches_{\interpretation} w$, and say that $w$ is \emph{potentially reachable} from $u$ with respect to $\interpretation$ if for every inductive representation $W \in \Inductive$, if $u \models_{\interpretation} W$, then $w \models_{\interpretation}{W}$ too. Further, we define 
\begin{equation*}
\Ind{\interpretation} := \{ w \in \Sigma^* \mid u \prelint w \text{ for some initial configuration } u \}  .
\end{equation*}
\end{defi}

We have the following fact:

\begin{fact}
$\Ind{\interpretation} \supseteq \reach$ for every interpreter $\interpretation$, i.e., $\Ind{\interpretation}$ is an overapproximation of the set of reachable configurations.
\end{fact}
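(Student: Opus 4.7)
The plan is to unfold the definitions and reduce the claim to a straightforward induction on the length of a reachability witness. Fix an arbitrary interpreter $\interpretation$ and take any $w \in \reach$. By definition of $\reach$, there is an initial configuration $u$ with $u \reaches w$, i.e.\ a finite sequence $u = u_0 \leadsto u_1 \leadsto \cdots \leadsto u_n = w$. I want to show that $u \prelint w$, which will immediately give $w \in \Ind{\interpretation}$ because $u$ is initial.

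To establish $u \prelint w$, I pick an arbitrary inductive representation $W \in \Inductive$ with $u \models_{\interpretation} W$ and prove $w \models_{\interpretation} W$. Here I perform induction on $n$. The base case $n=0$ is trivial since $u=w$. For the inductive step, assuming $u_{i} \models_{\interpretation} W$, the fact that $u_i \leadsto u_{i+1}$ together with inductivity of $W$ (which, by the definition of $\Inductive$, says exactly that satisfaction is preserved along a single transition step) yields $u_{i+1} \models_{\interpretation} W$. After $n$ applications we obtain $w \models_{\interpretation} W$, as desired.

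Since $W$ was arbitrary, this gives $u \prelint w$, and since $u$ is initial, $w \in \Ind{\interpretation}$ by definition. Because $w \in \reach$ was arbitrary, the inclusion $\reach \subseteq \Ind{\interpretation}$ follows.

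There is no real obstacle here: the statement is essentially a tautology once one parses the definitions, because $\Inductive$ has been crafted precisely so that its members are preserved under $\leadsto$, and potential reachability $\prelint$ is defined as preservation of satisfaction for every such $W$. The only minor care required is to use induction along the transition sequence rather than trying to appeal to inductivity directly on the full relation $\reaches$.
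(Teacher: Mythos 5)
Your proof is correct and follows essentially the same route as the paper's: unfold the definitions, observe that an inductive representation $W$ denotes a set closed under single transition steps, and conclude that satisfaction of $W$ propagates from an initial configuration along any reachability witness, giving $u \prelint w$ and hence $w \in \Ind{\interpretation}$. The only difference is that you make explicit the induction along the transition sequence, which the paper compresses into the remark that every inductive set containing $u$ also contains $w$.
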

\begin{proof}
If $w \in \reach$, then $v  \reaches w$ holds for some initial configuration $v$. We prove $v \prelint w$, which implies $w \in \Ind{\interpretation}$. Since $v  \reaches w$, every inductive set 
containing $v$ contains $w$ as well. So, in particular, for every $W \in \Inductive$, if $v \models_\interpretation W$, then $w \models_\interpretation W$.
\end{proof}
\noindent 
To illustrate the virtue of this generalization, consider another interpretation for the RTS from \autoref{ex:ladder2}. Abstractly speaking, we consider a set of atomic propositions and enforce that none of these propositions is true. The interpreter $\interpretation$ for this is

\begin{center}
\begin{tikzpicture}
  \node[initial, initial text=, initial left, state, accepting] (0) {\phantom{x}};
  \node[state, right=of 0] (1) {\phantom{x}};

  \draw[->, loop above] (0) to node {$M$} (0);
  \draw[->, above] (0) to node {$H$} (1);
  \draw[->, loop above] (1) to node {$M, H$} (2);
\end{tikzpicture}
\end{center}
where $M$ are all pairs $\tuple{\sigma, U}$ with $\sigma \notin U$ while $H$ are all pairs with $\sigma \in U$. One can quickly verify that here $\set{1}^{*}\subseteq \Inductive$ and, thus, $\mathit{Reach} = 0^{*} = \Ind{\interpretation}$ as there is no $1$ in any configuration -- a fact that is true initially, and, since then there is no transition applicable, throughout every step.

In the rest of the section, we prove that $\Ind{\interpretation}$ is an effectively regular set of configurations for any interpreter $\interpretation$.  As a first step, 
we show that $\Inductive$ is a regular set of representations. This follows immediately from the following proposition, proving that the
complement of  $\Inductive$ is regular.

\begin{lem}
 \label{lem:inv-inductive-inv}
 Let $\rts = \tuple{\Sigma, \initial, \transitions}$ be an RTS where $\transitions$ has $n_\transitions$ states, and let $\interpretation$
 be an interpreter with $n_\interpretation$ states.  One can effectively compute an NFA with at most $n_{\mathcal{T}} \cdot n_{\interpretation}^2$ states recognizing the set of representations $\overline{\Inductive}$.
\end{lem}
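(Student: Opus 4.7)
\medskip

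The plan is to give a direct product-construction. A representation $W$ fails to be inductive precisely when there exist configurations $u, w$ such that $u \leadsto w$, $u \models_\interpretation W$, but $w \not\models_\interpretation W$. Equivalently, $W \in \overline{\Inductive}$ if{}f there exist $u, w \in \Sigma^*$ of the same length as $W$ for which simultaneously: (i) $\tuple{u,w} \in \rel{\transitions}$, (ii) $\tuple{u, W} \in \lang{\interpretation}$, and (iii) $\tuple{w, W} \notin \lang{\interpretation}$.

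I would build an NFA $\nfa{N}$ over $\Gamma$ that, while scanning $W$ letter by letter, nondeterministically guesses the letters of $u$ and $w$ in parallel and tracks three computations at once: the run of $\transitions$ on $\tuple{u,w}$, the run of $\interpretation$ on $\tuple{u,W}$, and the run of $\interpretation$ on $\tuple{w,W}$. Formally, the state space is $Q_\transitions \times Q_\interpretation \times Q_\interpretation$, where the first coordinate tracks $\transitions$, the second tracks $\interpretation$ applied to the pair (guessed $u$, input $W$), and the third tracks $\interpretation$ applied to (guessed $w$, input $W$). The initial state is the tuple of initial states. Upon reading $W[i] \in \Gamma$ in state $(p, q_1, q_2)$, $\nfa{N}$ nondeterministically picks $a, b \in \Sigma$ (the guessed letters $u[i]$ and $w[i]$), moves $\transitions$ on $\tuple{a,b}$ to some $p'$, and applies the deterministic transition function of $\interpretation$ on $\tuple{a, W[i]}$ and $\tuple{b, W[i]}$ to obtain $q_1'$ and $q_2'$. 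A state $(p, q_1, q_2)$ is accepting if{}f $p \in F_\transitions$, $q_1 \in F_\interpretation$, and $q_2 \notin F_\interpretation$.

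By construction, $\nfa{N}$ accepts $W$ if{}f witnesses $u, w$ for (i)--(iii) exist, so $\lang{\nfa{N}} = \overline{\Inductive}$. The number of states is $n_\transitions \cdot n_\interpretation^2$, as claimed.

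The one subtle point—and the reason the lemma stipulates that $\interpretation$ is \emph{deterministic}—is condition (iii), which asks that $\tuple{w, W}$ be \emph{rejected} by $\interpretation$. Because $\interpretation$ is deterministic, once the guesses of the $w[i]$'s are fixed, the run of $\interpretation$ on $\tuple{w,W}$ is uniquely determined; checking ``state is non-accepting at the end'' costs nothing extra and keeps the $n_\interpretation$ factor linear on the third coordinate. If $\interpretation$ were nondeterministic we would have to complement it first, paying a potential exponential blow-up and breaking the promised bound. With determinism in hand, the construction above is the complete argument; the remaining work is just the routine verification of the language equality $\lang{\nfa{N}} = \overline{\Inductive}$, which follows by unpacking the definitions of the three tracked runs.
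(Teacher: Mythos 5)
Your proposal is correct and is essentially the paper's own construction: the paper expresses $\overline{\Inductive}$ as the first projection of $\left( \varrel{\interpretation}^{-1} \circ \rel{\transitions} \circ \comp{\varrel{\interpretation}} \right) \cap \mathit{Id}_\Gamma$ and invokes the generic product constructions of Proposition~\ref{prop:transducerops}, which unfolds to exactly your three-track product automaton on $Q_\transitions \times Q_\interpretation \times Q_\interpretation$ with the guessed letters $u[i], w[i]$ playing the role of the joined intermediate components. Your observation about why determinism of $\interpretation$ is essential (complementation by flipping final states, without blow-up) is precisely the point the paper relies on as well.
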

\begin{proof}
  Let $\mathit{Id}_\Gamma = \{ \tuple{W, W} \mid W \in \Gamma^*\}$ be the identity relation on $\Gamma$. By definition of $\Inductive$, we have
  \begin{align*}
  & \overline{\Inductive}  \\
   = &  \set{W\in\Gamma^{*}\mid \exists u , w \in \Sigma^* \text{ s.t. } u \leadsto w, u \models_{\interpretation} W \text{ and } w\not\models_{\interpretation} W } \\
   = & \set{W\in\Gamma^{*}\mid \exists u , w \in \Sigma^* \text{ s.t. } \tuple{u, w} \in \varrel{\transitions}, \tuple{u, W} \in \varrel{\interpretation}, \tuple{w, W} \in \comp{\varrel{\interpretation}} }   \\
   = & \set{W\in\Gamma^{*}\mid \exists u , w \in \Sigma^* \text{ s.t. } \tuple{W,u} \in {\varrel{\interpretation}}^{-1}, \tuple{u, w} \in \rel{T}, \tuple{w, W} \in \comp{\varrel{\interpretation}} }  \\
   = & \set{   W\in\Gamma^{*}  \mid   \tuple{W, W} \in \varrel{\interpretation}^{-1} \circ \rel{T} \circ \comp{\varrel{\interpretation}}   }  \\
   = & \projg{ \left( \big({\varrel{\interpretation}}^{-1} \circ \rel{T} \circ \comp{\varrel{\interpretation}} \big) \cap \mathit{Id}_\Gamma \right) }{1} .
  \end{align*}
  Since $\interpretation$ is deterministic, there are transducers for $\rel{\interpretation}$ and $\compl{\rel{\interpretation}}$ with $\O(n_{\interpretation})$ states. Applying 
  \autoref{prop:transducerops}(1) twice and \autoref{prop:transducerops}(3), we obtain an NFA for $\comp{\Inductive} $ with at most $n_{\mathcal{T}} \cdot n_{\interpretation}^2$ states.
\end{proof}
\noindent 
Using standard automata constructions, we get immediately:
\begin{lem}
 \label{lem:inv-inductive}
 Let $\rts = \tuple{\Sigma, \initial, \transitions}$ be an RTS where $\transitions$ has $n_\transitions$ states, and let $\interpretation$
  be an interpreter with $n_\interpretation$ states.  One can effectively compute an DFA with at most $2^{n_{\mathcal{T}} \cdot n_{\interpretation}^2}$ states recognizing the set of representations $\Inductive$.
\end{lem}

\noindent 
We prove that the potential reachability relation is effectively regular, that is, the relation is recognized by a transducer that can be effectively constructed from 
$\mathcal{T}$ and $\interpretation$. Again, we show that its complement is regular.

\begin{lem}
 \label{thm:regular-abstraction-inv}
 Let $\rts = \tuple{\Sigma, \mathcal{I}, \mathcal{T}}$ be a regular transition system with $n_{\mathcal{T}}$ states, and let $\interpretation$ be an interpreter  with  $n_{\interpretation}$ states. One can effectively compute a nondeterministic transducer  with at most $n_{\interpretation}^2 \cdot 2^{n_{\mathcal{T}} \cdot n_{\interpretation}^2}$ states recognizing $\compl{\prelint}$.
\end{lem}
\begin{proof}
By definition, we have
  \begin{align*}
    & \compl{\prelint}  \\
    = &  \set{\tuple{u, w} \in (\Sigma \times \Sigma)^*  \mid \exists W \in \Inductive \text{ s.t. } u \models_\interpretation W  \text{ and }  w \not\models_\interpretation W} \\
    = &  \set{\tuple{u, w} \in (\Sigma \times \Sigma)^* \mid \exists W \in \Inductive \text{ s.t. } \tuple{u, W} \in \varrel{\interpretation} \text{ and }  \tuple{W, w} \in \comp{\varrel{\interpretation}^{-1}} }   .
\end{align*}
Let $S = \{ \tuple{W,W} \mid W \in \Inductive \}$. We then have $\compl{\prelint} = \big({\varrel{\interpretation}} \circ S \circ \compl{\varrel{\interpretation}^{-1}} \big)$.
        Since $\interpretation$ is deterministic, there is a transducer for $\compl{\varrel{\interpretation}^{-1}}$ with $\O(n_{\interpretation})$ states. 
  Further, by \autoref{lem:inv-inductive-inv} there is a DFA for $\Inductive$ with at most $2^{n_{\mathcal{T}} \cdot  n_{\interpretation}^2}$ states, and so a transducer for
  $S$ with the same number of states. Applying \autoref{prop:transducerops}, we can construct a nondeterministic transducer for $\compl{\prelint}$ with at most
  $n_{\interpretation}^2 \cdot  2^{n_{\mathcal{T}} \cdot  n_{\interpretation}^2}$  states.
\end{proof}
\noindent 
Again, we get:
\begin{lem}
 \label{thm:regular-abstraction}
  Let $\rts = \tuple{\Sigma, \mathcal{I}, \mathcal{T}}$ be a regular transition system with $n_{\mathcal{T}}$ states, and let $\interpretation$ be an interpreter  with  $n_{\interpretation}$ states. One can effectively compute a deterministic transducer  with at most $2^{2^{\log (n_{\interpretation}^2) \cdot n_{\mathcal{T}} \cdot n_{\interpretation}^2}}$ states recognizing $\prelint$.
\end{lem}

\noindent 
We combine the previous results to show that, given an RTS $\mathcal{R}$ and an interpretation $\interpretation$, the
set $\Ind{\interpretation}$ of potentially reachable configurations is recognized by an NFA with double exponentially many states in $\mathcal{T}$ and $\interpretation$.
\begin{thm}
\label{thm:doubexp}
Let $\mathcal{R} = \tuple{\Sigma, \initial, \transitions}$ be an RTS, and let $\interpretation$ be an interpretation. Let
$n_I$, $n_T$, and $n_\interpretation$ be the number of states of $\initial$, $\transitions$, and $\interpretation$, respectively. 
Then $\Ind{\interpretation}$ is recognized by an NFA with at most $n_\mathcal{I} \cdot 2^{2^{\log (n_{\interpretation}^2) \cdot n_{\mathcal{T}} \cdot n_{\interpretation}^2}}$ states.
\end{thm}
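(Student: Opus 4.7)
The plan is to express $\Ind{\interpretation}$ as the post-image of $\lang{\initial}$ under the regular relation $\prelint$, and then to assemble the required NFA by chaining three automata-theoretic constructions whose state-count overheads we can control: (i) the construction of $\compl{\prelint}$ from Lemma~\ref{thm:regular-abstraction}, (ii) a subset construction to obtain a deterministic transducer for $\prelint$ itself, and (iii) the post-image construction of Proposition~\ref{prop:transducerops}(2).

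First I would unfold the definition: $w \in \Ind{\interpretation}$ if{}f there exists an initial configuration $u$ with $u \prelint w$, which is exactly the defining condition of the post-image $\lang{\initial} \circ \prelint$. So it suffices to produce a transducer for the relation $\prelint$ and then apply Proposition~\ref{prop:transducerops}(2) using the NFA $\initial$.

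Next, Lemma~\ref{thm:regular-abstraction} already yields a nondeterministic transducer $T_{\comp{\prelint}}$ for the \emph{complement} $\compl{\prelint}$ with at most $N := n_{\interpretation}^2 \cdot 2^{n_{\mathcal{T}} \cdot n_{\interpretation}^2}$ states. Viewing $T_{\comp{\prelint}}$ as an NFA over the product alphabet $\Sigma \times \Sigma$, I would apply the standard subset construction to obtain a deterministic transducer with at most $2^{N}$ states, then swap accepting and non-accepting states to obtain a deterministic transducer $T_{\prelint}$ for $\prelint$ with the same state count $2^{N}$. Finally, applying Proposition~\ref{prop:transducerops}(2) to $\initial$ and $T_{\prelint}$ produces a NFA for $\Ind{\interpretation}=\lang{\initial} \circ \prelint$ whose state count is at most $n_{\mathcal{I}} \cdot 2^{N}$.

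What remains is the arithmetic matching $n_{\mathcal{I}} \cdot 2^{N}$ against the stated bound $n_{\mathcal{I}} \cdot 2^{2^{2 n_{\mathcal{T}} \log n_{\mathcal{T}} \cdot n_{\interpretation}^2}}$. For this I would verify
\begin{equation*}
  N \;=\; n_{\interpretation}^2 \cdot 2^{n_{\mathcal{T}} \cdot n_{\interpretation}^2} \;=\; 2^{\,2\log n_{\interpretation}\, +\, n_{\mathcal{T}} \cdot n_{\interpretation}^2} \;\leq\; 2^{\,2 n_{\mathcal{T}} \log n_{\mathcal{T}} \cdot n_{\interpretation}^2},
\end{equation*}
which holds for all non-trivial $n_{\mathcal{T}}, n_{\interpretation} \geq 2$, since $2 n_{\mathcal{T}} \log n_{\mathcal{T}} \geq n_{\mathcal{T}} + 2\log n_{\mathcal{T}}$ and $n_{\interpretation}^2 \geq 2\log n_{\interpretation}$. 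I do not expect any serious obstacle: the potentially costly step is the determinization that turns Lemma~\ref{thm:regular-abstraction} into a deterministic transducer, and it is precisely this exponential blow-up on top of the already exponential size $N$ that produces the double-exponential bound in the statement.
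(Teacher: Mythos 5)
Your proposal is correct and follows essentially the same route as the paper: express $\Ind{\interpretation}$ as $\lang{\initial}\circ\prelint$, take the transducer for $\compl{\prelint}$ from Lemma~\ref{thm:regular-abstraction}, determinize and complement it, and finish with Proposition~\ref{prop:transducerops}(2). You even make explicit the determinization-plus-complementation step and the final arithmetic comparison, which the paper's own proof leaves implicit.
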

\begin{proof}
We have $\Ind{\interpretation}   =   \varlang{\initial} \circ (\reaches_\interpretation)$ by the definition of $\Ind{\interpretation}$.
By \autoref{thm:regular-abstraction}, $\reachesb{\interpretation}$ is recognized by a deterministic transducer  with $2^{2^{\log (n_{\interpretation}^2) \cdot n_{\mathcal{T}} \cdot n_{\interpretation}^2}}$ states.
Apply \autoref{prop:transducerops}.
\end{proof}
\noindent 
We apply \autoref{thm:doubexp} to the interpreter of $b$-formulas given in \autoref{lem:inv-inductive}, and obtain:

\begin{cor}
\label{cor:doubexp}
Let $\mathcal{R} = \tuple{\Sigma, \initial, \transitions}$ be an RTS. Let $b \geq 1$, and let $f(n_{\mathcal{T}}, b) := 2\cdot n_{\mathcal{T}}  \log n_{\mathcal{T}} \cdot 2^{b+1}$.   Then $\Ind{b}$ (\autoref{def:indb}) is recognized by a DFA with at most $n_\mathcal{I} \cdot 2^{2^{f(n_{\mathcal{T}}, b)}}$ states.
\end{cor}
\begin{proof}
By \autoref{lem:inv-inductive}, there is a deterministic transducer  with at most $2^{2^{(2 n_{\mathcal{T}} \log n_{\mathcal{T}} \cdot n_{\interpretation})}}$ states recognizing $\reachesb{\interpretation}$. So, there is a deterministic transducer with at most $2^{2^{f(n_{\mathcal{T}}, b)}}$ states for $\reachesb{b}$.
Apply \autoref{prop:transducerops}(2).
\end{proof}

\noindent 
Given an instance $\rts$, $\mathcal{U}$ of the safety verification problem and a fixed $b \geq 0$, if the set $\Ind{b}$ satisfies $\Ind{b} \cap
\lang{\mathcal{U}} = \emptyset$, then $\mathcal{R}$ is safe. By \autoref{cor:doubexp}, deciding whether $\Ind{b} \cap \lang{\mathcal{U}} =
\emptyset$ is in \textsc{EXSPACE} for every fixed $b$. Indeed, the theorem and
its proof show that there is a DFA recognizing $\Ind{b} \cap \lang{\mathcal{U}} $ such that one can guess an accepting path of it,
state by state, using exponential space. Indeed, for fixed $b$, the interpreter of \autoref{lem:satisfaction} has a constant number of states $n_\interpretation$.
Therefore, storing one state of the transducer of \autoref{thm:regular-abstraction} takes polynomial space, and storing one state of the transducer of 
\autoref{thm:doubexp} takes exponential space.  Currently, we do not know if there is a $b$ such
that the problem is \textsc{EXSPACE}-complete for every $b' \geq b$. In the next 
two sections, we show that for $b=1$ the problem is actually \textsc{PSPACE}-complete.

\section{Deciding  \texorpdfstring{$\Ind{1} \cap \lang{\mathcal{U}} = \emptyset$}{Ind\_1 ∩ U = ∅} is in \textsc{PSPACE}}
\label{sec:pspacecomp}

We give a non-deterministic polynomial-space algorithm that decides $\Ind{1} \cap \lang{\mathcal{U}} = \emptyset$.
As a byproduct, we show that $\Ind{1}$ is recognized by an NFA with a single exponential number of states.

\smallskip
As a running example for the following construction we are introducing a basic token passing protocol:
There is a line of agents.
Any single agent either holds a token ($t$) or not ($n$).
The initial language is $t \; n^*$; that is, initially there is exactly one token.
The transitions of the system allow the token to be passed down the line or, if the token already is at the last position, to be passed back to the front.
One can understand this protocol as a token passing algorithm in a circle of agents.
The transducer for the transitions looks as follows:

  \begin{center}
    \begin{tikzpicture}
      \node[state, initial, initial text=, initial left] (q0) {$q_{0}$};
      \node[state, above right=of q0] (q1) {$q_{1}$};
      \node[state, right=of q1] (q2) {$q_{2}$};
      \node[state, right=of q2, accepting] (q3) {$q_{3}$};
      \node[state, below right=of q0] (q4) {$q_{4}$};
      \node[state, right=of q4, accepting] (q5) {$q_{5}$};

      \draw[->] (q0) to node[above] {$\vtuple{n \\ n}$} (q1);
      \draw[->] (q1) to node[above] {$\vtuple{t \\ n}$} (q2);
      \draw[->, near end] (q0) to node[below] {$\vtuple{t \\ n}$} (q2);
      \draw[->] (q2) to node[above] {$\vtuple{n \\ t}$} (q3);
      \draw[->, loop above] (q1) to node[above] {$\vtuple{n \\ n}$} (q1);
      \draw[->, loop above] (q3) to node[above] {$\vtuple{n \\ n}$} (q3);
      \draw[->, loop above] (q4) to node[above] {$\vtuple{n \\ n}$} (q4);
      \draw[->] (q0) to node[above] {$\vtuple{n \\ t}$} (q4);
      \draw[->] (q4) to node[above] {$\vtuple{t \\ n}$} (q5);
    \end{tikzpicture}
  \end{center}

\smallskip
We fix an RTS $\rts = ( \Sigma, \initial, \transitions)$ for the rest of the section.
$1$-formulas have a special property: since the disjunction of two clauses is again a clause, the disjunction of two $1$-formulas is again a $1$-formula.  This allows us to define the \emph{separator} of a configuration $w$.

\begin{defi}
The \emph{separator} of a configuration $w$, denoted $\Sep{w}$, is the union of
  all inductive $1$-sets \emph{not} containing $w$.
\end{defi}

We characterize membership of $w$ in $\Ind{1}$ in terms of its separator:

\begin{lem}
\label{lem:indchar}
For every configuration $w$, its separator $\Sep{w}$ is an inductive $1$-set. Further $w \in \Ind{1}$ if{}f 
$\Sep{w}$ is not an invariant.
\end{lem}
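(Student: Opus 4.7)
The plan is to split the proof into three parts: (i) $\Sep{w}$ is itself an inductive $1$-set; (ii) the ``if'' direction, namely that if $\Sep{w}$ is not an invariant then $w \in \Ind{1}$; and (iii) the ``only if'' direction.

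For (i), arbitrary unions of inductive sets are inductive, so $\Sep{w}$ is inductive. For the $1$-set property, I would fix a length $\ell$ and observe that $\Sep{w} \cap \Sigma^\ell = \bigcup_S (S \cap \Sigma^\ell)$, where $S$ ranges over inductive $1$-sets not containing $w$. Each $S \cap \Sigma^\ell$ is the restriction to $\Sigma^\ell$ of the language of some $1$-formula (a clause) over the finite set $\mathit{AP}_\ell$, so the union at length $\ell$ is a union of finitely many distinct such languages. By the observation preceding the lemma---that the disjunction of two clauses is again a clause---this finite union is itself the language of a single $1$-formula, giving the $1$-set property at length $\ell$.

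For (ii), I would assume $\Sep{w}$ is not an invariant, pick a reachable configuration $v \notin \Sep{w}$, and choose an initial configuration $u$ with $u \reaches v$. I claim $u \reachesb{1} w$, which yields $w \in \Ind{1}$. Take any inductive $1$-formula $\varphi$ with $u \models \varphi$ and suppose for contradiction $w \not\models \varphi$. Then $\lang{\varphi}$ is an inductive $1$-set not containing $w$, so $\lang{\varphi} \subseteq \Sep{w}$, and in particular $u \in \Sep{w}$. Since $\Sep{w}$ is inductive and $u \reaches v$, we would get $v \in \Sep{w}$, contradicting the choice of $v$.

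For (iii), I would take an initial $u$ with $u \reachesb{1} w$ and suppose for contradiction that $\Sep{w}$ is an invariant. Then $u \in \Sep{w}$, so $u \in S$ for some inductive $1$-set $S$ not containing $w$. The key step is to extract a \emph{single} inductive $1$-formula witnessing the failure of $u \reachesb{1} w$: I take $\psi$ to be the $1$-formula over $\mathit{AP}_{|w|}$ such that $S \cap \Sigma^{|w|} = \lang{\psi} \cap \Sigma^{|w|}$. Immediately $w \not\models \psi$, while $u \models \psi$ (vacuously when $|u| \neq |w|$, and because $u \in S$ when $|u| = |w|$). The main obstacle is verifying that $\psi$ itself is inductive, since the $1$-formulas describing $S$ at other lengths may be different; here the length-preservation of $\transitions$ is essential. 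Any transition starting outside $\Sigma^{|w|}$ ends outside $\Sigma^{|w|}$, so $\lang{\psi}$ is vacuously closed there, while for transitions within $\Sigma^{|w|}$ one uses $\lang{\psi} \cap \Sigma^{|w|} = S \cap \Sigma^{|w|}$ together with the inductivity of $S$. Hence $\psi$ is an inductive $1$-formula satisfied by $u$ but not by $w$, contradicting $u \reachesb{1} w$.
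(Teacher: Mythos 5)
Your proof is correct and follows essentially the same route as the paper's: both parts rest on $\Sep{w}$ being the union of all inductive $1$-sets avoiding $w$, the closure of clauses under disjunction, and the resulting maximality of $\Sep{w}$ among inductive $1$-sets not containing $w$. The only differences are presentational: the paper dispatches the ``only if'' direction by citing Proposition~\ref{prop:inter}(1) (namely $\Ind{1}\subseteq S$ for every inductive $1$-invariant $S$), whereas you re-derive that containment inline and, in doing so, are more explicit than the paper about extracting a single inductive $1$-formula over $\mathit{AP}_{|w|}$ from an inductive $1$-set and verifying its inductivity via length-preservation of $\transitions$ --- a detail the paper leaves implicit.
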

\begin{proof}
Since inductive sets are closed under union, $\Sep{w}$ is inductive. Since the disjunction of two clauses is again a clause, the union of two $1$-sets of configurations is also a $1$-set, and so $\Sep{w}$ is an inductive $1$-set. For the last part, we prove that 
$w \notin \Ind{1}$ if{}f $\Sep{w}$ is an invariant. Assume first $w \notin \Ind{1}$. Then some inductive $1$-invariant does not contain $w$.
        Since, by definition, $\Sep{w}$ contains this invariant, $\Sep{w}$ is also an invariant. Assume now that $\Sep{w}$ is an invariant. Then $\Sep{w}$ is an inductive $1$-invariant, and so $\Sep{w} \supseteq \Ind{1}$. Since
$w \notin \Sep{w}$, we get $w \notin \Ind{1}$.
\end{proof}
\noindent 
Our plan for the rest of the section is as follows:
\begin{itemize}
\item We introduce the notion of a separation table for a configuration. (\autoref{def:separationtable})
\item We show that, given a configuration $w$ and a separation table $\tau$ for $w$, we can construct a $1$-formula 
$\Sepf{w}$ such that $\lang{\Sepf{w}} = \Sep{w}$. (\autoref{lem:sepchar})
\item We use this result to define a transducer $T_{sep}$ over $\Sigma \times 2^\Sigma$ 
that accepts a word $\tuple{w, \varphi}$ if{}f $\varphi = \Sepf{w}$. (\autoref{prop:transducer})
\item We use $T_{sep}$ and \autoref{prop:transducerops} to define an NFA over $\Sigma$ that accepts a configuration $w$ if{}f  $\Sep{w}$ is not an invariant, and so, by \autoref{lem:indchar}, if{}f $w \in \Ind{1}$. (\autoref{thm:autforInd1})
\end{itemize}

\noindent 
We present a characterization of $\Sep{w}$ in terms of \emph{tables}. Given a
transition $s \leadsto t$, we call $s$ and $t$ the \emph{source} and
\emph{target} of the transition, respectively.  A \emph{table} of length $\ell$
is a sequence $\tau = s_1 \leadsto t_1, \ldots, s_n \leadsto t_n$ of
transitions of $\mathcal{R}$ (not necessarily distinct), all of length $\ell$.\footnote{
We call it a table because we visualize $s_1, t_1, \ldots, s_n,t_n$ as a matrix 
with  $2n$ rows and $\ell$ columns.}
We define the \emph{separation tables} of a
configuration $w$.

\begin{defi}
\label{def:separationtable}
Let $w$ be a configuration  and let $\tau = s_1 \leadsto t_1, \ldots, s_n \leadsto t_n$ be a table, both of length $\ell$.
For every $j \in [1, \ell]$, let $\Incl{w,\tau}[j] = \{w[j], s_1[j], \ldots, s_n[j]\}$ be the set of letters at position $j$ of $w$ and of the source configurations $s_1, \ldots, s_n$ of the table.
\begin{itemize}
\item $\tau$ is \emph{consistent with $w$} if for every $i \in [1, n], j \in [1, \ell]$, either $t_i[j]=w[j]$ or $t_i[j] = s_{i'}[j]$ for some $i'<i$.\\  
(Intuitively: $\tau$ is consistent with $w$ if for every position and every target configuration of the table, the letter of the target at that position is either the letter of $w$, or the letter of some earlier source configuration, with the choice for different positions made independently.)
\item $\tau$ is \emph{complete for $w$} if every table $\tau, \; s \leadsto t$ consistent with $w$ satisfies $s[j] \in \Incl{w,\tau}[j]$ for every $j \in [1, \ell]$. \\
(Intuitively: $\tau$ is complete for $w$ if it cannot be extended by a transition that maintains consistency and introduces a new letter.)
\end{itemize}
A table is a \emph{separation table of $w$} if  it is consistent with and complete for $w$.
\end{defi}

Overall, a separation table represents the following logic.
We want an inductive 1-invariant that does not contain $w$.
We think of the words that must be outside the invariant.
As a 1-set is defined via a disjunction of atomic propositions,
effectively the question is which letters are excluded at each position.
Naturally, all the letters of $w$ are excluded for their corresponding positions.
Moreover, any transition preimage of a word constructed of excluded letters
also has to be an excluded word due to the inductive property.
Letters of such a preimage are also excluded at their corresponding positions.
Note that this means that we use the transitions backwards: we start with $w$, 
then try to add something that can reach $w$ in one step, etc.
For instance in the running example this manifests as follows:
  Consider the (reachable) configuration $n \; t \; n \; n \; n \; n$ for our running example.
  The largest statement that is not satisfied by this configuration is $\set{t} \; \set{n} \; \set{t} \; \set{t} \; \set{t} \; \set{t}$ since it contains at every position all letters but the one that is at the same position in the original configuration.
  We now demonstrate how this statement is refined to become inductive.
  For this, we show a series of statements and transitions below such that the transition refines the previous statement to the next.
  In the following table we mark statements with $\bullet$ and the refining transitions with $\triangleright$.
  Moreover, we mark in red the atoms of the statements that are removed in each step and in the transitions the reason why they are removed.

  \begin{center}
  \begin{tabular}{ccccccc}
    $\bullet$ &
    $\set{\textcolor{red}{t}}$ &
    $\set{\textcolor{red}{n}}$ &
    $\set{t}$ &
    $\set{t}$ &
    $\set{t}$ &
    $\set{t}$ \\
    $\triangleright$ &
    $\vtuple{\textcolor{red}{t} \\ n}$ &
    $\vtuple{\textcolor{red}{n} \\ t}$ &
    $\vtuple{n \\ n}$ &
    $\vtuple{n \\ n}$ &
    $\vtuple{n \\ n}$ &
    $\vtuple{n \\ n}$ \\
    $\bullet$ &
    $\emptyset$ &
    $\emptyset$ &
    $\set{t}$ &
    $\set{t}$ &
    $\set{t}$ &
    $\set{\textcolor{red}{t}}$ \\
    $\triangleright$ &
    $\vtuple{n \\ t}$ &
    $\vtuple{n \\ n}$ &
    $\vtuple{n \\ n}$ &
    $\vtuple{n \\ n}$ &
    $\vtuple{n \\ n}$ &
    $\vtuple{\textcolor{red}{t} \\ n}$ \\
    $\bullet$ &
    $\emptyset$ &
    $\emptyset$ &
    $\set{t}$ &
    $\set{t}$ &
    $\set{\textcolor{red}{t}}$ &
    $\emptyset$ \\
    $\triangleright$ &
    $\vtuple{n \\ n}$ &
    $\vtuple{n \\ n}$ &
    $\vtuple{n \\ n}$ &
    $\vtuple{n \\ n}$ &
    $\vtuple{\textcolor{red}{t} \\ n}$ &
    $\vtuple{n \\ t}$ \\
    $\bullet$ &
    $\emptyset$ &
    $\emptyset$ &
    $\set{t}$ &
    $\set{\textcolor{red}{t}}$ &
    $\emptyset$ &
    $\emptyset$ \\
    $\triangleright$ &
    $\vtuple{n \\ n}$ &
    $\vtuple{n \\ n}$ &
    $\vtuple{n \\ n}$ &
    $\vtuple{\textcolor{red}{t} \\ n}$ &
    $\vtuple{n \\ t}$ &
    $\vtuple{n \\ n}$ \\
    $\bullet$ &
    $\emptyset$ &
    $\emptyset$ &
    $\set{\textcolor{red}{t}}$ &
    $\emptyset$ &
    $\emptyset$ &
    $\emptyset$ \\
    $\triangleright$ &
    $\vtuple{n \\ n}$ &
    $\vtuple{n \\ n}$ &
    $\vtuple{\textcolor{red}{t} \\ n}$ &
    $\vtuple{n \\ t}$ &
    $\vtuple{n \\ n}$ &
    $\vtuple{n \\ n}$ \\
    $\bullet$ &
    $\emptyset$ &
    $\emptyset$ &
    $\emptyset$ &
    $\emptyset$ &
    $\emptyset$ &
    $\emptyset$ \\
  \end{tabular}
  \end{center}

Consistency with $w$ means that the table contains the transitions
implementing this approach.
Completeness means that there are no transitions left to add 
to exclude more letters.
Strictly speaking, we do not require each transition in the table
to exclude a new letter, but adding transitions without excluding 
any new letters is useless.

For our running example a separation table for the configuration $n \; t \; n \; n \; n \; n$ is
\begin{center}
  \begin{tabular}{ccccccc}
    $\triangleright$ &
    $\vtuple{t \\ n}$ &
    $\vtuple{n \\ t}$ &
    $\vtuple{n \\ n}$ &
    $\vtuple{n \\ n}$ &
    $\vtuple{n \\ n}$ &
    $\vtuple{n \\ n}$ \\
    $\triangleright$ &
    $\vtuple{n \\ t}$ &
    $\vtuple{n \\ n}$ &
    $\vtuple{n \\ n}$ &
    $\vtuple{n \\ n}$ &
    $\vtuple{n \\ n}$ &
    $\vtuple{t \\ n}$ \\
    $\triangleright$ &
    $\vtuple{n \\ n}$ &
    $\vtuple{n \\ n}$ &
    $\vtuple{n \\ n}$ &
    $\vtuple{n \\ n}$ &
    $\vtuple{t \\ n}$ &
    $\vtuple{n \\ t}$ \\
    $\triangleright$ &
    $\vtuple{n \\ n}$ &
    $\vtuple{n \\ n}$ &
    $\vtuple{n \\ n}$ &
    $\vtuple{t \\ n}$ &
    $\vtuple{n \\ t}$ &
    $\vtuple{n \\ n}$ \\
    $\triangleright$ &
    $\vtuple{n \\ n}$ &
    $\vtuple{n \\ n}$ &
    $\vtuple{t \\ n}$ &
    $\vtuple{n \\ t}$ &
    $\vtuple{n \\ n}$ &
    $\vtuple{n \\ n}$ \\
  \end{tabular}
\end{center}

Observe that every configuration $w$ has at least one separation table. If there are no transitions with target $w$, then the empty table with no transitions is a separation table. Otherwise, starting with any transition $s \leadsto w$, we repeatedly add a  transition, maintaining consistency and introducing at least one new letter until no such transition exists. Lack of such transitions implies completeness of the table. This procedure terminates---there are only finitely many transitions between configurations of a fixed length---and yields a separation table.

The next lemma shows how to compute a $1$-formula $\Sepf{w}$ such that $\lang{\Sepf{w}} = \Sep{w}$ from \emph{any} separation table $\tau$ of $w$.

\begin{lem}
\label{lem:sepchar}
Let $\tau$ be any separation table for a configuration $w$ of length $\ell$. 
Then $\Sep{w}$ is the set of all configurations $z \in \Sigma^\ell$ such that  $z[j] \notin \Incl{w,\tau}[j]$ for some $j \in [1, \ell]$. In particular, $\Sep{w}$ is the language of the $1$-formula 
        \[ \Sepf{w}:= \bigvee_{j=1}^\ell \left( \bigvee_{a \notin \Incl{w,\tau}[j]} \vari{a}{j}{\ell} \right)\]
\noindent or, in the powerword encoding, of the formula 
        \[ \Sepf{w}= \comp{\Incl{w,\tau}[1]} \, \cdots \, \comp{\Incl{w,\tau}[\ell]} \ . \]
\end{lem}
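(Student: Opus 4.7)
I will prove $\Sep{w} = \lang{\Sepf{w}}$ by two inclusions, recalling that $\Sepf{w}$ is a single disjunction of atomic propositions from $\mathit{AP}_\ell$, hence a $1$-formula. By the semantics of $\mathit{AP}_\ell$, every word of length $\neq \ell$ satisfies every literal vacuously, so the substantive claim concerns configurations of length exactly $\ell$.

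For $\lang{\Sepf{w}} \subseteq \Sep{w}$, it suffices to exhibit $\lang{\Sepf{w}}$ as an inductive $1$-set avoiding $w$. It is a $1$-set by construction; it avoids $w$ because $w[j] \in \Incl{w,\tau}[j]$ for every $j$ by definition, so $w$ satisfies no literal of $\Sepf{w}$. For inductivity, let $s \leadsto t$ with $s \models \Sepf{w}$. The case $|s| \neq \ell$ is trivial, so assume $|s| = \ell$ and pick $j$ with $s[j] \notin \Incl{w,\tau}[j]$. Here I invoke completeness of $\tau$: if we had $t[j'] \in \Incl{w,\tau}[j']$ for every $j'$, the extended table $\tau \, (s \leadsto t)$ would be consistent with $w$, which by completeness would force $s[j'] \in \Incl{w,\tau}[j']$ for every $j'$, contradicting the choice of $j$. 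Hence some $j'$ witnesses $t[j'] \notin \Incl{w,\tau}[j']$, i.e., $t \models \Sepf{w}$.

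For $\Sep{w} \subseteq \lang{\Sepf{w}}$, I show $S \subseteq \lang{\Sepf{w}}$ for every inductive $1$-set $S$ with $w \notin S$. Fix $z \in S$ and assume $|z| = \ell$ (else $z \models \Sepf{w}$ trivially). The $1$-set hypothesis gives $S \cap \Sigma^\ell = \lang{\psi} \cap \Sigma^\ell$ for some $1$-formula $\psi$ over $\mathit{AP}_\ell$; since $z \in S \cap \Sigma^\ell$ and $w \notin S$ with $|w| = \ell$, $\psi$ must be a nonempty clause $\bigvee_k \vari{a_k}{j_k}{\ell}$. From $z \models \psi$ we obtain $z[j_{k_0}] = a_{k_0}$ for some $k_0$, and from $w \not\models \psi$ we have $w[j_k] \neq a_k$ for every $k$. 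It therefore suffices to prove $s_i[j_{k_0}] \neq a_{k_0}$ for every $i$, as this combined with $w[j_{k_0}] \neq a_{k_0}$ yields $a_{k_0} \notin \Incl{w,\tau}[j_{k_0}]$ and hence $z \models \Sepf{w}$.

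The core step, and the main obstacle, is a strong induction on $i \in [1, n]$ showing $s_i \notin S$ (from which $s_i \not\models \psi$ and hence $s_i[j_{k_0}] \neq a_{k_0}$). If $s_i \in S$, then inductivity of $S$ yields $t_i \in S$, so $t_i \models \psi$ and $t_i[j_k] = a_k$ for some $k$. By consistency of $\tau$, $t_i[j_k]$ equals either $w[j_k]$---impossible since $w \not\models \psi$---or $s_{i'}[j_k]$ for some $i' < i$. In the latter case $s_{i'} \models \psi$ gives $s_{i'} \in S$, contradicting the induction hypothesis; in the base case $i = 1$ the second alternative is vacuous, so only the impossible first alternative remains. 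The difficulty is that the $1$-set hypothesis on $S$ is per-length and a priori decoupled from $\tau$; tying the two together requires simultaneously exploiting consistency (to trap target letters inside $\Incl{w,\tau}$), inductivity of $S$ (to propagate membership from sources to targets), and the strict ordering on $i$ provided by the table (to close the induction without circularity).
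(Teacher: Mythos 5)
Your proof is correct and follows essentially the same route as the paper's: the easy inclusion is established exactly as in the paper by showing $\lang{\Sepf{w}}$ is an inductive $1$-set avoiding $w$ via completeness of $\tau$, and the hard inclusion rests on the same induction along the rows of $\tau$, combining consistency (to force target letters back onto $w$ or an earlier source) with inductivity of $S$. The only cosmetic difference is that you phrase the inductive claim contrapositively ($w \notin S$ implies $s_i \notin S$ for all $i$) whereas the paper proves the equivalent statement that any inductive $1$-set whose clause mentions a letter of $\Incl{w,\tau_k}[j]$ must contain $w$, by induction on the table prefixes $\tau_k$.
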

\noindent 
Before we prove this lemma, observe that the separation table for our running example from before would give $\Sepf{n \; t \; n \; n \; n \; n}$ as the powerword $\emptyset \; \emptyset \; \emptyset \; \emptyset \; \emptyset \; \emptyset$:
In particular, the first letter is $\emptyset$ because it does not contain $n$ as the configuration starts with the letter $n$ but it also does not contain $t$ as the first transition ``removes'' it.

\begin{proof}
        We claim that $\Sepf{w}$ denotes an inductive $1$-set not containing $w$. That $\Sepf{w}$ denotes a $1$-set
        not containing $w$ follows immediately from the definition. To see that $\Sepf{w}$ denotes an inductive set, assume there is a transition $s \leadsto t$ such that $s \models \Sepf{w}$ and $t \not\models \Sepf{w}$.
        Since $t \not\models \Sepf{w}$, 
        all the letters of $t$ are excluded at their corresponding positions, 
        thus the transition $s\leadsto t$ can be added to the table $\tau$ yielding
        the table $\tau, s\leadsto t$ consistent with $w$. Since $s \models\Sepf{w}$, $\tau$ is not complete, contradicting the assumption. 

        Let us now prove that $\Sepf{w}$ denotes the \emph{largest} inductive $1$-set
not containing $w$.
Using the powerword encoding, it is enough to prove that for every position $j$
        and every letter  $x\in\Incl{w,\tau}[j]$,
each inductive 1-set specified by a powerword containing $x$ at position $j$
contains the word $w$.

Consider the prefixes of $\tau$: the empty sequence $\tau_0$,
        and the sequences $\tau_1 = s_1\leadsto{}t_1$,
        $\tau_2=s_1\leadsto{}t_1, s_2\leadsto{}t_2$, up to 
        $\tau_n=\tau=s_1 \leadsto t_1, \ldots, s_n \leadsto t_n$. All of them are tables consistent with $w$.
        By construction, for every $j$, we have 
        $\Incl{w,\tau_0}[j]\subseteq\Incl{w,\tau_1}[j]\subseteq\ldots\subseteq\Incl{w,\tau_n}[j]$.
        We prove by induction on $k$ that, for each position $j$
        and each letter   $x\in\Incl{w,\tau_k}[j]$,
        each inductive 1-set specified by a powerword 
        containing $x$ at position $j$
        contains the word $w$.
        The base is obvious as $x\in\Incl{w,\tau_0}[j]$
        means $x=w[j]$.

        To prove the induction step,
        consider some index $k$, position $j$, 
        and letter $x\in\Incl{w,\tau_k}[j]\setminus\Incl{w,\tau_{k-1}}$.
        By definition of $\Incl{w,\tau_k}$, this means that 
        $x=s_k[j]$ 
        (recall that $s_k\leadsto{}t_k$ is the last transition in $\tau_k$,
        and this transition is not present in $\tau_{k-1}$).
        Consider any inductive 1-set $S$ containing the language
        of $x_{j:\ell}$.
        As $s_k\in{}S$ and $S$ is inductive, $t_k\in{}S$ holds.
        Hence, for some $j'$, the 1-set $S$ contains the language 
        of the atomic proposition $t_k[j']_{j':\ell}$.
        By consistency of $\tau$, 
        the letter $t_k[j']$ is in the set $\Incl{w,\tau_{k-1}}[j]$.
        Thus consider $x'=t_k[j']$ and apply the induction
        hypothesis to $x'$ at position $j'$.
        We obtain $w\in{}S$.

        We have proven that $\Sepf{w}$ denotes a maximal inductive $1$-set
        not containing $w$, i.e. $\Sep{w}$.
\end{proof}
\noindent 
We construct a transducer over the alphabet $\Sigma \times 2^\Sigma$ that transduces a configuration $w$ into the formula $\Sepf{w}$ of a table $\tau$ consistent with and complete for $w$. For this, we need the consistency and completeness summaries of a table.

The basic motivation is: locally, there is only a polynomial-size list of options
how a transition can look like, and we can always reuse a fragment of a transition
with a different continuation as long as transducer states match. 
So what we most care about is which transitions of the RTS-transition transducer
become available at each position, and in which order.

To certify completeness, 
we need to check a claim about all transitions accepted by the transition transducer; 
we apply the subset construction and record its execution.
We keep track of two sets of states: reachable at all, and reachable while 
also adding at least one new excluded letter.

\begin{defi}
\label{def:summaries}
Let $\tau = s_1 \leadsto t_1, \ldots, s_n \leadsto t_n$ be a separation table for a configuration $w$. 
The \emph{consistency summary} is the result of applying the following procedure to $\tau$:
\begin{itemize}
\item Replace each pair of rows $s_i \leadsto t_i$ by a sequence of states providing an accepting run of $\transitions$ on it.\\
(This produces a table with $n$ rows and $\ell+1$ columns, whose entries are states of $\transitions$.)
\item In each column, keep the first occurrence of each state, removing the rest.\\
(The result is a sequence of columns; the columns of the sequence may have possibly different lengths.)
\end{itemize}
\noindent 
The \emph{completeness summary} is the sequence $(Q_0, Q_0'), (Q_1, Q_1') \ldots (Q_\ell, Q_\ell')$ of pairs of sets
of states of $\transitions$, defined inductively as follows for every $j \in [0,\ell]$: 
\begin{itemize}
        \item $Q_0$ is the set of initial states and $Q_0'$ is empty.
\item $Q_{j+1}$ is the set of states reachable from $Q_j$ by means of letters $[a, b]$ such that $b \in \Incl{w,\tau}$.
\item $Q'_{j+1}$ is the set of states reachable from $Q'_j$ by means of letters $[a, b]$ such that $b \in \Incl{w,\tau}$, or reachable from $Q_j$ by means of letters  $[a, b]$ such that $a \notin \Incl{w,\tau}$ and $b \in \Incl{w,\tau}$.
\end{itemize}
\end{defi}
\noindent 
To get to the consistency summary of our running example, we start with the accepting runs of the transitions in the separation table and get

  \begin{center}
  \begin{tabular}{cccccccc}
    $\triangleright$ &
    $q_{0}$ &
    $q_{2}$ &
    $q_{3}$ &
    $q_{3}$ &
    $q_{3}$ &
    $q_{3}$ &
    $q_{3}$ \\
    $\triangleright$ &
    $q_{0}$ &
    $q_{4}$ &
    $q_{4}$ &
    $q_{4}$ &
    $q_{4}$ &
    $q_{4}$ &
    $q_{5}$ \\
    $\triangleright$ &
    $q_{0}$ &
    $q_{1}$ &
    $q_{1}$ &
    $q_{1}$ &
    $q_{1}$ &
    $q_{2}$ &
    $q_{3}$ \\
    $\triangleright$ &
    $q_{0}$ &
    $q_{1}$ &
    $q_{1}$ &
    $q_{1}$ &
    $q_{2}$ &
    $q_{3}$ &
    $q_{3}$ \\
    $\triangleright$ &
    $q_{0}$ &
    $q_{1}$ &
    $q_{1}$ &
    $q_{2}$ &
    $q_{3}$ &
    $q_{3}$ &
    $q_{3}$ \\
  \end{tabular}
  \end{center}
\noindent 
By removing all but the first occurrence of each state in each column we get

  \begin{center}
  \begin{tabular}{cccccccc}
    $\triangleright$ &
    $q_{0}$ &
    $q_{2}$ &
    $q_{3}$ &
    $q_{3}$ &
    $q_{3}$ &
    $q_{3}$ &
    $q_{3}$ \\
    $\triangleright$ &
     &
    $q_{4}$ &
    $q_{4}$ &
    $q_{4}$ &
    $q_{4}$ &
    $q_{4}$ &
    $q_{5}$ \\
    $\triangleright$ &
     &
    $q_{1}$ &
    $q_{1}$ &
    $q_{1}$ &
    $q_{1}$ &
    $q_{2}$ &
     \\
    $\triangleright$ &
     &
     &
     &
     &
    $q_{2}$ &
     &
     \\
    $\triangleright$ &
     &
     &
     &
    $q_{2}$ &
     &
     &
     \\
  \end{tabular}
  \end{center}
\noindent 
Since the statement that arises from this table is $\emptyset \; \emptyset \; \emptyset \; \emptyset \; \emptyset \; \emptyset$ the completeness summary becomes
  \begin{equation*}
    \tuple{\set{q_0}, \emptyset} \; \tuple{\set{q_1, q_2, q_4}, \emptyset} \; \tuple{Q'', \emptyset} \tuple{Q'', \emptyset} \; \tuple{Q'', \emptyset} \; \tuple{Q'', \emptyset}
  \end{equation*}
  where $Q'' = \set{q_1, q_2, q_3, q_4, q_5}$ immediately.

Observe that the consistency summary  is a sequence $\alpha =\alpha[1] \ldots
\alpha[\ell]$ where $\alpha[i]$ is a sequence of \emph{distinct} states of
$\transitions$, i.e., an element of $Q_T^{n_T}$, and the completeness summary
is a sequence $\beta = \beta[1] \ldots \beta[\ell]$ where $\beta[i]$ is a pair
of sets  of states of $\transitions$, i.e.,  an element of $2^{Q_T} \times
2^{Q_T}$.
We prove:

\begin{prop}
\label{prop:transducer}
There exists a transducer $T_{sep}$ over the alphabet $\Sigma \times 2^\Sigma$ satisfying the following properties:
\begin{itemize}
        \item The states of $T_{sep}$ are elements of $(Q_T\cup\{\square\})^{n_T} \times (2^{Q_T} \times 2^{Q_T})$ where $n_T$ is the number of states of $\transitions$.
\item There is a polynomial time algorithm that, given two states $q, q'$ of $T_{sep}$ and a letter $\tuple{a,X} \in \Sigma \times 2^\Sigma$ decides whether the triple $(q, \tuple{a,X}, q')$ is a transition of $T_{sep}$.
\item $T_{sep}$ recognizes a word $\tuple{w, \varphi}$ over $\Sigma \times 2^\Sigma$ if{}f $\varphi = \Sepf{w}$.
\end{itemize}
\end{prop}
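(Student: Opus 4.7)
The idea is to construct $T_{sep}$ so that an accepting run on $\tuple{w, \varphi}$ non-deterministically guesses a separation table $\tau$ of $w$ and simultaneously verifies, via the consistency and completeness summaries of Definition~\ref{def:summaries}, that $\varphi$ coincides with the powerword $\comp{\Incl{w,\tau}[1]} \cdots \comp{\Incl{w,\tau}[\ell]}$; by Lemma~\ref{lem:sepchar} this is equivalent to $\varphi = \Sepf{w}$. The three coordinates of a state $(\alpha, Q, Q')$ of $T_{sep}$ will track, respectively, a column of the consistency summary (a sequence of at most $n_T$ distinct states of $\transitions$, $\square$-padded to length $n_T$) and the pair $(Q_j, Q_j')$ of the completeness summary.

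Concretely, I would take as initial states the triples $(\alpha_0, Q_{0T}, \emptyset)$ where $\alpha_0$ is any padded sequence of distinct initial states of $\transitions$, and as final states those $(\alpha_\ell, Q, Q')$ with every non-$\square$ entry of $\alpha_\ell$ in $F_T$ and $Q' \cap F_T = \emptyset$. A triple $((\alpha, Q, Q'), \tuple{a, X}, (\alpha', Q'', Q'''))$ is declared a transition iff (a)~$a \in \comp{X}$, (b)~$Q''$ and $Q'''$ arise from $(Q, Q')$ by the deterministic update of Definition~\ref{def:summaries} with $\comp{X}$ in the role of $\Incl{w,\tau}$, and (c)~for each non-$\square$ state $p$ of $\alpha$ one can pick a non-empty set $T_p \subseteq \Delta_T$ of transitions $(p, \tuple{b, c}, p')$ with $b, c \in \comp{X}$, such that the deduplicated sequence of their targets (in order of first occurrence) equals $\alpha'$ and the sources, together with $a$, cover $\comp{X}$. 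Clauses (a) and (b) are plainly polynomial; clause (c) decomposes into independent polynomial-time existence checks---an admissible covering transition for every letter in $\comp{X}$, an admissible incoming transition for every non-$\square$ slot of $\alpha'$, an admissible outgoing transition for every non-$\square$ entry of $\alpha$---plus a direct verification of the first-occurrence order, all polynomial in $|\transitions|$, $|\Sigma|$ and $n_T$.

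Correctness is proved in both directions. If $\varphi = \Sepf{w}$, Lemma~\ref{lem:sepchar} supplies a separation table $\tau$ of $w$ with $\Incl{w,\tau}[j] = \comp{\varphi[j]}$ for all $j$; the consistency and completeness summaries of $\tau$ then visibly form an accepting run of $T_{sep}$, in which the $T_p$'s at column $j{+}1$ are the actual $\transitions$-transitions taken by the rows of $\tau$, coverage holds because $\{w[j]\} \cup \bigcup_i \{s_i[j]\} = \comp{\varphi[j]}$, the condition on $\alpha_\ell$ reflects that each row is an accepting $\transitions$-run, and $Q_\ell' \cap F_T = \emptyset$ encodes completeness of $\tau$. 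Conversely, given an accepting run I would build $\tau$ by \emph{threading} through the guessed $T_p^{(j)}$'s: every such thread is an accepting $\transitions$-run and hence an element of $\rel{\transitions}$. The restriction of sources and targets to $\comp{X_j}$ gives $\Incl{w,\tau}[j] \subseteq \comp{\varphi[j]}$, while the coverage clause gives the reverse inclusion, so $\varphi = \comp{\Incl{w,\tau}} = \Sepf{w}$. Unordered consistency of $\tau$ is then immediate and ordered consistency follows by a topological sort of rows; completeness of $\tau$ follows from $Q_\ell' \cap F_T = \emptyset$, because any extension $\tau\,(s \leadsto t)$ keeping $t \in \comp{\varphi}$ while some $s[j] \in \varphi[j]$ would yield an accepting $\transitions$-run witnessing $Q_\ell' \cap F_T \neq \emptyset$.

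The main obstacle will be the coverage clause in (c): without it, $T_{sep}$ would accept any $\varphi$ whose complement is merely \emph{some} backward-closed box containing $w$ rather than the \emph{least} such, giving false positives (for instance, $\varphi$ equal to the empty powerword whenever the language of $\transitions$ is non-empty). The clause therefore has to force the equality $\Incl{w,\tau} = \comp{\varphi}$ pointwise while remaining polynomial-time checkable; this works because per-column coverage can be phrased as a static condition on the sets $T_p$ and the bounds $|\alpha|, |\alpha'| \leq n_T$ make the associated existence checks decompose into polynomially many simple ones.
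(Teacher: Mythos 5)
Your architecture matches the paper's: the same state space $(Q_T\cup\{\square\})^{n_T}\times(2^{Q_T}\times 2^{Q_T})$, the same initial/final conditions, and the same two-direction correctness argument via Lemma~\ref{lem:sepchar}. However, your transition condition (c) has a genuine gap: it only requires that all sources and targets of the chosen transitions lie in $\comp{X}$ and that the sources together with $w[j]$ cover $\comp{X}$. This is an \emph{unordered} condition, whereas consistency (Definition~\ref{def:separationtable}, restricted to one position) is inherently well-founded: a letter may appear as a target only if it equals $w[j]$ or has \emph{already} appeared as a source earlier in the table. Your claim that ``ordered consistency follows by a topological sort of rows'' fails. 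Concretely, suppose at position $j$ the only available $\transitions$-transitions carry the letter pairs $\tuple{a,b}$ and $\tuple{b,a}$ with $w[j]=c\notin\{a,b\}$. Your clause (c) accepts $\comp{X}=\{a,b,c\}$ (sources $\{a,b\}$ plus $c$ cover it, all targets lie in it), but no ordering of these two transitions is consistent: whichever goes first has a target that is neither $c$ nor an earlier source. The true $\Incl{w,\tau}[j]$ is just $\{c\}$, so your $T_{sep}$ accepts a $\varphi$ strictly smaller than $\Sepf{w}$ --- a false positive of exactly the kind you were worried about, just caused by missing the ordering rather than missing coverage. The paper avoids this with the greedy saturation algorithm of Lemma~\ref{lem:localsep}, which only ever adds a transition whose target letter is \emph{currently} permissible, so the set of permissible letters is the genuine least fixed point.

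A second, smaller omission: in the converse direction you ``thread'' the per-column transition sets into rows of a table, but the number of transitions chosen at adjacent columns can differ, and only the deduplicated first-appearance lists of states are required to agree across the column boundary. Turning this into actual accepting runs of $\transitions$ requires duplicating rows while preserving the first-appearance order (which itself matters for consistency); this is the content of the paper's Definition~\ref{def:fragment} and Lemma~\ref{lem:fragment}, whose proof is not immediate. Relatedly, decomposing your clause (c) into independent existence checks (one per letter of $\comp{X}$, one per slot of $\alpha$ and $\alpha'$) does not obviously certify the existence of a \emph{single} sequence of transitions realizing all constraints with the right first-occurrence orders simultaneously; the paper's single greedy procedure handles these joint constraints and is proved correct by an exchange argument. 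To repair your proof you would essentially have to replace clause (c) by ``the algorithm of Lemma~\ref{lem:localsep} succeeds on input $(w[j],\alpha,\alpha')$ and returns permissible-letter set $\comp{X}$,'' which is the paper's construction.
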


\begin{proof}
The proof is long. We introduce \emph{local separation refinements} in \autoref{def:localsep}, and give an algorithm for computing them in \autoref{lem:localsep}. We then show that they can be combined into increasingly longer \emph{fragment separation refinements } (\autoref{def:fragment} and \autoref{lem:fragment}).
Equipped with these results, we then prove the Proposition. 

\begin{defi}
\label{def:localsep}
        A \emph{local separation refinement} 
        for a letter $c \in \Sigma$ is a sequence of transitions $\lambda=(q_1,(a_1,b_1),q_1')$,$\ldots$,$(q_n,(a_n,b_n),q_n')$ of $\transitions$
        such that for every $i \in [1, n]$, either $b_i=c$ or $b_i=a_{i'}$ for some $i'<i$. 
  The states $q_1, \ldots, q_n$ and $q_1', \ldots,q_n'$ are called \emph{incoming} and \emph{outgoing} states, respectively. The \emph{first-appearance lists} of $\lambda$ are the result of removing from $q_1, \ldots, q_n$ and $q_1', \ldots,q_n'$ all elements $q_i$ ($q_i'$) such that $q_{k} = q_i$ ($q_k' = q_i'$) for some $k < i$.
\end{defi}
Observe that the condition on $b_i$ corresponds to consistency with the configuration (restricted to a single position).

	\begin{lem}
	\label{lem:localsep}
                There is a polynomial time
                algorithm taking 
                an RTS $\mathcal{R} = \tuple{\Sigma, \initial, \transitions}$,
                a letter $c\in\Sigma$,
                and
                two lists of
                distinct
                states of $\transitions$
                as input.
                The algorithm outputs
                a local separation refinement for $c$ whose first-appearance lists are equal to the input lists.
                The algorithm produces empty output
                if no such refinement exists.
	\end{lem}

\begin{proof}
        The algorithm starts with the empty sequence of transitions, and repeatedly adds transitions until the local separation refinement is constructed. The algorithm keeps track of the transitions, the source letters, and the lists of incoming states and outgoing states of the sequence of transitions constructed so far. We call them the sets of \emph{used} transitions, and source letters, and the lists of \emph{used} incoming states, and outgoing states. Initially all these sets and lists are empty. At each moment in time, a target letter is \emph{permissible} if it is either equal to $c$, or it  has been already used as a source letter. The main loop of the algorithm repeatedly proceeds as follows. First, it checks whether there is a \emph{still unused} transition with a permissible target letter, and whose incoming and outgoing states have already been used (as incoming and outgoing states, respectively).  If so, the transition is added to the current sequence. Otherwise, the algorithm checks if there is an unused transition, whose incoming states and outgoing states have already been used, or are equal to the next states in the input lists. If so, the transition is added to the current sequence. The incoming and the outgoing states are added to the corresponding lists of used states if they are not already present. 
       
If a loop iteration fails to add a new transition, then the loop terminates. If, after termination, the lists of used incoming and outgoing states are equal to the input lists, then the algorithm returns the current sequence of transitions, otherwise it returns nothing.

Each iteration of the algorithm runs in polynomial time. Since each iteration but the last adds one transition to the sequence, the algorithm runs in polynomial time.

We now show that if the algorithm returns a sequence, then that sequence is a local separation refinement for $c$ satisfying the conditions. By construction, after each iteration the current sequence is a local separation refinement; further, the lists of used incoming and outgoing states are first appearances lists, and prefixes of the input lists. Finally, the algorithm returns a sequence only if the lists of used incoming and outgoing states coincide with the input lists. 

It remains to show that if there exists a separation refinement for $c$ satisfying the conditions, then the algorithm returns a sequence.  Assume there is a such a refinement, but the algorithm terminated without returning a sequence. 
Note that if after termination all transitions of the refinement have been used, 
then all incoming and outgoing states of the input lists have been used too, and so the algorithm would have returned a sequence. So at least one transition of the refinement, was not used by the algorithm. Consider the first such transition,
say $(q, (a,b),q')$. Since all the previous transitions of the refinement (and possibly some others) were used, after termination all incoming and outgoing states of the input lists before $q$ and $q'$ have been used too, and $b$ has become  permissible. Then during the last iteration of the main loop the algorithm was able to use $(q, (a,b),q')$, but terminated instead, a contradiction.

This concludes the proof of the lemma.
\end{proof}

\noindent 
We obtain separation refinements for a word $w$ by combining local separation refinements for its letters,
as follows.

\begin{defi}
\label{def:fragment}
        A \emph{run fragment} of an automaton is 
        a run of the automaton obtained by replacing the set of 
        initial states with the set of all states.
        (In other words, the condition of starting in an initial state is dropped.)
        The states starting and ending a run fragment
        are called the \emph{incoming} and \emph{outgoing}
        states of the fragment, respectively.

        A \emph{fragment separation refinement} $\tau$ for a word $w$ of length $\ell$
        is a table of $\ell$ columns $\tau[1]$, $\ldots$,$\tau[\ell]$,
        where each column is a local separation refinement of the same length. 
        Moreover, the sequences of outgoing and incoming states of adjacent columns
        coincide; i.e. for each $j \in[1,\ell]$ the $i$-th outgoing state
        of the $j$-th column is equal to the $i$-th incoming state of the $j+1$-th column. 
        The incoming and outgoing states of $\tau$ are the incoming states of $\tau[1]$ and the 
        outgoing states of $\tau[\ell]$, respectively. 
\end{defi}

        We can also split a fragment separation refinement $\tau$ into rows $\tau_1, \ldots, \tau_n$.
        Each row $\tau_i$ is a run fragment of $\transitions$.
        We prove that fragment separation refinements can be concatenated.

\begin{lem}
\label{lem:fragment}
        Let $\tau_w$ and $\tau_z$ be fragment separation refinements
        for words $w$ and $z$ such that
        the first-appearance list of outgoing states of $\tau_w$ and 
        the first-appearance list of incoming states of $\tau_z$ coincide.  There exists 
        a fragment separation refinement $\tau_{wz}$ for $w z$
        with the same sets of used source letters
        (and therefore the same sets of permissible letters)
        at the corresponding positions.
        Further, $\tau_{wz}$ has the same first-appearance list
        of incoming states as $\tau_w$,
        and the same first-appearances list of outgoing states
        as $\tau_z$.
\end{lem}

\begin{proof}
We construct $\tau_{wz}$ row by row. For this we repeatedly choose
a row of $\tau_w$ and a row of $\tau_z$ such that the outgoing state of $\tau_w$ and the incoming state of $\tau_z$ coincide, concatenate them, and add the result to $\tau_{wz}$. Rows can be chosen multiple times.

We show how to choose the rows of $\tau_w$ and $\tau_z$ so that the final result
is a fragment separation refinement of $\tau_{wz}$ satisfying the conditions of the lemma. It suffices to show how to choose the rows so that they satisfy the following two constraints:
\begin{enumerate}
\item A row of $\tau_w$ or $\tau_z$ can be chosen only if each previous row has been chosen at least once. 
(Note that once a row is chosen once, it can be chosen again without restrictions.)
\item All rows of $\tau_w$ and $\tau_z$ are eventually chosen.
\end{enumerate}
In particular, this ensures that $\tau_{wz}$ has the same first-appearance list
of incoming states as $\tau_w$, and the same first-appearances list of outgoing states
as $\tau_z$. 

        We proceed as follows. At each step, we
        consider all triples $(q,A,B)$ where $A$ is a row of $\tau_w$
        with outgoing state $q$ and $B$ is a row of $\tau_z$ with incoming state $q$.
        (We also formally add a triple $(\infty,\infty,\infty)$ where
        each component has to be used last.)
        We choose any triple such that the addition of $AB$  to $\tau_{wz}$
        respects the order in the first-appearance lists of incoming states of $\tau_w$ and outgoing states of $\tau_z$.
        This guarantees that the resulting table satisfies condition 1. 
        It suffices to show that all triples will be used.

        Assume the contrary. Let $\sigma$ be the common list of first-appearances of outgoing states of $\tau_w$
        and incoming states of $\tau_z$.
        Consider the earliest non-chosen rows $A$ and $B$
        (it is possible that one of them is $\infty$, but if $\infty$ is used,
        then every row has been used in both refinements).
        Let $q$ be the outgoing state of $A$ and $r$ be the incoming state of $B$.
        We have $q \neq r$, because otherwise the procedure can choose $(q=r,A,B)$.
        Without loss of generality, assume that $q$ precedes $r$ in $\sigma$.
        Then $\tau_z$ has a row $C$ before $B$ with incoming state $q$. 
        Since $B$ is the earliest non-chosen row, $C$ can be chosen.
        So the triple $(q,A,C)$ can be chosen, contradicting that $A$ 
        is the earliest non-chosen row. This concludes the proof of the lemma.
\end{proof}

\noindent 
We now proceed to prove \autoref{prop:transducer}. Let us first construct the transducer $T_{sep}$. 
Recall that a state of the transducer is a pair $q = (\alpha, \beta)$ where
$\alpha \in  (Q_T\cup\{\square\})^{n_T}$, $\beta \in (2^{Q_T} \times 2^{Q_T})$, 
and $n_T$ is the number of states of $\transitions$. In every reachable state of $T_{sep}$,
$\alpha$ will be a permutation of a subset of $Q_T$ followed by some number of $\square$ symbols.
We look at these permutations as first-appearance lists, and so we call $\alpha$ the first-appearance list
of $q$. The second component $\beta$ is the completeness summary letter of $q$ (see \autoref{def:summaries}).

A state $q = (\alpha, \beta)$ is initial if{}f

\begin{itemize}
        \item all the states of $Q_T$ appearing in $\alpha$ are initial; and
        \item $\beta = (Q_{T,0},\emptyset)$ where $Q_{T,0}$ 
                are the initial states of $\transitions$.
\end{itemize}

\noindent and final if{}f

\begin{itemize}
        \item all the states of $Q_T$ appearing in $\alpha$ are final; and
        \item the second set of $\beta$ contains no final states.
\end{itemize}
\noindent 
Let us now define the algorithm recognizing the transitions of $T_{sep}$.
Given two states $q=(\alpha, \beta), q'=(\alpha',\beta')$ and a letter $c$, the algorithm either outputs a set of letters $X$, meaning that $(q, \tuple{a, X},q')$ is a transition, or rejects.
The algorithm takes $\alpha, \alpha'$ and $c$, and uses  the algorithm from the \autoref{lem:localsep} to construct
a local separation refinement for $c$, if any exists. If no such refinement exists, the algorithm rejects. Otherwise, the algorithm assigns to $X$ the complement of the set of permissible letters of this refinement, and  checks whether$\beta, \beta'$ satisfy the definition of a completeness summary  (\autoref{def:summaries}) with respect to $X$.
If they do, then the set $X$ is returned. Clearly, the algorithm runs in polynomial time. 

It remains to show that $T_{sep}$ accepts a word $\tuple{w, \varphi}$ over $\Sigma \times 2^\Sigma$ if{}f $\varphi = \Sepf{w}$. 

Assume $\varphi = \Sepf{w}$. 
Let $\tau$ be a separation table for $w$. 
By \autoref{lem:sepchar}, we have $\varphi= \comp{\Incl{w,\tau}[1]} \, \cdots \, \comp{\Incl{w,\tau}[\ell]}$. 
By the definition of $T_{sep}$, the transducer has a run on $\tuple{w, \varphi}$ whose sequence of visited states are the consistency and completeness summaries of $\tau$ for $w$, and the run is accepting.

Assume now that $T_{sep}$ has an accepting run on $\tuple{w, \varphi}$. We have to show that $\varphi = \Sepf{w}$. 
It suffices to show that from the run we can construct 
a separation table for $w$, since all separation tables  produce the same formula defining $\Sep{w}$.
The procedure to construct a separation table from an accepting run goes as follows\footnote{Note that this procedure is an inefficient proof of existence, there is no need to run it as a part of the algorithm.}.
First, the procedure constructs for each transition in the run a local separation refinement, applying \autoref{lem:localsep}. Then it repeatedly applies \autoref{lem:fragment} to produce one fragment separation refinement for
the complete run. Since, by the definition of $T_{sep}$, all states in the initial (resp. final) first-appearance list
are initial (resp. final), the rows of the joint fragment separation refinement are 
accepting runs of $\transitions$, and so transitions of the RTS. Thus we obtain a sequence of transitions, i.e. a table. 
The table is consistent with the word $w$, because this is a local
property ensured by local separation refinements and preserved during 
the fragment merging. To show that the table is complete for $w$,
observe that, in the completeness summary, for each pair $(Q_j, Q_j')$,
$Q_j$ is the set of all the states reachable by reading  a sequence of pairs of letters 
with target letter permissible for their positions, and $Q_j'$ is the subset where at least one of the used source letters
read in the process was not permissible. By the definition of the final states of $T_{sep}$, no states of $Q_\ell'$ are final
states of $\transitions$. Therefore, no transition of the regular transition system
can add a new permissible letter,  which is the completeness condition.
\end{proof}

This concludes the proof of \autoref{prop:transducer}. We now prove our main technical result:

\begin{thm}
\label{thm:autforInd1}
Let $\mathcal{R} = \tuple{\Sigma, \initial, \transitions}$ be an RTS. 
There exists an NFA $\mathcal{A}_1$ over $\Sigma$ satisfying the following properties:
\begin{itemize}
\item The states of $\mathcal{A}_1$ are elements of $(Q_T\cup \{\square\})^{n_T} \times (2^{Q_T} \times 2^{Q_T}) \times \{0,1\} \times Q_I $.
\item There is a polynomial time algorithm that, given two states $q, q'$ of $\mathcal{A}_1$ and a letter $a \in \Sigma$ decides whether the triple $(q, a, q')$ is a transition of $\mathcal{A}_1$.
\item $\lang{\mathcal{A}_1} = \Ind{1}$
\end{itemize}
\end{thm}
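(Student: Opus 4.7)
The plan is to characterize membership in $\Ind{1}$ as the existence of two synchronous witnesses that $\mathcal{A}_1$ can guess letter-by-letter while reading $w$. By Lemma~\ref{lem:indchar}, $w \in \Ind{1}$ iff $\Sep{w}$ is not an invariant. Since $\Sep{w}$ is inductive, it is an invariant iff it contains all initial configurations, so $w \in \Ind{1}$ iff some initial configuration $u$ satisfies $u \notin \Sep{w}$. Combining with Lemma~\ref{lem:sepchar}, I obtain the clean characterization: $w \in \Ind{1}$ iff there exist a $1$-powerword $\varphi \in (2^\Sigma)^*$ and a word $u \in \Sigma^*$ such that (i) $T_{sep}$ accepts $\tuple{w, \varphi}$ (so $\varphi = \Sepf{w}$ for some separation table), (ii) $\initial$ accepts $u$, and (iii) the deterministic two-state interpreter $\interpretation_1$ from Lemma~\ref{lem:satisfaction} \emph{rejects} $\tuple{u, \varphi}$, i.e.\ $u \not\models \varphi$.

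Accordingly, I would define $\mathcal{A}_1$ as a synchronous product of $T_{sep}$, $\initial$, and $\comp{\interpretation_1}$, reading $w$ on its visible track and guessing $\varphi$ and $u$ on two hidden tracks. On reading a letter $a \in \Sigma$, the automaton nondeterministically selects $X \subseteq \Sigma$ (the next letter of $\varphi$) and $c \in \Sigma$ (the next letter of $u$), then advances $T_{sep}$ on $\tuple{a, X}$, $\initial$ on $c$, and $\interpretation_1$ on $\tuple{c, X}$. Because $\interpretation_1$ has exactly two states, encoded as $\{0,1\}$, the resulting state space is precisely $(Q_T \cup \{\square\})^{n_T} \times (2^{Q_T} \times 2^{Q_T}) \times \{0,1\} \times Q_I$, matching the statement. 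Initial states are those whose four components are initial in the respective automata (with the $\{0,1\}$-flag equal to $0$); final states are those whose $T_{sep}$- and $\initial$-components are final and whose flag is still $0$, encoding $u \not\models \varphi$. Correctness, i.e.\ $\lang{\mathcal{A}_1} = \Ind{1}$, is then the standard product-automaton argument: accepting runs on $w$ are in bijection with triples witnessing (i)--(iii).

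The main obstacle is verifying the polynomial-time transition check, since $X$ ranges over $2^\Sigma$, which is exponentially large. I would exploit Proposition~\ref{prop:transducer}: given the $T_{sep}$-components $q_T, q_T'$ and the input letter $a$, its polynomial-time algorithm either rejects or produces the unique compatible $X$, namely the complement of the set of permissible letters of the local separation refinement computed from $(q_T, q_T', a)$. With $X$ in hand, one iterates over the $|\Sigma|$ choices of $c$ and for each verifies in constant time the one-step conditions for $\initial$ and $\comp{\interpretation_1}$ (the latter being trivial since $\interpretation_1$ is a fixed two-state automaton). Hence the entire transition test runs in polynomial time, completing all three bullets of the theorem.
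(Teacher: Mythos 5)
Your proposal is correct and follows essentially the same route as the paper: both characterize $w \in \Ind{1}$ via Lemma~\ref{lem:indchar} as the existence of a $1$-powerword $\varphi$ with $\tuple{w,\varphi} \in \rel{T_{sep}}$ and an initial $u$ with $u \not\models \varphi$, and both realize this as a composition of $T_{sep}$, the complemented two-state interpreter, and $\initial$, followed by projection onto the first track. The only difference is presentational — the paper invokes Proposition~\ref{prop:transducerops} for the product, whereas you unfold it explicitly and spell out the polynomial-time transition test (including the key observation that $X$ is determined by the $T_{sep}$-components and $a$), which the paper leaves implicit.
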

\begin{proof}
Let $T_{sep}$ be the transducer over the alphabet $\Sigma \times 2^\Sigma$ of \autoref{prop:transducer}. 
By \autoref{lem:indchar}, $w \in \Ind{1}$ if{}f there exists a $1$-formula $\varphi$ such that $\tuple{w, \varphi} \in
\lang{T_{sep}}$ and $\varphi$ is not an invariant, that is, there exists $u \in \lang{\initial}$ such that $u \not\models \varphi$. 
By \autoref{lem:satisfaction},
there is a deterministic transducer $\overline{\interpretation}$ with two states, say $\{0, 1\}$,  recognizing the pairs $\tuple{\varphi, u}$ such that $u \not\models \varphi$.  
So we get
\begin{align*}
\Ind{1}  & = \{ w \in \Sigma^* \mid \exists \varphi \in (2^\Sigma)^*, u \in \Sigma^* \text{ s.t. } \tuple{w, \varphi} \in  \rel{T_{sep}}, \tuple{\varphi, u} \in \lang{\overline{\interpretation}}, u \in \lang{\initial} \} \\
  & = \proj{( \rel{T_{sep}} \circ \lang{\overline{\interpretation}} \circ \lang{\initial}  )}{1}.
\end{align*}
The automaton  $\mathcal{A}_1$ is obtained from \autoref{prop:transducerops}.
\end{proof}
\noindent 
Observe that a state of $\mathcal{A}_1$ can be stored using space linear in $\initial$ and $\transitions$. This yields:

\begin{cor}
\label{prop:pspace}
Deciding $\Ind{1} \cap \lang{\mathcal{U}} = \emptyset$ is in \textsc{PSPACE}.
\end{cor}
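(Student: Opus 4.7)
The plan is to combine the succinct NFA $\mathcal{A}_1$ from Theorem~\ref{thm:autforInd1} with the unsafe automaton $\mathcal{U}$ and decide nonemptiness of the product on the fly, using Savitch's theorem to conclude \textsc{PSPACE} membership.

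First, I would form the product NFA $\mathcal{B} := \mathcal{A}_1 \times \mathcal{U}$, whose language is precisely $\Ind{1} \cap \lang{\mathcal{U}}$. A state of $\mathcal{B}$ is a pair $(q, q_{\mathcal{U}})$ where $q$ is a state of $\mathcal{A}_1$ and $q_{\mathcal{U}}$ is a state of $\mathcal{U}$. By Theorem~\ref{thm:autforInd1}, $q$ is an element of $(Q_T \cup \{\square\})^{n_T} \times (2^{Q_T} \times 2^{Q_T}) \times \{0,1\} \times Q_\initial$, which can be stored in space polynomial in $\size{\rts}$: the first component is a list of $n_T$ entries each of size $\log (n_T+1)$, the pair of sets is of size $2 n_T$, and the remaining components are of size $O(\log n_\initial)$. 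Adding $q_{\mathcal{U}}$ keeps the overall representation polynomial in $\size{\rts} + \size{\mathcal{U}}$. Moreover, by Theorem~\ref{thm:autforInd1}, checking whether $(q, a, q')$ is a transition of $\mathcal{A}_1$ runs in polynomial time, as does checking the acceptance/initiality conditions; the analogous checks for $\mathcal{U}$ are direct. Hence transitions and initial/final conditions of $\mathcal{B}$ are decidable in polynomial time.

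Next, I would decide nonemptiness of $\mathcal{B}$ on the fly. The algorithm nondeterministically guesses an initial state of $\mathcal{B}$ (verifying in polynomial time), then repeatedly guesses a letter of $\Sigma$ and a successor state, verifying each guessed transition in polynomial time, keeping only the current state in memory. A step counter bounded by the total number of states of $\mathcal{B}$, namely $2^{\mathrm{poly}(\size{\rts} + \size{\mathcal{U}})}$, is maintained in polynomial space and ensures termination; if an accepting state is hit before the counter is exhausted, the algorithm accepts. This is a nondeterministic polynomial-space procedure, and Savitch's theorem yields the desired \textsc{PSPACE} upper bound.

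The main conceptual work lies in Theorem~\ref{thm:autforInd1}, which gives the polynomial-space state description and the polynomial-time transition oracle for $\mathcal{A}_1$. Given these two ingredients, the argument reduces to the textbook on-the-fly nonemptiness check for an exponentially-sized but succinctly represented NFA, so there is no significant obstacle beyond verifying that every component of the state of $\mathcal{A}_1$ indeed fits in polynomial space.
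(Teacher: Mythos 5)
Your proposal is correct and follows essentially the same route as the paper: both rely on the succinct, polynomial-space state representation and polynomial-time transition test for $\mathcal{A}_1$ (Theorem~\ref{thm:autforInd1} via Proposition~\ref{prop:transducer}), guess an accepting run of the product with $\mathcal{U}$ on the fly, and conclude via $\textsc{NPSPACE} = \textsc{PSPACE}$. Your version merely spells out the bookkeeping (the step counter and the explicit space accounting for each state component) that the paper leaves implicit.
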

\begin{proof}
Guess a configuration $w$ and an accepting run of $\mathcal{A}_1$ and $\mathcal{U}$ on $w$, step by step. By \autoref{prop:transducer}, this can be done in polynomial space. Apply then \textsc{NPSPACE} = \textsc{PSPACE}.
\end{proof}

\section{Deciding \texorpdfstring{$\Ind{1} \cap \mathcal{L}(\mathcal{U})= \emptyset$}{Ind\_1 ∩ U = ∅} is
\textsc{PSPACE}-hard.}
\label{sec:pspacehardness}

This section presents the proof of the following lemma:

\begin{lem}
  \label{lem:hardness}
 Given an RTS $\rts$ and an NFA $\mathcal{U}$, deciding $\Ind{1} \cap \mathcal{L}(\mathcal{U})= \emptyset$ is
  \textsc{PSPACE}-hard.
\end{lem}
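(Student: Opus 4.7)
The plan is to reduce from a standard \textsc{PSPACE}-complete problem, namely the acceptance problem for a deterministic linear-bounded automaton (LBA). Given an LBA $M$ with state set $Q$ and tape alphabet $\Gamma$, together with an input $x$ of length $n$, I would construct in polynomial time an RTS $\rts_M = (\Sigma, \initial, \transitions)$ and an NFA $\mathcal{U}$ such that $M$ accepts $x$ iff $\Ind{1}(\rts_M) \cap \lang{\mathcal{U}} \neq \emptyset$. Configurations would be words of length $n$ over $\Sigma := \Gamma \cup (\Gamma \times Q)$, where letters in $\Gamma \times Q$ mark the head position and current state; $\initial$ would accept the unique initial TM configuration on $x$; $\transitions$ would locally simulate one step of $M$; and $\mathcal{U}$ would accept configurations whose state-cell carries an accepting state.

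The easy direction is immediate: if $M$ accepts $x$, the corresponding accepting configuration is reachable, hence lies in $\reach \subseteq \Ind{1}$, so the intersection with $\lang{\mathcal{U}}$ is non-empty.

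For the hard direction, I would exploit Lemmas~\ref{lem:indchar} and~\ref{lem:sepchar}: $w \in \Ind{1}$ iff the unique initial configuration $u_0$ satisfies $u_0[j] \in \Incl{w,\tau}[j]$ at every position $j$ for some separation table $\tau$ of $w$. The task therefore reduces to showing that whenever $w$ is accepting but unreachable, the ``separation saturation'' that grows $\Incl{w,\tau}$ from $\{w[j]\}$ cannot simultaneously reproduce $u_0$ at every position. To enforce this, I would augment the TM encoding with a deterministic clock/phase counter distributed across positions together with position-specific markers, tuned so that at every position $j$ the set of letters inductively producible at $j$ remains a strict subset of $\Sigma$ outside actual forward computations. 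With sufficient positional rigidity, the saturation is forced to trace a bona fide forward computation of $M$ ending in $w$, so matching $u_0$ at every position requires such a computation, i.e.\ $w \in \reach$.

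The main technical obstacle is precisely designing this augmentation. The $\Ind{1}$ abstraction is coarse because it separates configurations only by single disjunctive clauses, so a naive TM simulation admits many spurious ``corrupted'' accepting configurations into $\Ind{1}$. Making the abstraction tight requires building in enough local, positionally-checkable redundancy that each unreachable configuration differs from every reachable one at some \emph{fixed} position, and moreover that the set of possible letters at each fixed position is \emph{inductively} closed under $\transitions$, not merely closed along reachable trajectories. Verifying inductivity is the delicate step, because it must hold starting from arbitrary configurations (not just reachable ones); this is where the deterministic clock, direction markers, and forced equality checks in the encoding do the real work, and where the polynomial size bound of $\rts_M$ must be preserved.
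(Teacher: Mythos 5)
Your reduction source (deterministic LBA acceptance) and the overall shape of the argument match the paper, and you have correctly identified the central difficulty: a naive in-place simulation lets far too many corrupted configurations into $\Ind{1}$, because a single clause can only exclude a ``box'' $\prod_j \comp{X_j}$ of configurations and therefore cannot track cross-position correlations such as ``all cells are at the same time step.'' The problem is that your proposal stops exactly where the proof has to begin: the ``deterministic clock/phase counter distributed across positions together with position-specific markers'' is never constructed, and I do not see how to construct it within a polynomial-size RTS. If each position's letter is to determine the current time step (so that mixed-time boxes become harmless), the clock needs as many values as the TM has steps, which is exponential in $n$; if instead the clock is distributed across positions, a single clause again cannot couple the clock cells to the tape cells, and the separation-table saturation of Lemma~\ref{lem:sepchar} will mix letters from different time steps and admit unreachable accepting configurations. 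So the hard direction of your reduction is not established, and the natural instantiations of your sketch appear to fail.

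The paper avoids in-place simulation entirely. Its RTS configuration is an \emph{unrolled computation history}: a sequence of ``pages,'' the first holding the initial TM configuration and the rest initially filled with a blank symbol $\square$; each transition turns exactly one $\square$ into the letter forced by the three letters one page earlier. Because letters are written once and never changed, every atomic proposition $a_{k:\ell}$ with $a \neq \square$ is automatically an inductive $1$-invariant, and the single-clause implications ``if positions $i-1,i,i+1$ carry $a,b,c$ then position $i+n+1$ carries $d$ or $\square$'' are inductive and pin $\Ind{1}$ down to exactly the partially-written correct histories. The time bound is absorbed into the configuration length (initial configurations of every number of pages exist), so no clock is needed and the construction stays polynomial. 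This write-once encoding is the key idea missing from your proposal; without it, or a concrete substitute that makes every relevant positional fact inductive, the claim that unreachable accepting configurations are excluded from $\Ind{1}$ remains unproven.
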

\noindent 
We reduce from the problem of deciding whether a bounded Turing machine of size $n$ that can only use $n$ tape cells accepts when started on the empty tape.
This problem (very similar to the acceptance problem for linearly bounded automata) is known to  be \textsc{PSPACE}-complete.
Given such a machine, we construct a deterministic RTS $\rts$ which, loosely speaking, satisfies the following properties: 1) the execution of $\rts$  from an initial configuration of length $\Theta(t \cdot n)$
simulates the first $t$ steps of the computation of the Turing machine, and 2) $\Ind{1}$ coincides with the set of reachable configurations.

The proof is divided in several parts. 
We first introduce some notations on Turing machines.
Then, we define the RTS $\rts$, first informally and then formally.
Finally, we conduct the reduction.

\newcommand{\bk}{B}

\paragraph{Turing machines.} We fix some notations on Turing machines.   A Turing machine $\mathcal{M}$ consists of
    \begin{itemize}
      \item a set of states $Q$,
      \item an initial state $q_{0} \in Q$  and a final state $q_{f} \in Q$,
      \item an input alphabet $\Sigma$ and a tape alphabet $\Gamma \supsetneq \Sigma$,
      \item a dedicated blank symbol $\bk \in \Gamma \setminus \Sigma$, and
      \item a transition function $\delta \colon Q \times \Gamma \rightarrow Q  \times \Gamma \times \set{L, R}$.
    \end{itemize}
\noindent 
Let $\mathcal{M}$ be any deterministic linearly bounded Turing machine $\mathcal{M}$, meaning that $\mathcal{M}$ only uses $\size{\mathcal{M}}$ tape cells.
We construct an instance $\rts, \mathcal{U}$ of the safety verification problem, of size $O(n)$, such that $\mathcal{M}$ accepts the empty word if and only if $\Ind{1} \cap \lang{\mathcal{U}} \neq \emptyset$.

Let $n$ denote the size  of $\mathcal{M}$ (and so the number of tape cells) plus 1. We represent a configuration of $\mathcal{M}$ as a word $\alpha \in (\Gamma \cup Q)^{n}$ containing exactly one letter in $Q$. That is, by definition $\alpha = \beta \; q \; \eta$ where $\beta \in \Gamma^*$, $q \in Q$, and $\eta \in \Gamma^+$.
If $\alpha$ is the current configuration of  $\mathcal{M}$, then $\mathcal{M}$ is in the state $q$, the content of the tape is $\beta \; \eta$, and the head of $\mathcal{M}$ reads  the first letter of $\eta$. Observe that $\beta \; \eta$ has length $\size{\mathcal{M}}$. The initial configuration $\alpha_0$ is $q_0 \bk^{n-1}$, that is, the tape is initially empty.

Since $\mathcal{M}$ is deterministic,  there is \emph{exactly} one sequence of configurations $
    \alpha_{0} \vdash
    \alpha_{1} \vdash
    \alpha_{2} \vdash
    \ldots
$
  where $\vdash$ is used to describe that $\alpha_{i+1}$ is the successor configuration of $\alpha_{i}$.
  For simplicity, we also allow $\vdash$ to connect two identical configurations if their state is the final $q_f$.
  Consequently, this sequence of configurations is infinite; either because $\mathcal{M}$ loops or because $\mathcal{M}$ ``stutters'' in a final configuration.

 \paragraph{Description of the RTS}
  We construct an RTS that simulates the execution of $\mathcal{M}$ on the empty word.
  The alphabet of the RTS consists of $\Gamma \cup (\Gamma \times  Q)$, and two auxiliary symbols $\bullet$ and $\square$.
  Using these symbols, the execution of the Turing machine can be  encoded as the infinite word
   \begin{equation*}
     \bullet \; \alpha_{0} \; \bullet \;  \alpha_{1} \;  \bullet \; \alpha_{2} \; \bullet \cdots
  \end{equation*}
\noindent The symbol $\bullet$ separates the individual configurations of the Turing machine.
The set of initial configurations of the RTS contains for every $m \geq 0$ the configuration
  \begin{equation*}
     \; \alpha_{0} \; \bullet \; 
   \underbrace{ \;  \square^{n} \; \bullet \;   \;  \square^{n} \; \bullet \;
    \ldots  \bullet \; \square^{n} \;}_{m \text{ times }}.
  \end{equation*}
Intuitively, an initial configuration of the RTS consist of a sequence of $m+1$ ``pages''. Each page has space to ``write'' a configuration of the Turing machine. The first page is already filled with the initial configuration of the Turing machine, the others are still ``empty''. The transitions of the RTS repeatedly replace the $\square$-symbols, from left to right, by the correct letters of the successor configurations $\alpha_1, \alpha_2, \ldots, \alpha_m$ of the Turing machine. That is, repeated application of the transducer eventually reaches the configurations

\begin{align*}
& \bullet \; \alpha_{0} \; \bullet \; \square^{n}   
    \bullet \; \square^{n}   \bullet \ldots \bullet \; \square^{n} \\
& \bullet \; \alpha_{0} \; \bullet \; \alpha_{1} \;
    \bullet \; \square^{n}  \bullet \ldots \bullet \; \square^{n}   \\
& \bullet \; \alpha_{0} \; \bullet \; \alpha_{1} \;
    \bullet \; \alpha_2 \; \bullet \ldots \bullet \; \square^{n}   \\[-0.2cm]
&  \qquad \cdots \\[-0.2cm]
& \bullet \; \alpha_{0} \; \bullet \; \alpha_{1} \;
    \bullet \; \alpha_{2} \; \bullet \ldots \bullet \; \alpha_{m} \;  \;
\end{align*}
\noindent  (with other configurations in between). Note that the symbol at the position $i+n+1$ of a word in this sequence is determined by the symbols at positions $i-1, i, i+1$ of the previous word. In particular, one can define a (partial) function $\Delta \colon \left( \{\bullet\} \cup \Gamma \cup (Q \times \Gamma) \right)^{3}\to  \Gamma \cup (Q \times \Gamma)$ which gives the letter at position $i+n+1$, given the values of the symbols at positions $i-1, i, i+1$. (The function is partial because there are inputs that do not represent a sensible situation; e.g., inputs which contain more than one element of $Q \times \Gamma$.)

The transducer of the RTS non-deterministically guesses a position $i$, stores the elements at positions $i-1, i, i+1$, and counts to the position $i+n+1$ where it changes $\square$ to the value dictated by $\Delta$.\footnote{One can eliminate this non-deterministic guess by marking the position that is $n+1$ steps before the first $\square$ and moving this marker further one step in each transition. The remaining arguments work analogously. We choose to avoid this improvement for readability.} This transducer can be realized with polynomially many states with respect to $\mathcal{M}$. By construction, every reachable configuration of the RTS that does not contain any $\square$-symbol is a prefix of $\bullet \;  \alpha_{0} \;  \bullet \;  \alpha_{1} \;  \bullet \;  \alpha_{2} \;  \bullet \; \alpha_{3} \;  \bullet \;  \ldots$.
Therefore, $\mathcal{M}$  accepts the empty word if{}f the RTS reaches any configuration containing some occurrence of  $q_{f}$ and no occurrence of $\square$.

  \paragraph{Characterizing $\Ind{1}$.}  We prove that a configuration of the RTS with no occurrence of $\square$ is reachable if{}f it belongs to $\Ind{1}$.
  Therefore, $\mathcal{M}$  accepts the empty word if{}f $\Ind{1}$ contains a configuration with no occurrence of $\square$ and some occurrence of $q_{f}$.

  By the definition of the RTS, it suffices to show that every configuration $w \in \Ind{1}$ is of the form
  \begin{equation}
    \label{eq:target}
    \bullet \;  \alpha_{0} \;  \bullet \ldots \bullet \;  \alpha_{i-1} \;  \bullet \;  \beta \square^{k} \;  \bullet \;  \square^{n} \;  \bullet \ldots  \bullet
    \;  \square^{n} \; 
  \end{equation}
  \noindent for some $i \geq 1$ and $k \leq n$ where $\beta$ is the prefix of $\alpha_i$ of length $n-k$.
  Intuitively, these are the configurations reached by the RTS during the process of ``writing down'' the execution of the Turing machine from the corresponding initial configuration of the RTS by ``filling the $\square$s''.

  The proof hinges on the introduction of a few formulas from $\Ind{1}$.
  To this end, we first observe that every transition of the RTS only changes $\square$ symbols to other symbols.
  Therefore, one can immediately see that, for example, the formula that consists of the single atomic proposition $B_{3: \ell}$ is an inductive 1-invariant, and so an element of $\Ind{1}$ for every $\ell$.
  The reason is that every configuration satisfying $B_{3: \ell}$ has $B$ as its third letter, and no transition of the RTS can change it.
  The same applies for all formulas of the form $a_{k:\ell}$ such that $a \neq \square$, and so, in particular, for all  the formulas
  $$\bullet_{1:\ell} \quad  {q_0}_{2:\ell} \quad B_{3:\ell} \quad \ldots \quad B_{n+1:\ell} \quad  \bullet_{n+2:\ell} $$ 
  \noindent So every configuration of $\Ind{1}$ has 
  $\bullet \; \alpha_{0} \; \bullet$ as a prefix (note that $\alpha_0 = q_0 B^{n-1}$).

  Recall the function $\Delta$ defined above, and assume $\Delta(abc) = d$ where $a, b, c, d \in \{\bullet\} \cup \Gamma \cup (Q \times \Gamma)$. We claim that the formula 
  \begin{equation*}
    \varphi = a_{i-1:\ell} \land b_{i:\ell} \land c_{i+1:\ell} \rightarrow (d_{i+n+1:\ell} \lor \square_{i+n+1:\ell})
  \end{equation*}
  is an inductive 1-invariant for all $i>1$. Observe first that $\varphi$ has indeed one single clause (after applying standard equivalences) .
  Further, since the transducer can only change one single $\square$ to a letter different from $\square$, the formula $\varphi$ is also inductive. Finally, an inspection of the initial configurations of the RTS shows that
  they all satisfy $\varphi$, which proves the claim. Let $w$ be any configuration of a length $\ell$ satisfying the left-hand-side of $\varphi$. 
  The only transition that is applicable to $w$ changes the $i+n+1$-th letter from $\square$ to $d_{i+n+1} \colon \ell$. Therefore, in all configurations of 
  $\Ind{1}$ of a length at least $i+n+1$, the letter at position $i+n+1$ is either $\square$ or $d$. It follows that every configuration with a prefix that does not contain $\square$ in $\Ind{1}$ is of the form (\ref{eq:target}).
  Using this and fixing $\mathcal{U}$ to accept configurations without any $\square$ but at least one occurrence of $q_{f}$ ends the proof.

\section{How large must the bound \texorpdfstring{$b$}{b} be?}
\label{sec:experiments}

The index $b$ needed to prove a property (i.e., the least $b$ such that $\Ind{b}$ implies the property)
can be seen as a measure of how difficult it is for
a human to understand the proof. We use the experimental setup of
\cite{DBLP:conf/tacas/BozgaEISW20,DBLP:journals/corr/abs-2108-09101,DBLP:journals/jlap/BozgaIS21,DBLP:conf/apn/EsparzaRW21},
where systems are encoded as WS1S formulas and \texttt{MONA} \cite{mona} is
used as a computation engine, to show that $b=1$ is enough for a substantial
number of benchmarks used in the RMC literature.
Note that our goal is to evaluate the complexity of invariants 
needed for systems from diverse domains, not to present a tool ready to verify complex systems.

\def\sysWidth{4cm}
\def\propWidth{1.5cm}
\def\propDescWidth{6cm}
\def\propCheckWidth{0.5cm}

\begin{figure}[t]
  \caption{Experimental results of using $\Ind{1}$ as abstraction of the
  set of reachable configurations.}
  \label{fig:results}
  \begin{center}
    \scalebox{0.7}{
    \begin{tabular}{m{\sysWidth}ccccc}
            \toprule
      System & $\size{\mathcal{L}_{I}}$ & $\size{\mathcal{L}_{T}}$ & $\size{\Ind{1}}$ & Properties & time (ms) \\
      \midrule
      Bakery &
      $5$ & $9$ & $8$ &
      \begin{tabular}{m{\propDescWidth}m{\propCheckWidth}}
        deadlock & $\checkmark$ \\
        mutual exclusion & $\checkmark$ \\
      \end{tabular}
      & $< 1$ \\
    \midrule
      Burns &
      $5$ & $9$ & $6$ &
      \begin{tabular}{m{\propDescWidth}m{\propCheckWidth}}
        deadlock & $\checkmark$ \\
        mutual exclusion & $\checkmark$ \\
      \end{tabular}
      & $< 1$ \\
    \midrule
      Dijkstra &
      $6$ & $24$ & $22$ &
      \begin{tabular}{m{\propDescWidth}m{\propCheckWidth}}
        deadlock & $\checkmark$ \\
        mutual exclusion & $\checkmark$ \\
      \end{tabular}
      & $1920$ \\
    \midrule
      Dijkstra (ring) &
      $6$ & $17$ & $17$ &
      \begin{tabular}{m{\propDescWidth}m{\propCheckWidth}}
        deadlock & $\checkmark$ \\
        mutual exclusion & $\times$ \\
      \end{tabular}
      & $2$ \\
    \midrule
      D. cryptographers &
      $6$ & $69$ & $11$ &
      \begin{tabular}{m{\propDescWidth}m{\propCheckWidth}}
        one cryptographer paid & $\checkmark$ \\
        no cryptographer paid & $\checkmark$ \\
      \end{tabular}
      & $5$ \\
    \midrule
      Herman, linear &
      $6$ & $7$ & $6$ &
      \begin{tabular}{m{\propDescWidth}m{\propCheckWidth}}
        deadlock & $\times$ \\
        at least one token & $\checkmark$ \\
      \end{tabular}
      & $< 1$ \\
    \midrule
      Herman, ring &
      $6$ & $7$ & $7$ &
      \begin{tabular}{m{\propDescWidth}m{\propCheckWidth}}
        deadlock & $\checkmark$ \\
        at least one token & $\checkmark$ \\
      \end{tabular}
      & $< 1$ \\
    \midrule
      Israeli-Jafon &
      $6$ & $21$ & $7$ &
      \begin{tabular}{m{\propDescWidth}m{\propCheckWidth}}
        deadlock & $\checkmark$ \\
        at least one token & $\checkmark$ \\
      \end{tabular}
      & $< 1$ \\
    \midrule
      Token passing &
      $6$ & $7$ & $7$ &
      \begin{tabular}{m{\propDescWidth}m{\propCheckWidth}}
        at most one token & $\checkmark$ \\
      \end{tabular}
      & $< 1$ \\
    \midrule
      Lehmann-Rabin &
      $5$ & $15$ & $13$ &
      \begin{tabular}{m{\propDescWidth}m{\propCheckWidth}}
        deadlock & $\checkmark$ \\
      \end{tabular}
      & $1$ \\
    \midrule
      LR phils. &
      $6$ & $14$ & $15$ &
      \begin{tabular}{m{\propDescWidth}m{\propCheckWidth}}
        deadlock & $\times$ \\
      \end{tabular}
      & $2$ \\
    \midrule
      LR phils.{\scriptsize (with $\bork_{\ell}$ and $\bork_{r}$)} &
      $5$ & $14$ & $9$ &
      \begin{tabular}{m{\propDescWidth}m{\propCheckWidth}}
        deadlock & $\checkmark$ \\
      \end{tabular}
      & $1$ \\
    \midrule
      Atomic D. phil. &
      $5$ & $12$ & $20$ &
      \begin{tabular}{m{\propDescWidth}m{\propCheckWidth}}
        deadlock & $\checkmark$ \\
      \end{tabular}
      & $5$ \\
    \midrule
      Mux array &
      $6$ & $7$ & $8$ &
      \begin{tabular}{m{\propDescWidth}m{\propCheckWidth}}
        deadlock & $\checkmark$ \\
        mutual exclusion & $\times$ \\
      \end{tabular}
      & $< 1$ \\
    \midrule
      Res. allocator &
      $5$ & $9$ & $8$ &
      \begin{tabular}{m{\propDescWidth}m{\propCheckWidth}}
        deadlock & $\checkmark$ \\
        mutual exclusion & $\times$ \\
      \end{tabular}
      & $< 1$ \\
    \midrule
      Berkeley &
      $5$ & $19$ & $9$ &
      \begin{tabular}{m{\propDescWidth}m{\propCheckWidth}}
        deadlock & $\checkmark$ \\
        custom properties & $\nicefrac{2}{3}$ \\
      \end{tabular}
      & $1$ \\
    \midrule
      Dragon &
      $5$ & $26$ & $11$ &
      \begin{tabular}{m{\propDescWidth}m{\propCheckWidth}}
        deadlock & $\checkmark$ \\
        custom properties & $\nicefrac{6}{7}$ \\
      \end{tabular}
      & $3$ \\
    \midrule
      Firefly &
      $5$ & $18$ & $7$ &
      \begin{tabular}{m{\propDescWidth}m{\propCheckWidth}}
        deadlock & $\checkmark$ \\
        custom properties & $\nicefrac{0}{4}$ \\
      \end{tabular}
      & $1$ \\
    \midrule
      Illinois &
      $5$ & $25$ & $14$ &
      \begin{tabular}{m{\propDescWidth}m{\propCheckWidth}}
        deadlock & $\checkmark$ \\
        custom properties & $\nicefrac{0}{2}$ \\
      \end{tabular}
      & $1$ \\
    \midrule
      MESI &
      $5$ & $13$ & $7$ &
      \begin{tabular}{m{\propDescWidth}m{\propCheckWidth}}
        deadlock & $\checkmark$ \\
        custom properties & $\nicefrac{2}{2}$ \\
      \end{tabular}
      & $< 1$ \\
    \midrule
      MOESI &
      $5$ & $13$ & $10$ &
      \begin{tabular}{m{\propDescWidth}m{\propCheckWidth}}
        deadlock & $\checkmark$ \\
        custom properties & $\nicefrac{7}{7}$ \\
      \end{tabular}
      & $1$ \\
    \midrule
      Synapse &
      $5$ & $16$ & $7$ &
      \begin{tabular}{m{\propDescWidth}m{\propCheckWidth}}
        deadlock & $\checkmark$ \\
        custom properties & $\nicefrac{2}{2}$ \\
      \end{tabular}
      & $1$ \\
      \bottomrule
    \end{tabular}}
  \end{center}
\end{figure}

Our set of benchmarks consists of problems studied in
\cite{DBLP:conf/fmcad/ChenHLR17,AbdullaDHR07,DBLP:conf/tacas/BozgaEISW20,DBLP:journals/corr/abs-2108-09101,DBLP:journals/jlap/BozgaIS21,DBLP:conf/apn/EsparzaRW21}.
In a first step, we use \texttt{MONA} to construct a minimal DFA for $\Ind{1}$.
For this, we write a WS1S formula $\Psi_1(w)$ expressing that, for every
1-formula $\varphi$, if $\varphi$ is an inductive invariant, then $w$ satisfies
$\varphi$. \texttt{MONA} yields a minimal DFA for the configurations $w$
satisfying $\Psi_1$, which is precisely $\Ind{1}$. We then construct the
formula $\Psi_1(w) \wedge \mathit{Unsafe}(w)$, and use MONA to check if it is
satisfiable\footnote{The second formula $\Psi_1(w) \wedge \mathit{Unsafe}(w)$
being unsatisfiable
suffices for verification purposes, but
we use $\Psi_1(w)$ to obtain information on the size of the minimal DFA.}. All
files containing the \texttt{MONA} formulas and the results are provided in
\cite{experiment-files}. The results are shown in \autoref{fig:results}. The
first column gives the name of the example. In the second and third column, we
give the number of states of the minimal DFA for $\mathcal{L}_{I}$ and
$\mathcal{L}_{T}$, which we also compute via \texttt{MONA}. In the next column,
we give the size of the minimal DFA for $\Ind{1}$. The fifth column reports whether
a property is implied by $\Ind{1}$ (indicated by $\checkmark$) or not
(indicated by $\times$). For the cache coherence protocols, we replace
$\checkmark$ with $\nicefrac{k}{m}$ to state that $k$ of $m$ custom safety
properties can be established. 
The last column gives the total running time of \texttt{MONA}\footnote{As
reported by \texttt{MONA}.}.  As we can see, $\Ind{1}$ is strong enough to
satisfy 46 out of 59 properties.

In the  second step, we have studied some of the cases in which $\Ind{1}$ is not
strong enough. A direct computation of the automaton for $\Ind{2}$ from a formula
$\Psi_2(w)$ using \texttt{MONA} fails. (A computation based on the
automata construction of Section~\ref{sec:boundedinv} might yield better
results and will be part of our future work.) Using a combination of the
automatic invariant computation method of
\cite{DBLP:conf/tacas/BozgaEISW20,DBLP:journals/corr/abs-2108-09101,DBLP:journals/jlap/BozgaIS21,DBLP:conf/apn/EsparzaRW21}
and manual inspection of the returned invariants, we can report some results
for some examples.

\subsubsection*{Examples for $\Ind{b}$ with $b > 1$.}

\autoref{fig:results} contains two versions of the dining philosophers in
which philosophers take one fork at a time. All philosophers but one are
right-handed, i.e., take their right fork first, and the remaining philosopher
is left-handed. If the forks ``know'' which philosopher has taken them (i.e.,
if they have states $\bork_{\ell}$ and $\bork_r$ indicating that the left or
the right philosopher  has the fork), then deadlock-freedom can be proved using
$\Ind{1}$. If the states of the forks are just ``free'' and ``busy'', then
proving deadlock-freedom requires $\Ind{3}$, and in fact $\reach = \Ind{3}$
holds. We show how to establish this using the technique of
\cite{DBLP:conf/apn/EsparzaRW21} and some additional reasoning in
Appendix~\ref{sec:lr-dining-philos}.

\label{ex:berkeley}
The Berkeley and Dragon cache coherence protocols are considered as
parameterized systems in \cite{DBLP:journals/fmsd/Delzanno03}. For both examples,
$\Ind{1}$ is too coarse to establish all desired consistency assertions.
In Appendix~\ref{sec:berkeley}, we describe the formalization of both examples
and show that $\Ind{2}$ suffices to obtain the missing assertions.

\section{Conclusion}
\label{sec:conclusions}
We have introduced a regular model checking paradigm that
approaches the set of reachable configurations from above. As already observed in
\cite{AbdullaDHR07,DBLP:conf/fmcad/ChenHLR17}, such an approach does not require widening or acceleration techniques, as is the case when approaching from below. The main novelty with respect to \cite{AbdullaDHR07,DBLP:conf/fmcad/ChenHLR17} is the discovery of a natural sequence of regular invariants converging to the set of
reachable configurations.

Our new paradigm raises several questions. The first one is the exact
computational complexity of checking emptiness of the intersection $\Ind{b}$
and the unsafe configurations. We have shown $\textsc{PSPACE}$-completeness for $b=1$,
and we conjecture that the problem is
already \textsc{EXPSPACE}-complete for all $b \geq 2$. We also think that the
CEGAR techniques used in
\cite{DBLP:conf/apn/EsparzaRW21,DBLP:journals/corr/abs-2108-09101} can be
extended to the RMC setting, allowing one to compute intermediate regular
invariants between $\Ind{b}$ and $\Ind{b+1}$. 
Another interesting research venue
is the combination with acceleration or widening techniques, and the
application of learning algorithms, like the one of
\cite{DBLP:conf/fmcad/ChenHLR17}. Currently these techniques try to compute
some inductive regular invariant, or perhaps one described by small automata,
which may lead to invariants difficult to interpret by humans. A better
approach might be to stratify the search, looking first for invariants for
small values of $b$.

\section*{Acknowledgments}
We are grateful to Ahmed Bouajjani for fruitful discussions.
We thank the anonymous reviewers of this and the earlier versions of the paper for their valuable suggestions on presentation.


\bibliographystyle{alphaurl}
\bibliography{references.bib}

\newcommand{\etalchar}[1]{$^{#1}$}
\begin{thebibliography}{PMP{\etalchar{+}}16}

\bibitem[Abd12]{Abdulla12}
Parosh~Aziz Abdulla.
\newblock Regular model checking.
\newblock {\em Int. J. Softw. Tools Technol. Transf.}, 14(2):109--118, 2012.
\newblock \href {https://doi.org/10.1007/S10009-011-0216-8}
  {\path{doi:10.1007/S10009-011-0216-8}}.

\bibitem[Abd21]{Abdulla21}
Parosh~Aziz Abdulla.
\newblock Regular model checking: Evolution and perspectives.
\newblock In Ernst{-}R{\"{u}}diger Olderog, Bernhard Steffen, and Wang Yi,
  editors, {\em Model Checking, Synthesis, and Learning - Essays Dedicated to
  Bengt Jonsson on The Occasion of His 60th Birthday}, volume 13030 of {\em
  Lecture Notes in Computer Science}, pages 78--96. Springer, 2021.
\newblock \href {https://doi.org/10.1007/978-3-030-91384-7\_5}
  {\path{doi:10.1007/978-3-030-91384-7\_5}}.

\bibitem[ADHR07]{AbdullaDHR07}
Parosh~Aziz Abdulla, Giorgio Delzanno, Noomene~Ben Henda, and Ahmed Rezine.
\newblock Regular model checking without transducers ({O}n efficient
  verification of parameterized systems).
\newblock In Orna Grumberg and Michael Huth, editors, {\em Tools and Algorithms
  for the Construction and Analysis of Systems, 13th International Conference,
  {TACAS} 2007, Held as Part of the Joint European Conferences on Theory and
  Practice of Software, {ETAPS} 2007 Braga, Portugal, March 24 - April 1, 2007,
  Proceedings}, volume 4424 of {\em Lecture Notes in Computer Science}, pages
  721--736. Springer, 2007.
\newblock \href {https://doi.org/10.1007/978-3-540-71209-1\_56}
  {\path{doi:10.1007/978-3-540-71209-1\_56}}.

\bibitem[AJNd02]{AbdullaJNd02}
Parosh~Aziz Abdulla, Bengt Jonsson, Marcus Nilsson, and Julien d'Orso.
\newblock Regular model checking made simple and efficient.
\newblock In Lubos Brim, Petr Jancar, Mojm{\'{\i}}r Kret{\'{\i}}nsk{\'{y}}, and
  Anton{\'{\i}}n Kucera, editors, {\em {CONCUR} 2002 - Concurrency Theory, 13th
  International Conference, Brno, Czech Republic, August 20-23, 2002,
  Proceedings}, volume 2421 of {\em Lecture Notes in Computer Science}, pages
  116--130. Springer, 2002.
\newblock \href {https://doi.org/10.1007/3-540-45694-5\_9}
  {\path{doi:10.1007/3-540-45694-5\_9}}.

\bibitem[AJNS04]{AbdullaJNS04}
Parosh~Aziz Abdulla, Bengt Jonsson, Marcus Nilsson, and Mayank Saksena.
\newblock A survey of regular model checking.
\newblock In Philippa Gardner and Nobuko Yoshida, editors, {\em {CONCUR} 2004 -
  Concurrency Theory, 15th International Conference, London, UK, August 31 -
  September 3, 2004, Proceedings}, volume 3170 of {\em Lecture Notes in
  Computer Science}, pages 35--48. Springer, 2004.
\newblock \href {https://doi.org/10.1007/978-3-540-28644-8\_3}
  {\path{doi:10.1007/978-3-540-28644-8\_3}}.

\bibitem[AST18]{AbdullaST18}
Parosh~Aziz Abdulla, A.~Prasad Sistla, and Muralidhar Talupur.
\newblock Model checking parameterized systems.
\newblock In Edmund~M. Clarke, Thomas~A. Henzinger, Helmut Veith, and Roderick
  Bloem, editors, {\em Handbook of Model Checking}, pages 685--725. Springer,
  2018.
\newblock \href {https://doi.org/10.1007/978-3-319-10575-8\_21}
  {\path{doi:10.1007/978-3-319-10575-8\_21}}.

\bibitem[BEI{\etalchar{+}}20]{DBLP:conf/tacas/BozgaEISW20}
Marius Bozga, Javier Esparza, Radu Iosif, Joseph Sifakis, and Christoph Welzel.
\newblock Structural invariants for the verification of systems with
  parameterized architectures.
\newblock In Armin Biere and David Parker, editors, {\em Tools and Algorithms
  for the Construction and Analysis of Systems - 26th International Conference,
  {TACAS} 2020, Held as Part of the European Joint Conferences on Theory and
  Practice of Software, {ETAPS} 2020, Dublin, Ireland, April 25-30, 2020,
  Proceedings, Part {I}}, volume 12078 of {\em Lecture Notes in Computer
  Science}, pages 228--246. Springer, 2020.
\newblock \href {https://doi.org/10.1007/978-3-030-45190-5\_13}
  {\path{doi:10.1007/978-3-030-45190-5\_13}}.

\bibitem[BHRV12]{BouajjaniHRV12}
Ahmed Bouajjani, Peter Habermehl, Adam Rogalewicz, and Tom{\'{a}}s Vojnar.
\newblock Abstract regular (tree) model checking.
\newblock {\em Int. J. Softw. Tools Technol. Transf.}, 14(2):167--191, 2012.
\newblock \href {https://doi.org/10.1007/S10009-011-0205-Y}
  {\path{doi:10.1007/S10009-011-0205-Y}}.

\bibitem[BHV04]{BouajjaniHV04}
Ahmed Bouajjani, Peter Habermehl, and Tom{\'{a}}s Vojnar.
\newblock Abstract regular model checking.
\newblock In Rajeev Alur and Doron~A. Peled, editors, {\em Computer Aided
  Verification, 16th International Conference, {CAV} 2004, Boston, MA, USA,
  July 13-17, 2004, Proceedings}, volume 3114 of {\em Lecture Notes in Computer
  Science}, pages 372--386. Springer, 2004.
\newblock \href {https://doi.org/10.1007/978-3-540-27813-9\_29}
  {\path{doi:10.1007/978-3-540-27813-9\_29}}.

\bibitem[BIS21]{DBLP:journals/jlap/BozgaIS21}
Marius Bozga, Radu Iosif, and Joseph Sifakis.
\newblock Checking deadlock-freedom of parametric component-based systems.
\newblock {\em J. Log. Algebraic Methods Program.}, 119:100621, 2021.
\newblock \href {https://doi.org/10.1016/J.JLAMP.2020.100621}
  {\path{doi:10.1016/J.JLAMP.2020.100621}}.

\bibitem[BJNT00]{BouajjaniJNT00}
Ahmed Bouajjani, Bengt Jonsson, Marcus Nilsson, and Tayssir Touili.
\newblock Regular model checking.
\newblock In E.~Allen Emerson and A.~Prasad Sistla, editors, {\em Computer
  Aided Verification, 12th International Conference, {CAV} 2000, Chicago, IL,
  USA, July 15-19, 2000, Proceedings}, volume 1855 of {\em Lecture Notes in
  Computer Science}, pages 403--418. Springer, 2000.
\newblock \href {https://doi.org/10.1007/10722167\_31}
  {\path{doi:10.1007/10722167\_31}}.

\bibitem[BLW03]{BoigelotLW03}
Bernard Boigelot, Axel Legay, and Pierre Wolper.
\newblock Iterating transducers in the large (extended abstract).
\newblock In Warren A.~Hunt Jr. and Fabio Somenzi, editors, {\em Computer Aided
  Verification, 15th International Conference, {CAV} 2003, Boulder, CO, USA,
  July 8-12, 2003, Proceedings}, volume 2725 of {\em Lecture Notes in Computer
  Science}, pages 223--235. Springer, 2003.
\newblock \href {https://doi.org/10.1007/978-3-540-45069-6\_24}
  {\path{doi:10.1007/978-3-540-45069-6\_24}}.

\bibitem[BT12]{BouajjaniT12}
Ahmed Bouajjani and Tayssir Touili.
\newblock Widening techniques for regular tree model checking.
\newblock {\em Int. J. Softw. Tools Technol. Transf.}, 14(2):145--165, 2012.
\newblock \href {https://doi.org/10.1007/S10009-011-0208-8}
  {\path{doi:10.1007/S10009-011-0208-8}}.

\bibitem[CHLR17]{DBLP:conf/fmcad/ChenHLR17}
Yu{-}Fang Chen, Chih{-}Duo Hong, Anthony~W. Lin, and Philipp R{\"{u}}mmer.
\newblock Learning to prove safety over parameterised concurrent systems.
\newblock In Daryl Stewart and Georg Weissenbacher, editors, {\em 2017 Formal
  Methods in Computer Aided Design, {FMCAD} 2017, Vienna, Austria, October 2-6,
  2017}, pages 76--83. {IEEE}, 2017.
\newblock \href {https://doi.org/10.23919/FMCAD.2017.8102244}
  {\path{doi:10.23919/FMCAD.2017.8102244}}.

\bibitem[Del03]{DBLP:journals/fmsd/Delzanno03}
Giorgio Delzanno.
\newblock Constraint-based verification of parameterized cache coherence
  protocols.
\newblock {\em Formal Methods Syst. Des.}, 23(3):257--301, 2003.
\newblock \href {https://doi.org/10.1023/A:1026276129010}
  {\path{doi:10.1023/A:1026276129010}}.

\bibitem[DLS01]{DamsLS01}
Dennis Dams, Yassine Lakhnech, and Martin Steffen.
\newblock Iterating transducers.
\newblock In G{\'{e}}rard Berry, Hubert Comon, and Alain Finkel, editors, {\em
  Computer Aided Verification, 13th International Conference, {CAV} 2001,
  Paris, France, July 18-22, 2001, Proceedings}, volume 2102 of {\em Lecture
  Notes in Computer Science}, pages 286--297. Springer, 2001.
\newblock \href {https://doi.org/10.1007/3-540-44585-4\_27}
  {\path{doi:10.1007/3-540-44585-4\_27}}.

\bibitem[EKM98]{mona}
Jacob Elgaard, Nils Klarlund, and Anders M{\o}ller.
\newblock {MONA} 1.x: New techniques for {WS1S} and {WS2S}.
\newblock In Alan~J. Hu and Moshe~Y. Vardi, editors, {\em Computer Aided
  Verification, 10th International Conference, {CAV} '98, Vancouver, BC,
  Canada, June 28 - July 2, 1998, Proceedings}, volume 1427 of {\em Lecture
  Notes in Computer Science}, pages 516--520. Springer, 1998.
\newblock \href {https://doi.org/10.1007/BFB0028773}
  {\path{doi:10.1007/BFB0028773}}.

\bibitem[ERW21a]{DBLP:journals/corr/abs-2108-09101}
Javier Esparza, Mikhail~A. Raskin, and Christoph Welzel.
\newblock Abduction of trap invariants in parameterized systems.
\newblock In {\em GandALF}, volume 346 of {\em {EPTCS}}, pages 1--17, 2021.

\bibitem[ERW21b]{DBLP:conf/apn/EsparzaRW21}
Javier Esparza, Mikhail~A. Raskin, and Christoph Welzel.
\newblock Computing parameterized invariants of parameterized {P}etri nets.
\newblock In Didier Buchs and Josep Carmona, editors, {\em Application and
  Theory of {P}etri Nets and Concurrency - 42nd International Conference,
  {PETRI} {NETS} 2021, Virtual Event, June 23-25, 2021, Proceedings}, volume
  12734 of {\em Lecture Notes in Computer Science}, pages 141--163. Springer,
  2021.
\newblock \href {https://doi.org/10.1007/978-3-030-76983-3\_8}
  {\path{doi:10.1007/978-3-030-76983-3\_8}}.

\bibitem[ERW22a]{experiment-files}
Javier Esparza, Mikhail Raskin, and Christoph Welzel.
\newblock Repository of examples.
\newblock \url{https://doi.org/10.5281/zenodo.6483615}, April 2022.
\newblock \href {https://doi.org/10.5281/zenodo.6483615}
  {\path{doi:10.5281/zenodo.6483615}}.

\bibitem[ERW22b]{ERW22}
Javier Esparza, Mikhail~A. Raskin, and Christoph Welzel.
\newblock Regular model checking upside-down: An invariant-based approach.
\newblock In Bartek Klin, Slawomir Lasota, and Anca Muscholl, editors, {\em
  33rd International Conference on Concurrency Theory, {CONCUR} 2022, September
  12-16, 2022, Warsaw, Poland}, volume 243 of {\em LIPIcs}, pages 23:1--23:19.
  Schloss Dagstuhl - Leibniz-Zentrum f{\"{u}}r Informatik, 2022.
\newblock \href {https://doi.org/10.4230/LIPICS.CONCUR.2022.23}
  {\path{doi:10.4230/LIPICS.CONCUR.2022.23}}.

\bibitem[JN00]{JonssonN00}
Bengt Jonsson and Marcus Nilsson.
\newblock Transitive closures of regular relations for verifying infinite-state
  systems.
\newblock In Susanne Graf and Michael~I. Schwartzbach, editors, {\em Tools and
  Algorithms for Construction and Analysis of Systems, 6th International
  Conference, {TACAS} 2000, Held as Part of the European Joint Conferences on
  the Theory and Practice of Software, {ETAPS} 2000, Berlin, Germany, March 25
  - April 2, 2000, Proceedings}, volume 1785 of {\em Lecture Notes in Computer
  Science}, pages 220--234. Springer, 2000.
\newblock \href {https://doi.org/10.1007/3-540-46419-0\_16}
  {\path{doi:10.1007/3-540-46419-0\_16}}.

\bibitem[Leg12]{Legay12}
Axel Legay.
\newblock Extrapolating (omega-)regular model checking.
\newblock {\em Int. J. Softw. Tools Technol. Transf.}, 14(2):119--143, 2012.
\newblock \href {https://doi.org/10.1007/S10009-011-0209-7}
  {\path{doi:10.1007/S10009-011-0209-7}}.

\bibitem[MP20]{McMillanP20}
Kenneth~L. McMillan and Oded Padon.
\newblock Ivy: {A} multi-modal verification tool for distributed algorithms.
\newblock In Shuvendu~K. Lahiri and Chao Wang, editors, {\em Computer Aided
  Verification, 32nd International Conference, {CAV} 2020, Los Angeles, CA,
  USA, July 21-24, 2020, Proceedings, Part {II}}, volume 12225 of {\em Lecture
  Notes in Computer Science}, pages 190--202. Springer, 2020.
\newblock \href {https://doi.org/10.1007/978-3-030-53291-8\_12}
  {\path{doi:10.1007/978-3-030-53291-8\_12}}.

\bibitem[PMP{\etalchar{+}}16]{PadonMPSS16}
Oded Padon, Kenneth~L. McMillan, Aurojit Panda, Mooly Sagiv, and Sharon Shoham.
\newblock Ivy: safety verification by interactive generalization.
\newblock In Chandra Krintz and Emery~D. Berger, editors, {\em Proceedings of
  the 37th {ACM} {SIGPLAN} Conference on Programming Language Design and
  Implementation, {PLDI} 2016, Santa Barbara, CA, USA, June 13-17, 2016}, pages
  614--630. {ACM}, 2016.
\newblock \href {https://doi.org/10.1145/2908080.2908118}
  {\path{doi:10.1145/2908080.2908118}}.

\bibitem[PRZ01]{PnueliRZ01}
Amir Pnueli, Sitvanit Ruah, and Lenore~D. Zuck.
\newblock Automatic deductive verification with invisible invariants.
\newblock In Tiziana Margaria and Wang Yi, editors, {\em Tools and Algorithms
  for the Construction and Analysis of Systems, 7th International Conference,
  {TACAS} 2001 Held as Part of the Joint European Conferences on Theory and
  Practice of Software, {ETAPS} 2001 Genova, Italy, April 2-6, 2001,
  Proceedings}, volume 2031 of {\em Lecture Notes in Computer Science}, pages
  82--97. Springer, 2001.
\newblock \href {https://doi.org/10.1007/3-540-45319-9\_7}
  {\path{doi:10.1007/3-540-45319-9\_7}}.

\end{thebibliography}

\clearpage

\appendix

  \section{Dining philosophers with one left-handed philosopher}
  \label{sec:lr-dining-philos}
  We sketch the formalization of the case in which the states of the forks are
  only ``free'' and ``busy''. Consider an RTS with $\Sigma = \set{\fork, \bork,
  \tilo, \hilo, \eilo}$. The state $\hilo$ represents philosophers who
  already grabbed the first fork and wait for the second one. All other states
  are used as before. The philosopher at index $1$ takes first the fork at
  index $2$ and then the fork at index $n$, while any other philosopher $i > 1$
  first takes the fork at index $i-1$ and then the fork at index $i+1$ (modulo $n$).
  We modelled right-handed philosophers as taking the lower-index fork first as this looks like we are facing the philosopher.
  Naturally, the opposite choice of notation would not lead to substantial changes. For example, both reversing the direction and single-position cyclic shifts preserve regularity of languages.

  In \cite{DBLP:conf/apn/EsparzaRW21} the absence of deadlocks in this example
  is shown via only a few inductive assertions. These assertions can be
  equivalently expressed as $3$-invariants. Moreover, these assertions are
  actually enough to completely characterize $\reach$ in this example. To this
  end, observe that, analogously to \autoref{ex:phil},
  $\reach$ is completely characterized by the absence of a few \emph{invalid}
  patterns. These patterns separate into three cases: First, a philosopher
  should use some fork, but this fork is still considered free. Second, two
  philosophers are in states that require the same fork. Third, no adjacent
  philosopher currently uses some fork, yet this fork is busy. 
  As long as neither of these violations happens,
  the configuration can be reached from the initial configuration
  with the philosophers picking up the forks they should be using
  in an arbitrary order, absence of the second forbidden pattern 
  guarantees the lack of conflicts. Afterwards, the correct set of forks
  will be in use.

  More formally,
  we get
  \begin{itemize}
    \item $\Sigma \; \left( \Sigma \; \Sigma \right)^{*} \; \fork \;
      \left( \hilo \; | \; \eilo \right) \; \Sigma \;
      \left( \Sigma \; \Sigma \right)^{*}$,
      $\left( \Sigma \; \Sigma \right)^{+} \; \eilo \; \fork
      \; \left( \Sigma \; \Sigma \right)^{*}$,
      $\left( \eilo \; | \; \hilo \right) \; \fork \; \left( \Sigma \;
      \Sigma \right)^{*}$, $\eilo \left( \Sigma \; \Sigma \right)^{*} \;
      \fork$,
    \item $\left( \Sigma \; \Sigma \right)^{+} \; \eilo \; \Sigma \; \left(
      \hilo \; | \; \eilo \right) \; \Sigma \; \left( \Sigma \; \Sigma
      \right)^{*}$,
      $\left( \eilo \; | \; \hilo \right) \; \Sigma \; \left( \eilo \; |
      \; \hilo \right) \; \Sigma \; \left( \Sigma \; \Sigma \right)^{*}$,
      $\eilo \; \Sigma \; \left( \Sigma \; \Sigma \right)^{*} \; \eilo
      \; \Sigma$,
    \item
      $\tilo \; \bork \; \tilo \; \Sigma \; \left( \Sigma \; \Sigma
      \right)^{*}$,
      $\left( \tilo \; | \; \hilo \right) \; \Sigma \; \left( \Sigma \;
      \Sigma \right)^{*} \; \left( \tilo \; | \; \hilo \right) \; \bork$,
      $\left( \Sigma \; \Sigma \right)^{+} \; \left( \tilo \; | \; \hilo
      \right) \; \bork \; \tilo \; \Sigma \; \left( \Sigma \; \Sigma
      \right)^{*}$.
  \end{itemize}
  The absence of these patterns can be established with the following languages
  of inductive $1$-invariants and inductive $3$-invariants:
  \begin{center}
    \scalebox{1.0}{
    \begin{tabular}{cc}
            $\vtuple{
          \set{\eilo}
  }
            \vtuple{
          \emptyset
  }
        \left(
            \vtuple{
            \emptyset
    }
            \vtuple{
            \emptyset
    }
        \right)^{*}
            \vtuple{
          \set{\eilo}
  }
            \vtuple{
          \set{\fork}
  }$
        &
            $\vtuple{
          \set{\tilo, \hilo} \\
          \set{\tilo, \hilo} \\
          \emptyset
  }
            \vtuple{
          \emptyset \\
          \emptyset \\
          \emptyset
  }
        \left(
            \vtuple{
            \emptyset \\
            \emptyset \\
            \emptyset
    }
            \vtuple{
            \emptyset \\
            \emptyset \\
            \emptyset
    }
        \right)^{*}
            \vtuple{
          \emptyset \\
          \set{\tilo, \hilo} \\
          \set{\tilo, \hilo}
  }
            \vtuple{
          \set{\bork} \\
          \emptyset \\
          \set{\bork}
  }$
        \\
        $
            \vtuple{
          \set{\eilo, \hilo}
  }
            \vtuple{
          \set{\fork}
  }
        \vtuple{
          \set{\eilo, \hilo}
  }
        \vtuple{
          \emptyset
  }
        \left(
          \vtuple{
            \emptyset
    }
          \vtuple{
            \emptyset
    }
        \right)^{*}$
        &
        $\vtuple{
          \set{\tilo} \\
          \set{\tilo} \\
          \emptyset
  }
        \vtuple{
          \set{\bork} \\
          \emptyset \\
          \set{\bork}
  }
        \vtuple{
          \emptyset \\
          \set{\tilo} \\
          \set{\tilo}
  }
        \vtuple{
          \emptyset \\
          \emptyset \\
          \emptyset
  }
        \left(
          \vtuple{
            \emptyset \\
            \emptyset \\
            \emptyset
    }
          \vtuple{
            \emptyset \\
            \emptyset \\
            \emptyset
    }
        \right)^{*}$
        \\
        $
        \vtuple{
          \emptyset
  }
        \vtuple{
          \emptyset
  }
        \left(
        \vtuple{
          \emptyset
  }
        \vtuple{
          \emptyset
  }
        \right)^{*}
        \vtuple{
          \set{\eilo}
  }
        \vtuple{
          \set{\fork}
  }
        \vtuple{
          \set{\eilo, \hilo}
  }
        \vtuple{
          \emptyset
  }
        \left(
          \vtuple{
            \emptyset
    }
          \vtuple{
            \emptyset
    }
        \right)^{*}$
        &
        $\vtuple{
          \emptyset \\
          \emptyset \\
          \emptyset
  }
        \vtuple{
          \emptyset \\
          \emptyset \\
          \emptyset
  }
        \left(
          \vtuple{
            \emptyset \\
            \emptyset \\
            \emptyset
    }
          \vtuple{
            \emptyset \\
            \emptyset \\
            \emptyset
    }
        \right)^{*}
        \vtuple{
          \set{\tilo, \hilo} \\
          \set{\tilo, \hilo} \\
          \emptyset
  }
        \vtuple{
          \emptyset \\
          \set{\bork} \\
          \set{\bork}
  }
        \vtuple{
          \set{\tilo} \\
          \emptyset \\
          \set{\tilo} \\
  }
        \vtuple{
          \emptyset \\
          \emptyset \\
          \emptyset
  }
        \left(
        \vtuple{
          \emptyset \\
          \emptyset \\
          \emptyset
  }
        \vtuple{
          \emptyset \\
          \emptyset \\
          \emptyset
  }
        \right)^{*}$
      \end{tabular}
      }
  \end{center}
  Consequently, $\Ind{3}$ and $\reach$ coincide for this example,
  as the invariants forbid everything that is not reachable. 
  This
  immediately implies that $\Ind{3}$ proves deadlock-freedom since the system
  actually is deadlock-free.

  However, $\Ind{2}$ is insufficient to prove deadlock-freedom: assume there
  exists some inductive $2$-invariant $I$ that invalidates that $D = \hilo \;
  \bork \; \tilo \; \fork \; \eilo \; \bork$ can be reached. Then, $I$ must
  separate all elements from $\reach$ and all configurations $D'$ with $D'
  \leadsto^{*} D$ because it is inductive. In particular, $D' = \tilo \; \bork
  \; \eilo \; \bork \; \eilo \; \bork$ and the reachable configuration $\tilo
  \; \bork \; \hilo \; \bork \; \eilo \; \bork$. Hence, one clause of $I$
  contains $\vari{\hilo}{3}{6}$. Consider the following pair of configurations:
  $D'' = \tilo \; \fork \; \hilo \; \fork \; \tilo \; \fork$ and $C = \tilo \;
  \bork \; \hilo \; \fork \; \tilo \; \fork$. $I$ must separate $D''$ from $C$
  since $D'' \leadsto \tilo\; \bork \; \eilo \; \fork \; \tilo \; \fork \leadsto \tilo\; \bork \; \eilo \; \bork \; \hilo \; \fork \leadsto \tilo\; \fork \; \tilo \; \fork \; \hilo \; \fork \leadsto \hilo\; \bork \; \tilo \; \fork \; \hilo \; \fork \leadsto  D$ while $C \in \reach$. Since $D'' \models
  \vari{\hilo}{3}{6}$, this separation is based on the second clause of $I$
  which must contain $\vari{\bork}{2}{6}$. This means $\tilo \; \bork \; \hilo
  \; \fork \; \eilo \; \bork \models I$. Since $I$ is inductive and
  $\tilo \; \bork \; \hilo \; \fork \; \eilo \; \bork \leadsto
  \tilo \; \bork \; \eilo \; \bork \; \eilo \; \bork \leadsto
  \tilo \; \fork \; \tilo \; \fork \; \eilo \; \bork \leadsto
  \hilo \; \bork \; \tilo \; \fork \; \eilo \; \bork = D$, the assumption that
  $I$ exists is wrong. Consequently, $D$ cannot be excluded via inductive
  $2$-invariants.

  \section{\texorpdfstring{$\Ind{2}$}{IndInv₂} for cache coherence protocols Berkeley and Dragon}
  \label{sec:berkeley}
  For both protocols, we follow the specification of
  \cite{DBLP:journals/fmsd/Delzanno03}.

\def\invalid{i}
\def\unowned{u}
\def\exclusive{e}
\def\nonexclusive{s}
  \paragraph*{Berkeley}
  In the Berkeley cache coherence protocol, each cell is in one of four
  different states: invalid ($\invalid$), unowned ($\unowned$), exclusive
  ($\exclusive$), and shared ($\nonexclusive$).
  Initially, all cells are invalid. Consequently, the language of initial
  configurations is $\invalid^{*}$. For the transitions, we consider a few
  different events. The first one is that the memory is read, and the
  corresponding cell does provide some value of it; i.e., the cell is
  \emph{not} in the state $\invalid$. In this case, nothing changes:
\begin{equation*}
  \left(
    \vtuple{\invalid \\ \invalid} \;
    \middle| \; \vtuple{\unowned \\ \unowned} \;
    \middle| \; \vtuple{\exclusive \\ \exclusive} \;
    \middle| \; \vtuple{\nonexclusive \\ \nonexclusive}
  \right)^{*}
  \; \left(
    \vtuple{\unowned \\ \unowned} \;
    \middle| \; \vtuple{\exclusive \\ \exclusive} \;
    \middle| \; \vtuple{\nonexclusive \\ \nonexclusive}
  \right)
  \; \left(
    \vtuple{\invalid \\ \invalid} \;
    \middle| \; \vtuple{\unowned \\ \unowned} \;
    \middle| \; \vtuple{\exclusive \\ \exclusive} \;
    \middle| \; \vtuple{\nonexclusive \\ \nonexclusive}
  \right)^{*}.
\end{equation*}

If, on the other hand, a value is read from some cell that is in the state
$\invalid$, then this memory cell fetches the information without claiming
ownership; i.e., moves into the state $\unowned$. Every other memory cell
observes this process. Thus, cells that previously were in $\exclusive$ move to
$\nonexclusive$ to account for the fact that another memory cell holds the same
information.
\begin{equation*}
  \left(
    \vtuple{\invalid \\ \invalid} \;
    \middle| \; \vtuple{\unowned \\ \unowned} \;
    \middle| \; \vtuple{\exclusive \\ \nonexclusive} \;
    \middle| \; \vtuple{\nonexclusive \\ \nonexclusive}
  \right)^{*}
  \; \left(
    \vtuple{\invalid \\ \unowned} \;
  \right)
  \; \left(
    \vtuple{\invalid \\ \invalid} \;
    \middle| \; \vtuple{\unowned \\ \unowned} \;
    \middle| \; \vtuple{\exclusive \\ \nonexclusive} \;
    \middle| \; \vtuple{\nonexclusive \\ \nonexclusive}
  \right)^{*}.
\end{equation*}

If a value is written to a cell that was invalid before, then this cell claims
exclusive ownership; that is, all other cells are invalidated.
\begin{equation*}
  \left(
    \vtuple{\invalid \\ \invalid} \;
    \middle| \; \vtuple{\unowned \\ \invalid} \;
    \middle| \; \vtuple{\exclusive \\ \invalid} \;
    \middle| \; \vtuple{\nonexclusive \\ \invalid}
  \right)^{*}
  \; \left(
    \vtuple{\invalid \\ \exclusive} \;
  \right)
  \; \left(
    \vtuple{\invalid \\ \invalid} \;
    \middle| \; \vtuple{\unowned \\ \invalid} \;
    \middle| \; \vtuple{\exclusive \\ \invalid} \;
    \middle| \; \vtuple{\nonexclusive \\ \invalid}
  \right)^{*}.
\end{equation*}

If a cell already has exclusive ownership of this information, there is nothing
to be done. If the cell has only shared ownership of the value, all other cells
\emph{that claim shared ownership} are invalidated.
\begin{equation*}
  \left(
    \vtuple{\invalid \\ \invalid} \;
    \middle| \; \vtuple{\unowned \\ \invalid} \;
    \middle| \; \vtuple{\exclusive \\ \exclusive} \;
    \middle| \; \vtuple{\nonexclusive \\ \invalid}
  \right)^{*}
  \; \left(
    \vtuple{\unowned \\ \exclusive} \;
    \middle| \; \vtuple{\nonexclusive \\ \exclusive} \;
  \right)
  \; \left(
    \vtuple{\invalid \\ \invalid} \;
    \middle| \; \vtuple{\unowned \\ \invalid} \;
    \middle| \; \vtuple{\exclusive \\ \exclusive} \;
    \middle| \; \vtuple{\nonexclusive \\ \invalid}
  \right)^{*}.
\end{equation*}

Finally, the cache can decide to drop data at any moment in time. Thus, any
cell might move into the state $\invalid$.
\begin{equation*}
  \left(
    \vtuple{\invalid \\ \invalid} \;
    \middle| \; \vtuple{\unowned \\ \unowned} \;
    \middle| \; \vtuple{\exclusive \\ \exclusive} \;
    \middle| \; \vtuple{\nonexclusive \\ \nonexclusive}
  \right)^{*}
  \; \left(
    \vtuple{\unowned \\ \invalid} \;
    \middle| \; \vtuple{\exclusive \\ \invalid} \;
    \middle| \; \vtuple{\nonexclusive \\ \invalid}
  \right)
  \; \left(
    \vtuple{\invalid \\ \invalid} \;
    \middle| \; \vtuple{\unowned \\ \unowned} \;
    \middle| \; \vtuple{\exclusive \\ \exclusive} \;
    \middle| \; \vtuple{\nonexclusive \\ \nonexclusive}
  \right)^{*}.
\end{equation*}

We pose now the question whether a configuration can be reached, where two
different cells are claiming exclusive access to the same data. The corresponding set
$\mathcal{U}$ corresponds to $\Sigma^{*} \; \exclusive \; \Sigma^{*} \;
\exclusive \; \Sigma^{*}$. As shown in \autoref{fig:results}, $\Ind{1}$ does not prove this
property. Let us see why. Assume there is an inductive
$1$-invariant $I$ which invalidates the bad word $b = \exclusive \;
\exclusive$; that is, $b \not\models I$. Observe now that we can reach $b$ in
one step from $b' = \unowned \; \exclusive$ and $b'' = \exclusive \;
\unowned$. Consequently, $I$ cannot be satisfied by $b'$ or $b''$ either.
Otherwise, since $I$ is inductive, we already get $b \models I$. This means,
$I$ must not contain $\vari{\exclusive}{1}{2}$, $\vari{\exclusive}{2}{2}$,
$\vari{\unowned}{1}{2}$ nor $\vari{\unowned}{2}{2}$. This, however, makes $I$
unsatisfiable for the actually reachable configuration $\unowned \; \unowned$.

Using an adapted version of the semi-automatic approach of
\cite{DBLP:conf/apn/EsparzaRW21} and some additional reasoning led us to the
following language of inductive $2$-invariants which exclude all configurations
from $\mathcal{U}$:
\begin{equation*}
  \vtuple{\emptyset \\ \emptyset}^{*} \;
  \vtuple{\set{\invalid, \nonexclusive, \unowned} \\ \set{\invalid}} \;
  \vtuple{\emptyset \\ \emptyset}^{*} \;
  \vtuple{\set{\invalid} \\ \set{\invalid, \nonexclusive, \unowned}} \;
  \vtuple{\emptyset \\ \emptyset}^{*}.
\end{equation*}
Since $\Ind{2}$ is the strongest inductive $2$-invariant, 
and the provided $2$-invariant excludes all the configurations from $\mathcal{U}$,
$\Ind{2}$ also excludes all the configurations from $\mathcal{U}$.
Thus, $\Ind{2}$ is strong enough to prove the property.

\paragraph*{Dragon}
\def\dirty{\hat{e}}
\def\dirtyshared{\hat{s}}
The Dragon protocol distinguishes five states. As before, we have states for
invalid cells ($\invalid$), cells that maintain an exclusive copy of the data
($\exclusive$), and cells that have a (potentially) shared copy of the data
($\nonexclusive$). In contrast to before, the Dragon protocol does not
invalidate other copies of some data when it is updated. 
Instead
two new states which mirror $\exclusive$ and $\nonexclusive$ are introduced but, additionally,
indicate that the data might have changed. We refer to these states as $\dirty$
and $\dirtyshared$, respectively. Regardless, we initialize all cells as
invalid; i.e., we have the initial language $\invalid^{*}$.

Assume a read from a ``valid'' cell; that is, some cell that is not in the
state $\invalid$. In that case, nothing changes:
\begin{equation*}
  \left(
    \vtuple{\invalid \\ \invalid}
    \middle| \vtuple{\exclusive \\ \exclusive}
    \middle| \vtuple{\nonexclusive \\ \nonexclusive}
    \middle| \vtuple{\dirty \\ \dirty}
    \middle| \vtuple{\dirtyshared \\ \dirtyshared}
  \right)^{*}
  \;
  \left(
    \vtuple{\exclusive \\ \exclusive}
    \middle| \vtuple{\nonexclusive \\ \nonexclusive}
    \middle| \vtuple{\dirty \\ \dirty}
    \middle| \vtuple{\dirtyshared \\ \dirtyshared}
  \right)
  \;
  \left(
    \vtuple{\invalid \\ \invalid}
    \middle| \vtuple{\exclusive \\ \exclusive}
    \middle| \vtuple{\nonexclusive \\ \nonexclusive}
    \middle| \vtuple{\dirty \\ \dirty}
    \middle| \vtuple{\dirtyshared \\ \dirtyshared}
  \right)^{*}.
\end{equation*}

If a read occurs from an invalid cell -- while all cells are invalid -- the
accessed cell becomes an exclusive reference:
\begin{equation*}
  \vtuple{\invalid \\ \invalid}^{*}
  \;
  \vtuple{\invalid \\ \exclusive}
  \;
  \vtuple{\invalid \\ \invalid}^{*}.
\end{equation*}

If not all cells are invalid but a read occurs for an invalid cell, then this
cell obtains a copy of the data, having now a shared reference to the data. 
Moreover, all exclusive references;
i.e., cells in the states $\exclusive$ or $\dirty$, move to their shared
counterparts ($\nonexclusive$ and $\dirtyshared$, respectively).
\begin{equation*}
  \left(
    \vtuple{\invalid \\ \invalid}
    \middle| \vtuple{\exclusive \\ \nonexclusive}
    \middle| \vtuple{\nonexclusive \\ \nonexclusive}
    \middle| \vtuple{\dirty \\ \dirtyshared}
    \middle| \vtuple{\dirtyshared \\ \dirtyshared}
  \right)^{*}
  \;
  \vtuple{\invalid \\ \nonexclusive}
  \;
  \left(
    \vtuple{\invalid \\ \invalid}
    \middle| \vtuple{\exclusive \\ \nonexclusive}
    \middle| \vtuple{\nonexclusive \\ \nonexclusive}
    \middle| \vtuple{\dirty \\ \dirtyshared}
    \middle| \vtuple{\dirtyshared \\ \dirtyshared}
  \right)^{*}.
\end{equation*}
It is possible here that, although all cells are invalid, the changing cell becomes only shared.
Since this configuration can also be reached from the configuration where there are exactly two cells in state $\nonexclusive$ and all others in $\invalid$ by one cell moving from its shared state to its invalid state, this does not change the set of actually reachable configurations and, consequently, is immaterial for the correctness analysis.

Writing a cell in the state $\dirty$ does not change anything. On the other hand,
writing a cell in the state $\exclusive$ moves that cell into the state $\dirty$:
\begin{equation*}
  \left(
    \vtuple{\invalid \\ \invalid}
    \middle| \vtuple{\exclusive \\ \exclusive}
    \middle| \vtuple{\nonexclusive \\ \nonexclusive}
    \middle| \vtuple{\dirty \\ \dirty}
    \middle| \vtuple{\dirtyshared \\ \dirtyshared}
  \right)^{*}
  \;
  \vtuple{\dirty \\ \dirty}
  \;
  \left(
    \vtuple{\invalid \\ \invalid}
    \middle| \vtuple{\exclusive \\ \exclusive}
    \middle| \vtuple{\nonexclusive \\ \nonexclusive}
    \middle| \vtuple{\dirty \\ \dirty}
    \middle| \vtuple{\dirtyshared \\ \dirtyshared}
  \right)^{*}
\end{equation*}
and
\begin{equation*}
  \left(
    \vtuple{\invalid \\ \invalid}
    \middle| \vtuple{\exclusive \\ \exclusive}
    \middle| \vtuple{\nonexclusive \\ \nonexclusive}
    \middle| \vtuple{\dirty \\ \dirty}
    \middle| \vtuple{\dirtyshared \\ \dirtyshared}
  \right)^{*}
  \;
  \vtuple{\exclusive \\ \dirty}
  \;
  \left(
    \vtuple{\invalid \\ \invalid}
    \middle| \vtuple{\exclusive \\ \exclusive}
    \middle| \vtuple{\nonexclusive \\ \nonexclusive}
    \middle| \vtuple{\dirty \\ \dirty}
    \middle| \vtuple{\dirtyshared \\ \dirtyshared}
  \right)^{*}.
\end{equation*}

A write operation on a cell that is the only one in state $\nonexclusive$ or
$\dirtyshared$ results in a change to $\dirty$. If there are other cells in
either state, one moves to $\dirtyshared$ while all others move to
$\nonexclusive$.
\begin{equation*}
  \left(
    \vtuple{\invalid \\ \invalid}
    \middle| \vtuple{\exclusive \\ \exclusive}
    \middle| \vtuple{\dirty \\ \dirty}
  \right)^{*}
  \;
  \vtuple{\dirtyshared \\ \dirty}
  \;
  \left(
    \vtuple{\invalid \\ \invalid}
    \middle| \vtuple{\exclusive \\ \exclusive}
    \middle| \vtuple{\dirty \\ \dirty}
  \right)^{*},
\end{equation*}
\begin{equation*}
  \left(
    \vtuple{\invalid \\ \invalid}
    \middle| \vtuple{\exclusive \\ \exclusive}
    \middle| \vtuple{\dirty \\ \dirty}
  \right)^{*}
  \;
  \vtuple{\nonexclusive \\ \dirty}
  \;
  \left(
    \vtuple{\invalid \\ \invalid}
    \middle| \vtuple{\exclusive \\ \exclusive}
    \middle| \vtuple{\dirty \\ \dirty}
  \right)^{*}
\end{equation*}
and
{\scriptsize
\begin{equation*}
  \begin{aligned}
      &\left(
      \vtuple{\invalid \\ \invalid}
      \middle| \vtuple{\exclusive \\ \exclusive}
      \middle| \vtuple{\dirty \\ \dirty}
      \middle| \vtuple{\nonexclusive \\ \nonexclusive}
      \middle| \vtuple{\dirtyshared \\ \nonexclusive}
    \right)^{*}
    \;
    \left(
      \vtuple{\nonexclusive \\ \nonexclusive}
      \middle| \vtuple{\dirtyshared \\ \nonexclusive}
    \right)
    \;
    \left(
      \vtuple{\invalid \\ \invalid}
      \middle| \vtuple{\exclusive \\ \exclusive}
      \middle| \vtuple{\dirty \\ \dirty}
      \middle| \vtuple{\nonexclusive \\ \nonexclusive}
      \middle| \vtuple{\dirtyshared \\ \nonexclusive}
    \right)^{*}
    \;
    \left(
      \vtuple{\nonexclusive \\ \dirtyshared}
      \middle| \vtuple{\dirtyshared \\ \dirtyshared}
    \right)
    \;
    \left(
      \vtuple{\invalid \\ \invalid}
      \middle| \vtuple{\exclusive \\ \exclusive}
      \middle| \vtuple{\dirty \\ \dirty}
      \middle| \vtuple{\nonexclusive \\ \nonexclusive}
      \middle| \vtuple{\dirtyshared \\ \nonexclusive}
    \right)^{*} \\
    | &\left(
      \vtuple{\invalid \\ \invalid}
      \middle| \vtuple{\exclusive \\ \exclusive}
      \middle| \vtuple{\dirty \\ \dirty}
      \middle| \vtuple{\nonexclusive \\ \nonexclusive}
      \middle| \vtuple{\dirtyshared \\ \nonexclusive}
    \right)^{*}
    \;
    \left(
      \vtuple{\nonexclusive \\ \dirtyshared}
      \middle| \vtuple{\dirtyshared \\ \dirtyshared}
    \right)
    \;
    \left(
      \vtuple{\invalid \\ \invalid}
      \middle| \vtuple{\exclusive \\ \exclusive}
      \middle| \vtuple{\dirty \\ \dirty}
      \middle| \vtuple{\nonexclusive \\ \nonexclusive}
      \middle| \vtuple{\dirtyshared \\ \nonexclusive}
    \right)^{*}
    \;
    \left(
      \vtuple{\nonexclusive \\ \nonexclusive}
      \middle| \vtuple{\dirtyshared \\ \nonexclusive}
    \right)
    \;
    \left(
      \vtuple{\invalid \\ \invalid}
      \middle| \vtuple{\exclusive \\ \exclusive}
      \middle| \vtuple{\dirty \\ \dirty}
      \middle| \vtuple{\nonexclusive \\ \nonexclusive}
      \middle| \vtuple{\dirtyshared \\ \nonexclusive}
    \right)^{*}. \\
  \end{aligned}
\end{equation*}
}

If a value is written to a previously invalid cell, then either this cell moves
to $\dirty$ (assuming all other cells are $\invalid$ as well), while an
occurrence of another cell with this value causes the written cell to become
$\dirtyshared$ and all other cells to move to the state $\nonexclusive$.
\begin{equation*}
  \vtuple{\invalid \\ \invalid}^{*}
  \;
  \vtuple{\invalid \\ \dirty}
  \;
  \vtuple{\invalid \\ \invalid}^{*}
\end{equation*}
and
{\scriptsize
\begin{equation*}
  \begin{aligned}
    &\left(
      \vtuple{\invalid \\ \invalid}
      \middle| \vtuple{\exclusive \\ \nonexclusive}
      \middle| \vtuple{\dirty \\ \nonexclusive}
      \middle| \vtuple{\nonexclusive \\ \nonexclusive}
      \middle| \vtuple{\dirtyshared \\ \nonexclusive}
    \right)^{*}
    \;
    \vtuple{\invalid \\ \dirtyshared}
    \;
    \left(
      \vtuple{\invalid \\ \invalid}
      \middle| \vtuple{\exclusive \\ \nonexclusive}
      \middle| \vtuple{\dirty \\ \nonexclusive}
      \middle| \vtuple{\nonexclusive \\ \nonexclusive}
      \middle| \vtuple{\dirtyshared \\ \nonexclusive}
    \right)^{*}
    \;
    \left(
      \vtuple{\dirty \\ \nonexclusive}
      \middle| \vtuple{\nonexclusive \\ \nonexclusive}
      \middle| \vtuple{\dirtyshared \\ \nonexclusive}
      \middle| \vtuple{\exclusive \\ \nonexclusive}
    \right)
    \;
    \left(
      \vtuple{\invalid \\ \invalid}
      \middle| \vtuple{\exclusive \\ \nonexclusive}
      \middle| \vtuple{\dirty \\ \nonexclusive}
      \middle| \vtuple{\nonexclusive \\ \nonexclusive}
      \middle| \vtuple{\dirtyshared \\ \nonexclusive}
    \right)^{*} \\
    | &\left(
      \vtuple{\invalid \\ \invalid}
      \middle| \vtuple{\exclusive \\ \nonexclusive}
      \middle| \vtuple{\dirty \\ \nonexclusive}
      \middle| \vtuple{\nonexclusive \\ \nonexclusive}
      \middle| \vtuple{\dirtyshared \\ \nonexclusive}
    \right)^{*}
    \;
    \left(
      \vtuple{\dirty \\ \nonexclusive}
      \middle| \vtuple{\nonexclusive \\ \nonexclusive}
      \middle| \vtuple{\dirtyshared \\ \nonexclusive}
      \middle| \vtuple{\exclusive \\ \nonexclusive}
    \right)
    \;
    \left(
      \vtuple{\invalid \\ \invalid}
      \middle| \vtuple{\exclusive \\ \nonexclusive}
      \middle| \vtuple{\dirty \\ \nonexclusive}
      \middle| \vtuple{\nonexclusive \\ \nonexclusive}
      \middle| \vtuple{\dirtyshared \\ \nonexclusive}
    \right)^{*}
    \;
    \vtuple{\invalid \\ \dirtyshared}
    \;
    \left(
      \vtuple{\invalid \\ \invalid}
      \middle| \vtuple{\exclusive \\ \nonexclusive}
      \middle| \vtuple{\dirty \\ \nonexclusive}
      \middle| \vtuple{\nonexclusive \\ \nonexclusive}
      \middle| \vtuple{\dirtyshared \\ \nonexclusive}
    \right)^{*}. \\
  \end{aligned}
\end{equation*}
}

Finally, any cell might drop its content at any point.
\begin{equation*}
  \left(
    \vtuple{\invalid \\ \invalid}
    \middle| \vtuple{\exclusive \\ \exclusive}
    \middle| \vtuple{\nonexclusive \\ \nonexclusive}
    \middle| \vtuple{\dirty \\ \dirty}
    \middle| \vtuple{\dirtyshared \\ \dirtyshared}
  \right)^{*}
  \;
  \left(
    \vtuple{\exclusive \\ \invalid}
    \middle| \vtuple{\dirty \\ \invalid}
    \middle| \vtuple{\dirtyshared \\ \invalid}
    \middle| \vtuple{\nonexclusive \\ \invalid}
  \right)
  \;
  \left(
    \vtuple{\invalid \\ \invalid}
    \middle| \vtuple{\exclusive \\ \exclusive}
    \middle| \vtuple{\nonexclusive \\ \nonexclusive}
    \middle| \vtuple{\dirty \\ \dirty}
    \middle| \vtuple{\dirtyshared \\ \dirtyshared}
  \right)^{*}.
\end{equation*}

We are interested now to establish that the language $\Sigma^{*} \; \dirty \;
\Sigma^{*} \; \dirty \; \Sigma^{*}$ cannot be reached. The proof that $\Ind{1}$
is insufficient to exclude all configurations of $\Sigma^{*} \; \dirty \;
\Sigma^{*} \; \dirty \; \Sigma^{*}$ is straightforward:
Observe that both $\nonexclusive \; \dirty$ and $\dirty \; \nonexclusive$ can
reach $\dirty \; \dirty$ in one step. In consequence, analogously to the
argument used for the Berkeley protocol, any inductive $1$-invariant cannot
distinguish between the reachable $\nonexclusive \; \nonexclusive$ and the
unreachable $\dirty \; \dirty$.

On the other hand, the language
\begin{equation*}
  \vtuple{\emptyset \\ \emptyset}^{*}
  \;
  \vtuple{\set{\dirtyshared, \invalid, \nonexclusive} \\ \set{\invalid}}
  \;
  \vtuple{\emptyset \\ \emptyset}^{*}
  \;
  \vtuple{\set{\invalid} \\ \set{\dirtyshared, \invalid, \nonexclusive}}
  \;
  \vtuple{\emptyset \\ \emptyset}^{*}
\end{equation*}
of inductive $2$-invariants (which arose, again, from an adapted version of the semi-automatic approach of
\cite{DBLP:conf/apn/EsparzaRW21}) induces an abstraction disjoint from $\Sigma^{*} \;
\dirty \; \Sigma^{*} \; \dirty \; \Sigma^{*}$. Consequently, $\Ind{2}$ does as
well.

\end{document}